\DeclareMathAlphabet{\mathbcal}{OMS}{cmsy}{b}{n}
\DeclareMathAlphabet{\mathcal}{OMS}{cmsy}{b}{n}
\DeclareMathAlphabet{\mathfrak}{OMS}{cmsy}{b}{n}
\newtheorem{lem}{Lemma}
\newtheorem{rem}{Remark}
\newtheorem{theo}{Theorem}
\newtheorem{mydef}{Definition}
\newtheorem{cor}{Corollary}
\newtheorem{pro}{Proposition}
\newtheorem{ex}{Example}
\newcommand{\algmargin}{\the\ALG@thistlm}
\newlength{\forwidth}
\algnewcommand{\parState}[1]{\State
  \parbox[t]{\dimexpr\linewidth-\algmargin}{\strut #1\strut}}
\newlength{\ifwidth}
\begin{document}
\title{Error Floor Analysis of LDPC Row Layered Decoders}

\author{Ali~Farsiabi and~Amir~H.~Banihashemi,~\IEEEmembership{Senior Member,~IEEE}%
        }

\maketitle

\begin{abstract}
In this paper, we analyze the error floor of quasi-cyclic (QC) low-density parity-check (LDPC) codes decoded by the sum-product algorithm (SPA) 
with row layered message-passing scheduling. For this, we develop a linear state-space model of trapping sets (TSs) which incorporates the layered nature of scheduling. We demonstrate that the contribution of each TS to the error floor is not only a function of the topology of the TS, but also depends on the row layers in which different check nodes of the TS are located. This information, referred to as TS layer profile (TSLP), plays an important role in the harmfulness of a TS. As a result, the harmfulness of a TS in particular, and the error floor of the code in general, 
can significantly change by changing the order in which the information of different layers, corresponding to different row blocks of the parity-check matrix, is updated.
We also study the problem of finding a layer ordering that minimizes the error floor, and obtain 
row layered decoders with error floor significantly lower than that of their flooding counterparts. 
As part of our analysis, we make connections between the parameters of the state-space model for a row layered schedule and those of the flooding schedule.  
Simulation results are presented to show the accuracy of analytical error floor estimates.  
\end{abstract}

\begin{IEEEkeywords}
LDPC codes, QC-LDPC codes, layered decoding, row layered decoding, message-passing schedule, layered schedule, row layered schedule, horizontal layered schedule, error floor, low error floor, error floor analysis, linear state-space model, trapping sets (TS),
elementary TSs (ETS), leafless elementary TSs (LETS). 
\end{IEEEkeywords}

%
\IEEEpeerreviewmaketitle
\section{Introduction}
\IEEEPARstart {F}{inite}-length low-density parity-check (LDPC) codes under iterative message passing algorithms suffer from error floor,
i.e., as the channel quality improves, at some point, the error rate does not decrease as fast as its initial rate of decrease.
The error floor problem of LDPC codes has been the topic of extensive research in recent years. In~\cite{danesh, Ryan2,Ryan1, Kyung, Zhang1,Lee-2018,homayoon2020,TB,zhang_quasiuniform,TSbreaking}, decoders with improved error floor were devised. 
For quantized message-passing decoders, it is well-known that clipping messages, in general, causes an error floor~\cite{Zhao}. The error floor generally worsens as
the dynamic range of messages is reduced~\cite{zhang_quasiuniform}.  
Different techniques to improve the error floor by adjusting the dynamic range of (some of) the messages were presented in~\cite{zhang_quasiuniform,TB,Lee-2018}.
LDPC codes with low error floor were constructed in \cite{Ivkovic, Asvadi, Khaz, Nguyen, mao2001heuristic,xiao2004improved,Sima-CL1,Sima-CL2,Bashir-TCOM,Tian-2004,Peg,zheng2010constructing,Tao-2018,Bashir-irreg,Sima-TCOM}. Characterization and enumeration of structures responsible for error floor were studied in \cite{yoones2015, hashemireg, hashemiireg, mehdi2014, mehdi2012,Wang}, and techniques to estimate the error floor were developed in~\cite{richardson,Cole, LaraIS, daneshrad,Lara_SP,Xiao,Sun_phd, Schleg, But_SS, Homayoon_SP,Sina,Ontology,Hu_magneticIS,XB-2007,Ali-TCOM,Raveendran,Zhu,Peyman}.

There are two general categories of techniques to estimate the error floor of LDPC codes. The first category is based on importance sampling techniques, and requires the full knowledge of the parity-check matrix or Tanner graph of the code to estimate the error floor~\cite{richardson,Cole, LaraIS, daneshrad,Lara_SP,Xiao,Hu_magneticIS,XB-2007}. The second category, on the other hand, is code-independent, in that, rather than the full knowledge of the code's Tanner graph, these techniques only require the multiplicity and topology of harmful substructures of the Tanner graph, referred to as {\em trapping sets} (TSs), and possibly the degree distributions of the graph, to estimate the error floor~\cite{Sun_phd, Schleg, But_SS,AS_threshold,Homayoon_SP,Ontology,Ali-TCOM}. In particular, in~\cite{Sun_phd, Schleg, But_SS}, to evaluate the performance of an LDPC code over 
the additive white Gaussian noise (AWGN) channel, a linear state-space model is used to represent the 
dynamics of the sum-product algorithm (SPA) in the vicinity of a TS of interest with the inputs to the model generated using density evolution (DE)~\cite{Urbank}. 
Based on this model, the failure probability of the TS, which is only a function of the topology of the TS and the degree distribution of the Tanner graph, is calculated.  
The error floor is then estimated as a weighted sum of these failure probabilities over dominant TSs of the code, with the weights being the multiplicities of 
different TS structures. 

Among TSs, the most harmful ones are those with only degree-$1$ and degree-$2$ check nodes in their induced subgraphs~\cite{Milen,Ryan1}.
Such TSs are called {\em elementary} (ETS). The  degree-$1$ and degree-$2$ check nodes are referred to as {\em unsatisfied} and {\em missatisfied}
check nodes, respectively. In \cite{Sun_phd}, Sun proposed a linear state-space model to analyze the dynamics of ETSs in 
the error floor region over the AWGN channel. This model was based on the assumption that the decoder behavior 
outside an ETS can be well approximated by DE. Schlegel and Zhang~\cite{Schleg} proposed an improved 
linear state-space model in which an iteration-dependent linear gain was added to the model to account for the impact 
of external messages of missatisfied check nodes on the internal messages of an ETS. More recently, Butler and Siegel \cite{But_SS} 
refined and extended the linear state-space model and used it to analyze the effect of log-likelihood ratio (LLR) saturation on the 
error floor performance of LDPC codes with fixed variable-node degrees decoded by floating-point SPA.

To the best of our knowledge, all the existing work on the theoretical analysis of error floor, including~\cite{Sun_phd, Schleg, But_SS}, is limited to 
two-phase message passing algorithms, also known as {\em flooding} or {\em parallel schedule}. In flooding schedule, each decoding iteration is divided into two parts. 
In the first (second) part of an iteration, all the variable (check) nodes compute their messages and pass them to their adjacent  check (variable) nodes simultaneously. 
There are however a variety of message passing schedules which are advantagous to flooding in terms of performance, complexity or convergence speed~\cite{L1,L2,L3,L4,L5,L6,L7,L8,L9,XiaoSchedule,NouhSchedule,InfromedDynamic_wesel_2010,M2I2}.
An important category of schedules are {\em layered or serial schedules}~\cite{L1,L2,L5,L6,L8,L7}. In such schedules, each iteration of message-passing consists of multiple 
sub-iterations performed serially. This allows a more frequent updating of the reliabilities in each iteration compared to the flooding schedule, which consequently results in a higher convergence speed. Moreover, due to the reuse of the same hardware for the implementation of different sub-iterations, the hardware resources required for the implementation of layered decoders are substantially lower than those of their flooding counterparts. Due to these attractive features, layered decoders are often used in practical applications along with quasi-cyclic (QC) LDPC codes. In such a combination, the row or column blocks of the parity-check matrix of the QC-LDPC code correspond to different layers. The layered decoder is then referred to as {\em row (horizontal) layered decoder}~\cite{L1,L2,L3,L4,L5,L6} or {\em column (vertical) layered decoder}~ \cite{L7}, respectively. In \cite{L8}, it has been shown that the convergence speed of both types of layered decoder can be twice that of  a decoder with flooding schedule.

The vast majority of research on layered decoders is devoted to the issues concerning convergence speed and efficient implementations.
In particular, the study of the error floor of such decoders has been mainly limited to empirical results~\cite{InfromedDynamic_wesel_2010,M2I2,Angarita_MS_2014,BackTrack_hard_2011,
IDS_kim_2012let,vasic_horizontal}. In~\cite{InfromedDynamic_wesel_2010} and \cite{M2I2}, the authors proposed a dynamic scheduling and 
a schedule diversity, respectively, that reduced the error floor. More recently, Raveendran and Vasic~\cite{vasic_horizontal} studied 
flooding and row layered Gallager-B algorithms applied to the $(155, 64)$ Tanner code over the binary symmetric channel, and demonstrated that while 
the former decoder gets trapped in $(5,3)$ ETSs, the latter does not, and as a result the layered decoder has a superior error floor performance compared to the flooding decoder.

Motivated by the wide spread application of layered decoders and the fact that the behavior of such decoders in the error floor region is still not well understood, 
in this paper, we aim at the theoretical analysis of the error floor of row layered decoders. In fact, to the best of our knowledge, this is the first work in 
which the dynamics of a soft layered decoder in the error floor region is theoretically analyzed. We start by developing a linear state-space model for ETSs that incorporates the 
layered nature of the decoding algorithm. We then use this model to study the dynamics of a saturating SPA in the vicinity of the TS over the AWGN channel. 
Our analysis shows that the harmfulness of a given TS, as well as the error floor of the code, can significantly change by changing the order in which the information of different layers, corresponding to different row blocks of the parity-check matrix, are updated.\footnote{Our experiments show that the waterfall performance of the codes is rather insensitive to the updating order of layers.} We then study the problem of finding the layer ordering that minimizes the error floor. As a result, we find orderings that result in error floors substantially lower than those of the decoder with flooding schedule. Connections are also made between the model parameters of  the layered decoder and those of the flooding one. 
Finally, we compare our theoretical estimates of the error floor with Monte Carlo simulations for QC-LDPC codes, 
both variable-regular and irregular, and demonstrate a good match between the two.

The rest of the paper is organized as follows: In Section~\ref{sec1}, we present 	some preliminaries. This is followed by a review of  the linear state-space model of an ETS for a decoder with the flooding schedule in Section~\ref{sec2}. We then develop the linear state-space model of an ETS for a row layered decoder in Section~\ref{sec3}. In the same section, we also establish connections between the model parameters of the row layered and flooding schedules. In Section~\ref{Analysis_Opt_Section}, we analyze the effect of different row block permutations on some important parameters affecting the error floor performance of  the layered decoder,
and discuss the optimization of the layer ordering to minimize the error floor. In Section~\ref{sec4}, we present the simulation results to evaluate the accuracy of the theoretical results in estimating the error floor. We end the paper with some concluding remarks in Section~\ref{con}.

\section{Preliminaries} \label{sec1}
\subsection{Notations}
In this paper, matrices and vectors are denoted by boldfaced upper case and lower case letters, respectively. The only exceptions are the LLR vectors 
where the symbol $\L$ with different subscripts or superscripts is utilized for representation. All the vectors are assumed to be column vectors. 
A list of the notations and symbols used in this paper is provided in Tables \ref{notationTableLayered} and \ref{notationTableLayeredDec}.
\begin{table}[]
\centering
\scalebox{0.75}{%
\begin{threeparttable}
\caption{Notations and Symbols (Part A)}
\setlength{\tabcolsep}{0.8pt}
\renewcommand{\arraystretch}{1.2}
\label{notationTableLayered}
\begin{tabular}{||c|cccc|| }
\hline
\multicolumn{1}{||c|}{ Notations}& \multicolumn{1}{c||}{ Descriptions}\\
\hline
\hline
$\mathbf{H}$ & \multicolumn{1}{c||}{ $m\times n$ parity check matrix}\\
\hline
$\mathbf{H_b}$ & \multicolumn{1}{c||}{ $m_b \times n_b$ base matrix}\\
\hline
$\mathbf{d}$ &  \multicolumn{1}{c||}{A codeword}\\
\hline
$\hat{\mathbf{d}}$ & \multicolumn{1}{c||}{ Estimated codeword}\\
\hline
$I_{max}$& \multicolumn{1}{c||}{Maximum number of iterations}\\
\hline
$L^{ch}_i$&  \multicolumn{1}{c||}{ Channel LLR}\\
\hline
$L_{\ell}^{[i \leftarrow j]}$&  \multicolumn{1}{c||}{ $j$th CN to $i$th VN message at iteration $\ell$}\\
\hline
$L_{\ell}^{[i \rightarrow j]}$ &  \multicolumn{1}{c||}{ $i$th VN to $j$th CN message at iteration $\ell$}\\
\hline
${\tilde{L}_{\ell}^{[i]}}$ &  \multicolumn{1}{c||}{ Total LLR of $i$th VN at iteration $\ell$}\\
\hline
$m_s$ & \multicolumn{1}{c||}{ Number of state variables}\\
\hline
$\mathcal{S}$ &  \multicolumn{1}{c||}{ Set of VNs of a TS, $|\mathcal{S}|=a$ }\\
\hline
$\Gamma{(\mathcal{S})}$&  \multicolumn{1}{c||}{ Set of CNs of a TS}\\
\hline
$\Gamma_o{(\mathcal{S})}$&  \multicolumn{1}{c||}{ Set of odd-degree CNs of a TS, $|\Gamma_o{(\mathcal{S})}|=b$}\\
\hline
$\Gamma_e{(\mathcal{S})}$&  \multicolumn{1}{c||}{ Set of even-degree CNs of an ETS, $|\Gamma_e{(\mathcal{S})}|=\frac{m_s}{2}$}\\
\hline
\multirow{ 2}{*}{$\mathbf{x}^{(\ell)}$}
& \multicolumn{1}{c||}{ $m_s\times 1$ state vector of a LETS in flooding decoder }\\
&  \multicolumn{1}{c||}{at the $\ell$th iteration} \\
\hline
\multirow{ 2}{*}{$\bar{g}'_\ell$}
& \multicolumn{1}{c||}{ Missatisfied CN multiplicative gains in flooding decoder }\\
&  \multicolumn{1}{c||}{at the $\ell$th iteration}\\
\hline
$\mathbf{A}$&  \multicolumn{1}{c||}{ $m_s\times m_s$ transition matrix of a LETS in flooding decoder}\\
\hline
\multirow{ 2}{*}{$\mathbf{B}$}
& \multicolumn{1}{c||}{ $m_s\times a$ matrix determining the channel input contributions }\\
&  \multicolumn{1}{c||}{to the  state variables in flooding decoder}\\
\hline
\multirow{ 2}{*}{$\mathbf{B}_{ex}$}
& \multicolumn{1}{c||}{ $m_s\times b$ matrix determining the contribution of unsatisfied   }\\
&  \multicolumn{1}{c||}{CN inputs to state variables in flooding decoder} \\
\hline
\multirow{ 2}{*}{$\mathbf{C}$}
& \multicolumn{1}{c||}{ $a\times m_s$ matrix determining the relation of state variables   }\\
&  \multicolumn{1}{c||}{with the total LLR vector}\\
\hline
\multirow{ 2}{*}{$\mathbf{D}_{ex}$}
& \multicolumn{1}{c||}{ $a\times b$ matrix determining the contribution of unsatisfied   }\\
&  \multicolumn{1}{c||}{CN inputs to the total LLR vector}\\
\hline
\multirow{ 1}{*}{$\L$}
& \multicolumn{1}{c||}{  $a\times 1$ channel input vector in the LETS linear model  }\\
\hline
\multirow{ 2}{*}{$\L^{(\ell)}_{ex}$}
& \multicolumn{1}{c||}{ $b\times 1$ unsatisfied CN input vector in the LETS  }\\
&   \multicolumn{1}{c||}{linear model at the $\ell$th iteration}\\
\hline
\multirow{ 1}{*}{$\tilde{\L}^{(\ell)}$}
& \multicolumn{1}{c||}{ $a\times 1$ total LLR vector in the LETS linear model  }\\
\hline
$\mathbf{P}$ & \multicolumn{1}{c||}{A permutation matrix}\\
\hline
$\rho(\mathbf{M})$ & \multicolumn{1}{c||}{Spectral radius of a matrix $\mathbf{M}$}\\
\hline
$r$ & \multicolumn{1}{c||}{Dominant eigenvalue of the flooding transition matrix $\mathbf{A}$}\\
\hline
$\mathbf{w}_1^T,\mathbf{u}_1$ & \multicolumn{1}{c||}{Left and right eigenvectors corresponding to $r$}\\
\hline
$\mathbb{C}$& \multicolumn{1}{c||}{Set of complex numbers}\\
\hline
$\beta'_\ell$ &\multicolumn{1}{c||}{ Error indicator function of flooding decoder}\\
\hline
$P_e\{\mathcal{S}\}$ & \multicolumn{1}{c||}{Probability of failure of TS $\mathcal{S}$}\\
\hline
$L_i$& \multicolumn{1}{c||}{$i$th row layer of a QC-LDPC code}\\
\hline
$J$ & \multicolumn{1}{c||}{Number of missatisfied CN layers in a TS}\\
\hline
$(.)_k$ & \multicolumn{1}{c||}{$k$th element of a vector} \\
\hline
$(m_{ch},\sigma^2_{ch})$ & \multicolumn{1}{c||}{Mean and variance of the channel noise}\\
\hline
\multirow{ 2}{*}{$(m_{ex}^{(\ell)},\sigma^{2^{(\ell)}}_{ex})$}
 & \multicolumn{1}{c||}{Mean and variance of the unsatisfied CN inputs at} \\
 & \multicolumn{1}{c||}{iteration $\ell$ of the flooding decoder}\\
 \hline
 \end{tabular}
\end{threeparttable}
}
\end{table}
\begin{table}[]
\centering
\scalebox{0.75}{%
\begin{threeparttable}
\caption{Notations and Symbols (Part B)}
\setlength{\tabcolsep}{0.8pt}
\renewcommand{\arraystretch}{1.2}
\label{notationTableLayeredDec}
\begin{tabular}{||c|cccc|| }
\hline
\multicolumn{1}{||c|}{ Notations}& \multicolumn{1}{c||}{ Descriptions}\\
\hline
\hline
 $n_{L_j}$& \multicolumn{1}{c||}{Number of state variables of a LETS within layer $j$}\\
 \hline
$\mathcal{A}_j$&  \multicolumn{1}{c||}{$m_s\times m_s$ transition matrix of layer $j$}\\
\hline
\multirow{ 2}{*}{$\mathcal{B}_j$}
& \multicolumn{1}{c||}{ $m_s \times a$ matrix indicating the contribution of the channel  }\\
&  \multicolumn{1}{c||}{LLRs in the calculation of the state variables of the $j$th layer}\\
\hline
\multirow{ 2}{*}{$\overset{\triangleleft}{\mathfrak{B}}_{ex,j},\overset{\triangleright}{\mathfrak{B}}_{ex,j}$}&  \multicolumn{1}{c||}{$m_s\times b$ matrices indicating the contribution of $\L_{ex}^{(\ell-1)}$} \\
& \multicolumn{1}{c||}{and $\L_{ex}^{(\ell)}$ in updating the state variables within the $j$th layer}\\
\hline
\multirow{ 2}{*}{$\mathbf{G}^{(\ell)}$}& \multicolumn{1}{c||}{$m_s\times m_s$ matrix whose diagonal entries are the} \\
& \multicolumn{1}{c||}{gains corresponding to the $m_s$ state variables}\\
\hline
\multirow{ 2}{*}{$\mathfrak{G}_j^{(\ell)}$}&\multicolumn{1}{c||}{$m_s\times m_s$ gain matrix corresponding to}\\
& \multicolumn{1}{c||}{the $j$th layer of iteration $\ell$}\\
\hline
$\tilde{\mathbf{x}}^{(\ell,j)}$& \multicolumn{1}{c||}{Layered decoder state vector of layer $j$ at iteration $\ell$} \\
\hline
\multirow{ 2}{*}{$\tilde{\mathbf{B}}^{(\ell)}$}& \multicolumn{1}{c||}{ $m_s \times a$ matrix indicating the contribution of the channel  LLRs in }\\
&  \multicolumn{1}{c||}{calculation of the state variables at the $\ell$th iteration of the layered decoder}\\
\hline
\multirow{ 2}{*}{$\overset{\triangleleft}{\mathbf{ B}}_{ex}^{(\ell)}$ , $\overset{\triangleright}{\mathbf{ B}}_{ex}^{(\ell)}$}& \multicolumn{1}{c||}{$m_s\times b$ matrices illustrating the contribution of $\L_{ex}^{(\ell-1)}$ and $\L_{ex}^{(\ell)}$} \\
& \multicolumn{1}{c||}{in updating the state variables at the $\ell$th iteration of the layered decoder}\\
\hline
\multirow{ 2}{*}{$\hat{\psi}^{[k'\rightarrow j]}_\ell$}&  \multicolumn{1}{c||}{Probability distribution of the messages from}\\
& \multicolumn{1}{c||}{virtual VN $v_{k'}$ to missatisfied CN $c_j$ at iteration $\ell$}\\
\hline
\multirow{ 2}{*}{$P_{inv,\ell}^{[k'\rightarrow j]}$}& \multicolumn{1}{c||}{Probability of polarity inversion in the messages from }\\
&  \multicolumn{1}{c||}{virtual VN $v_{k'}$ to missatisfied CN $c_j$ at iteration $\ell$}\\
\hline
\multirow{ 2}{*}{$\bar{g}^{(\ell)}_{c_j},\bar{g}'^{(\ell)}_{c_j}$}&\multicolumn{1}{c||}{Average gains of missatisfied CN $c_j$ at the $\ell$th iteration of layered } \\
&   \multicolumn{1}{c||}{decoder before and after adding the polarity inversion, respectively}\\
\hline
\multirow{ 2}{*}{$\tilde{\mathbf{A}}_{J \rightarrow 1}$}& \multicolumn{1}{c||}{Transition matrix of a LETS with $J$ layers}\\ &\multicolumn{1}{c||}{in layered decoding}\\
\hline
$\tilde{r}$& \multicolumn{1}{c||}{Dominant eigenvalue of $\tilde{\mathbf{A}}_{J \rightarrow 1}$}\\
\hline
$\tilde{\mathbf{w}}_1^T,\tilde{\mathbf{u}}_1$& \multicolumn{1}{c||}{Left and right eigenvectors corresponding to $\tilde{r}$ }\\
\hline
\multirow{ 2}{*}{$\tilde{\mathbf{A}}$} & \multicolumn{1}{c||}{The only irreducible diagonal block of the Frobenius}\\
& \multicolumn{1}{c||}{normal form of $\tilde{\mathbf{A}}_{J \rightarrow 1}$ for LETSs that are not simple cycles}\\
\hline
\multirow{ 1}{*}{$\tilde{\boldsymbol{\omega}}_1^T,\tilde{\boldsymbol{\nu}}_1$} & \multicolumn{1}{c||}{Dominant left and right eigenvectors of $\tilde{\mathbf{A}}$}\\
\hline
\multirow{ 2}{*}{$D_{l}$}& \multicolumn{1}{c||}{A LETS digraph of the layered decoder }\\
& \multicolumn{1}{c||}{with nodes $V_l$ and edges $E_l$}\\
\hline
\multirow{ 2}{*}{$D_{f}$}& \multicolumn{1}{c||}{A LETS digraph of the flooding decoder }\\
& \multicolumn{1}{c||}{with nodes $V_f$ and edges $E_f$}\\
\hline
{$n_z$}& \multicolumn{1}{c||}{Number of zero columns of the transition matrix of the layered decoder}\\
\hline
\multirow{ 2}{*}{$(\mathbf{m}_{ex}^{(\ell)},\mathbf{\Sigma}_{ex}^{(\ell)})$}& \multicolumn{1}{c||}{$b\times 1$ mean vector and $b\times b$ covariance matrix of}  \\
 &  \multicolumn{1}{c||}{the inputs from unsatisfied CNs of a LETS at iteration $\ell$}\\
 \hline
 $\tilde{\beta}^{(\ell)}$ &\multicolumn{1}{c||}{Error indicator function of layered decoder}\\
 \hline
 $\Upsilon_i$ & \multicolumn{1}{c||}{Size of the $i$th TS group}\\
 \hline
 {$\Pi_J$}& \multicolumn{1}{c||}{Set of all the $J!$ permutations of the layers of a LETS}\\
 \hline
 \multirow{ 2}{*}{$\stackrel{\leftarrow}{\bar{\psi}_{\ell}^j}$} & \multicolumn{1}{c||}{Average distribution of CN to VN messages} \\
& \multicolumn{1}{c||}{at layer $j$ of iteration $\ell$} \\
 \hline
 \multirow{ 2}{*}{$\stackrel{\rightarrow}{\bar{\psi}_{\ell}^j}$} & \multicolumn{1}{c||}{Average distribution of VN to CN messages} \\
& \multicolumn{1}{c||}{at layer $j$ of iteration $\ell$}\\
\hline 
\end{tabular}
\end{threeparttable}
}
\end{table}
\subsection{LDPC codes, SPA, flooding and layered schedules and TSs } 
Consider a binary LDPC code $\mathcal{C} $ with  parity-check matrix $\mathbf{H}$. 
A codeword of $\mathcal{C}$ is denoted by $\bold{d}$, and satisfies $\mathbf{H}\bold{d} = \bold{0}$. 
Let $G=(V\cup C,E)$ be the Tanner graph representing $\mathcal{C} $, where $V=\{ v_1,v_2,\dots,v_n \}$ and $C=\{ c_1,c_2,\dots,c_m \}$ are the 
sets of variable nodes (VNs) and check nodes (CNs), respectively, and $E=\{ e_1,e_2,\dots,e_k \}$ is the set of edges.
Suppose that $\mathcal{C}$ is used for transmission over an AWGN channel using a binary phase shift keying (BPSK) modulation, where 
codeword bits $d_i$ are mapped to modulated symbols $u_i = (-1)^{d_i}$. At the channel output, we thus have $y_i=u_i+n_i$, where $\{n_i\}$ represents the noise
and is a zero-mean Gaussian random process with independent and identically distributed (i.i.d.) values, each with variance $\sigma_{ch}^2$.



For decoding, we assume SPA is used with the following channel LLR values as the input:
\begin{equation}
{L^{ch}_i=2y_i/\sigma_{ch}^2}\:.
\label{CH_LLR}
\end{equation}
The message sent from VN $v_i$ to CN $c_j$ at iteration $\ell$ is given by
\begin{equation}
{{L_{\ell}^{[i \rightarrow j]}}=L^{ch}_i+\sum_{k \in M(i)\backslash j}{L_{\ell-1}^{[i \leftarrow k]}}}\:,
\label{v2c_message}
\end{equation}
where ${M(i)}\backslash j$ represents the set of CNs adjacent to VN  ${v_i}$ excluding $c_j$, and ${L_{\ell-1}^{[i \leftarrow k]}}$ denotes the message sent from CN $c_k$ to VN $v_i$ at iteration $\ell -1$. 
At the first iteration of the algorithm, all the messages sent from check nodes to variable nodes at iteration $\ell-1$ are assumed to be zero in (\ref{v2c_message}).
The CN to VN messages in SPA  are computed as

\begin{equation}
{L_{\ell}^{[i \leftarrow j]}=2\tanh^{-1}\Bigg[{\prod_{k \in N(j)\backslash i }\tanh\frac{{L_{\ell}^{[k \rightarrow j]}}}{2}}\Bigg]}\:,
\label{SPA_CN}
\end{equation} 
where $N(j)\backslash i$ represents the set of VNs adjacent to CN $c_j$ excluding $v_i$.
At the end of each iteration, for each $i \in \{1,\ldots,n\}$, first, the total LLR is calculated by 
\begin{equation}
  {{\tilde{L}_{\ell}^{[i]}}=L^{ch}_i+\sum_{k \in M(i)}{L_{\ell}^{[i \leftarrow k]}}},
  \label{TotLLR}
\end{equation}
and then, a hard decision is made by
\begin{equation}
\hat{d_i}=[\text{sign}({\tilde{L}_{\ell}^{[i]}})+1]/2.
\end{equation}

If the decoded block, $\hat{\bold{d}}$, at the end of iteration ${\ell} \leq I_{max}$, is a codeword, i.e., if $\mathbf{H}\hat{\bold{d}}=\bold{0}$, then the decoding is terminated successfully. ($I_{max}$ is the maximum number of iterations.) Otherwise, if iteration $I_{max}$ is completed and still no codeword is found, then a decoding failure is declared. 

To circumvent the numerical errors in the calculation of $\tanh(x)$ in (\ref{SPA_CN}), similar to \cite{butler_numerical,zhang_quasiuniform}, 
we use the following equivalent operation to calculate CN to VN messages:
\begin{equation}
{L_{\ell}^{[i \leftarrow j]}=\underset{k \in N(j)\backslash i }{\boxplus}}{L_{\ell}^{[k \rightarrow j]}},
\label{boxplus1}
\end{equation}
in which the pairwise box-plus operator, $\boxplus$, is defined as
\begin{equation}
\begin{split}
x_1 \boxplus x_2 & = \ln\Bigg(\dfrac{1+e^{x_1+x_2}}{e^{x_1}+e^{x_2}}\Bigg) \\  & =\text{sign}(x_1)\text{sign}(x_2).\min(|x_1|,|x_2|)+s(x_1,x_2)\:.
\end{split}
\label{boxplus_pair}
\end{equation}
The term $s(x_1,x_2)$ in (\ref{boxplus_pair}) 
is given by 
\begin{equation}
{s(x_1,x_2)=\ln\Bigg(1+e^{-|x_1+x_2|}
\Bigg)-\ln\Bigg(1+e^{-|x_1-x_2|}
\Bigg).}
\label{s_correction}
\end{equation}

It should be noted that Equations \eqref{v2c_message}, \eqref{SPA_CN}, and \eqref{boxplus1} are edgewise operations, and can thus 
be executed within various scheduling frameworks. In decoders with \emph{flooding schedule}, at the first half of each iteration, Equation \eqref{v2c_message} 
is executed for all the VN to CN messages. In the second half of the iteration, all the CN to VN messages are updated based on \eqref{SPA_CN}. 
For decoders with \emph{row layered schedule}, the CNs are partitioned into different subgroups (layers). 
Within each CN subgroup, messages are only generated on the edges between the CNs in the subgroup and their adjacent VNs while the other edges in the graph remain inactive. After the messages within a subgroup are generated, the updated reliabilities are used in the following layers. 
In Algorithm \ref{Layered_alg}, the steps of the row layered SPA is presented for a QC-LDPC code. In this paper, we consider 
QC-LDPC codes whose parity-check matrices ${\bf H}$ consist of an $m_b \times n_b$ array of circulant permutation matrices (CPMs) of size $p \times p$ and zero matrices of the same size. The Tanner graph of such codes can be considered as a cyclic $p$-lifting of a bipartite {\em base graph} with $n_b$ VNs and $m_b$ CNs.
Each layer of the row layered decoder in this case corresponds to one row block of ${\bf H}$, and thus, there are $m_b$ layers.
It should be noted that the equation in Line 7 of Algorithm \ref{Layered_alg} is derived by combining \eqref{v2c_message} and \eqref{TotLLR}. 
As can be seen, the total LLR of each VN may be updated several times within an iteration. So, for simplicity, the iteration index, $\ell$, 
is removed from the symbol ${\tilde{L}^{[i]}}$.
 
 \begin{algorithm}
\caption{Row Layered Box-Plus Decoding Algorithm }
\label{Layered_alg}
 \begin{algorithmic} [1] 
 \State \textbf{Input:} Channel LLRs for all the VNs. 
 \State {\textbf{Initialization:} All the total LLRs are initialized with the channel LLRs, ${\tilde{L}^{[i]}}=L^{ch}_i$ for VNs $i=1:n_bp$. }
 \For{iteration $\ell=1,\dots,I_{max}$}
 \For{The CN group (layer) number $z=1:m_b$}
 \For{CNs $j=(z-1)p+1,\dots,zp$, in layer $z$}
 \ForAll{VNs $i \in N(j)$} 
 \State{${{L_{\ell}^{[i \rightarrow j]}}={\tilde{L}^{[i]}}-{L_{\ell-1}^{[i \leftarrow j]}}}$.}
 \EndFor
 \ForAll{VNs $i \in N(j)$}
 \State{${L_{\ell}^{[i \leftarrow j]}=\underset{k \in N(j)\backslash i }{\boxplus}}{L_{\ell}^{[k \rightarrow j]}}$,}
 \State{${\tilde{L}^{[i]}}={{L_{\ell}^{[i \rightarrow j]}}}+L_{\ell}^{[i \leftarrow j]}$.}
 \EndFor
 \EndFor
 \EndFor
 \par{Hard-decision:}
 \ForAll{VNs $i=1,\dots,n_bp$}
 \State{$\hat{d_i}=[\text{sign}({\tilde{L}^{[i]}})+1]/2$.}
 \EndFor
 \If{$\bold{H}\hat{\bold{d}}=\bold{0}$}
 \State{Break.}
 \EndIf
 \EndFor 
\State \textbf{Output: $\hat{\bold{d}}$} 
 \end{algorithmic}
 \end{algorithm}
\subsection{Trapping sets (TSs)} 
It is well-known that in the error floor region, the majority of the errors of iterative decoding algorithms occur as a result of the decoder getting trapped within a subset $\mathcal{S}$ of VNs, i.e., although all the bits outside $\mathcal{S}$ have correct values for sufficiently large number of iterations, all the bits inside $\mathcal{S}$ are in error.
In this case, the set $\mathcal{S}$ is called a {\em trapping set (TS)}. 
Let $\mathcal{S}$ be a TS, and $\Gamma{(\mathcal{S})}$ be the set of neighboring CNs of $\mathcal{S}$ in the Tanner graph $G$. The \textit{induced subgraph} of $\mathcal{S}$ in $G$, denoted by $G(\mathcal{S})$, is a graph whose nodes and edges are $\mathcal{S} \cup \Gamma{(\mathcal{S})}$ and $\{v_i c_j \in E : v_i \in \mathcal{S}, c_j \in \Gamma{(\mathcal{S})}\}$, respectively. The set of CNs, $\Gamma{(\mathcal{S})}$, can be partitioned into even-degree CNs, $\Gamma_{e}{(\mathcal{S})}$, and odd-degree CNs, $\Gamma_{o}{(\mathcal{S})}$. The members of $\Gamma_{o}{(\mathcal{S})}$ and $\Gamma_{e}{(\mathcal{S})}$ are referred to as \textit{unsatisfied check nodes} and \textit{missatisfied check nodes}, respectively. If all the CNs in $\Gamma{(\mathcal{S})}$ have degrees $1$ or $2$, 
the TS is called an \textit{elementary TS (ETS)}. An ETS $\mathcal{S}$ is referred to as a {\em leafless ETS (LETS)} if each variable node in $\mathcal{S}$
is connected to at least two missatisfied CNs. Leafless ETSs are known to be the most harmful TSs over the AWGN channel~\cite{But_SS,Milen}. 
Similar to \cite{But_SS}, we thus limit our discussions in this paper to LETSs. In the rest of the paper, for simplicity, sometimes, we use the term ``TS'' instead of ``LETS.''

Trapping sets are often identified by their size, $|\mathcal{S}|=a$, and the number of unsatisfied CNs in their subgraph, $|\Gamma_{o}{(\mathcal{S})}| = b$. 
In this case, the TS is said to belong to the $(a,b)$ {\em class}, or is referred to as an $(a,b)$ TS. 
\section{Linear State-Space Model of LETSs for SPA with Flooding Schedule} \label{sec2}

\subsection{The Model} 



The linear state-space model of a LETS is given by~\cite{But_SS}
\begin{IEEEeqnarray}{lCl"s}\label{ss_eq_flood1}
\mathbf{x}^{(0)}=\mathbf{B}\L\\\label{ss_eq_flood2}
\mathbf{x}^{(\ell)}=\bar{g}'_\ell\mathbf{A}\mathbf{x}^{(\ell-1)}+\mathbf{B}\L+\mathbf{B}_{ex}\L^{(\ell)}_{ex}, & \text{\ for} \ \ell\geq 1\:,\\\label{ss_eq_flood3}
\tilde{\L}^{(\ell)}=\bar{g}'_\ell\mathbf{C}\mathbf{x}^{(\ell-1)}+\L+\mathbf{D}_{ex}\L^{(\ell)}_{ex}, & \text{\ for} \ \ell\geq 1\:.
\end{IEEEeqnarray}
In the above equations, the vector $\mathbf{x}^{(\ell)}$ represents the state vector at iteration $\ell$. The elements of this
vector, called state variables, are the LLR messages passed over different edges of the LETS subgraph towards the missatisfied CNs. 
For an  ${(a,b)}$ LETS, the number of state variables is equal to
${m_s=\sum_{i=1}^{a} d_{v_i}-b}$, where $d_{v_i}$ is the degree of the $i\text{th}$ variable node in the LETS. 
The state variables are initialized in (9), and then updated in each iteration using (10). The relationship between state variables in consecutive iterations are established through the $ m_s\times m_s $ matrix \textbf{A}, referred to as the \emph{transition matrix}. The model has two input vectors ${\L^{(\ell)}_{ex}}$ and ${\L}$ with sizes 
$b$ and $a$, respectively. The elements of these vectors are  the messages from the unsatisfied (degree-one) CNs at iteration ${\ell}$ and the channel LLRs, respectively.
Matrices  $\bf{B}$ and ${\bf B}_{ex}$ are responsible for the contribution of channel LLRs and unsatisfied CN messages to state variables, respectively.
The $a \times 1$ output vector ${\tilde{\L}^{(\ell)}}$ represents the total LLR values of the LETS variable nodes, given in \eqref{TotLLR}. The contributions of 
the state variables and unsatisfied CN messages to this vector is accounted for through matrices $\mathbf{C}$ and $\mathbf{D}_{ex}$, respectively.
%
%
 The parameter $\bar{g}'_\ell$ is a multiplicative gain to account for the effect of the external messages entering the missatisfied CNs. 
 The calculation of $\bar{g}'_\ell$ is discussed later in Subsection~\ref{subsec3.3}.

\begin{figure}
\centering
\includegraphics[width=1.7in]{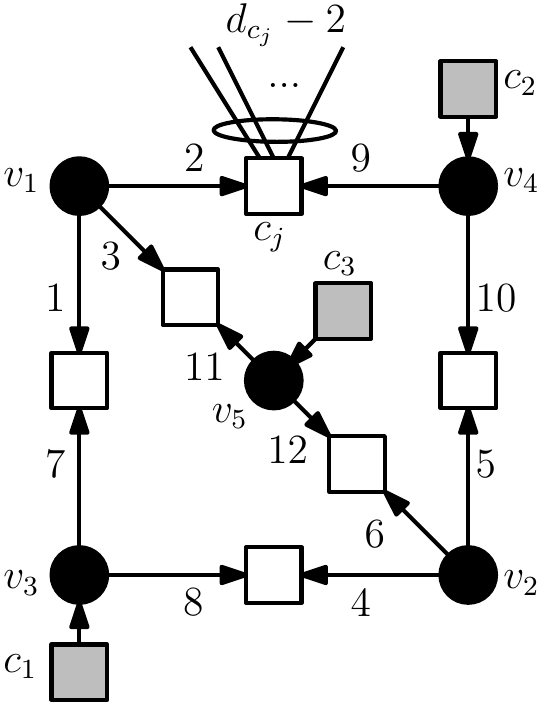}
\caption{A $(5,3)$ LETS. The VNs, missatisfied CNs and unsatisfied CNs are shown by black circles, white squares and gray squares, respectively. 
The external connections for one of the missatisfied CNs are also shown.}
\label{(5,3)external}
\end{figure} 

\begin{ex}\label{Example_53_flood}
In Fig. \ref{(5,3)external}, the structure of $(5,3)$ LETSs of Tanner (155, 64) code is shown. 
The matrices in the linear state-space model for this LETS structure, corresponding to the edge and node labels in Fig. \ref{(5,3)external}, are the followings:

\begin{small}

\[ \mathbf{A}=\left[
\begin{array}{c c c c c c c c c c c c}
0& 0 &0 &0 &0& 0 &0& 0& 1 &0& 1 &0 \\
0& 0& 0 &0& 0&0 &1& 0& 0& 0& 1& 0\\
0& 0& 0 &0& 0& 0 &1 &0 &1 &0 &0& 0\\
0 &0& 0& 0& 0& 0 &0& 0& 0 &1& 0& 1\\

0& 0 &0 &0 &0& 0& 0 &1& 0 &0 &0 &1\\
0& 0 &0& 0& 0 &0 & 0&1 &0& 1 &0& 0\\
0 &0& 0& 1 &0 &0& 0 &0 &0&0 &0& 0\\
1 &0 &0& 0& 0& 0 &0 &0& 0 &0 &0 &0\\

0& 0 &0 &0& 1& 0 &0 &0 &0& 0 &0 &0\\
0& 1 &0 &0& 0& 0 &0& 0& 0& 0& 0& 0\\
0 &0& 0 &0&0& 1& 0& 0 &0 &0 &0 &0\\
0& 0& 1& 0& 0 &0 &0 &0 &0& 0& 0& 0\\
\end{array}
\right],  \mathbf{B}=\left[ \begin{array}{ccccc}
1 & 0 & 0 & 0 & 0  \\
1 & 0 & 0 & 0 & 0  \\
1 & 0 & 0 & 0 & 0  \\
0 & 1 & 0 & 0 & 0  \\
0 & 1 & 0 & 0& 0   \\
0 & 1 & 0 & 0 & 0  \\
0 & 0 & 1 & 0 & 0  \\
0 & 0 & 1 & 0 & 0  \\
0 & 0 & 0 & 1 & 0  \\
0 & 0 & 0 & 1 & 0  \\
0 & 0 & 0 & 0 & 1  \\
0 & 0 & 0 & 0 & 1  \\
\end{array} \right],
\mathbf{B}_{ex}=\left[ \begin{array}{ccc}
0 & 0 & 0 \\
0 & 0 & 0 \\
0 & 0 & 0 \\
0 & 0 & 0 \\
0 & 0 & 0 \\
0 & 0 & 0 \\
1 & 0 & 0 \\
1 & 0 & 0 \\
0 & 1 & 0 \\
0 & 1 & 0 \\
0 & 0 & 1 \\
0 & 0 & 1 \\

\end{array} \right],\]
\[
\mathbf{C}=\left[\begin{array}{cccccccccccc}
0 & 0 & 0 & 0 & 0 & 0 & 1 & 0 & 1 & 0 & 1 & 0 \\
0 & 0 & 0 & 0 & 0 & 0 & 0 & 1 & 0 & 1 & 0 & 1 \\
1 & 0 & 0 & 1 & 0 & 0 & 0 & 0 & 0 & 0 & 0 & 0 \\
0 & 1 & 0 & 0 & 1 & 0 & 0 & 0 & 0 & 0 & 0 & 0 \\
0 & 0 & 1 & 0 & 0 & 1 & 0 & 0 & 0 & 0 & 0 & 0 \\

\end{array} \right],
\mathbf{D}_{ex}=\left[\begin{array}{ccc}
0 & 0 & 0 \\
0 & 0 & 0 \\
1 & 0 & 0 \\
0 & 1 & 0 \\
0 & 0 & 1 \\
\end{array} \right]\:.
\]
\end{small}
As an example, the first row of matrix $\mathbf{A}$ has a `1' in columns $9$ and $11$. By (\ref{ss_eq_flood2}), this means that the 
first state variable ($x_1$) at iteration $\ell$ is a function of state variables $x_9$ and $x_{11}$ at iteration $\ell -1$. This relationship between the state variables 
at consecuative iterations can also be seen in Fig.~\ref{(5,3)external}, where in this case, $x_1$, as an outgoing message from $v_1$, is a function of the extrinsic state variables that are incoming messages to the missatisfied check nodes connected to $v_1$. As another example, from Fig.~\ref{(5,3)external}, the channel LLR of the VN $v_1$ contributes to messages $x_1$, $x_2$ and $x_3$. This, based on (\ref{ss_eq_flood2}), means that the elements in the first column of $\mathbf{B}$ and rows $1$, $2$ and $3$  
are equal to $1$. As the final example, the ones in the first column of $\mathbf{B}_{ex}$, which are located in rows $7$ and $8$,  imply that 
the first unsatisfied CN message contributes to state variables $x_7$ and $x_8$. This can also be seen from Fig.~\ref{(5,3)external}.
\label{Example1_53flood}  
\end{ex}

\subsection{Application of Density Evolution in the State-Space Model}  
In order to analyze TS failures based on the linear state-space model, one needs to obtain the probability distribution of the messages entering the TS subgraph via unsatisfied CNs at different iterations, $\L^{(\ell)}_{ex}$, as well as the distribution of the messages from the external VNs connected to the missatisfied CNs. 
The latter distribution is then used to obtain the linear gains $\bar{g}'_\ell$.

To obtain such distributions, the authors of \cite{Sun_phd, Schleg, But_SS} use density evolution (DE)~\cite{Urbank} with the underlying assumptions that 
the all-zero codeword is transmitted, and that the surrounding neighborhood of the TS is tree-like. This implies that all the messages entering the TS through unsatisfied and missatisfied CNs are independent in each iteration as well as in subsequent iterations.

\subsection{Missatisfied CN Gain Model}
\label{subsec3.3}

To calculate the missatisfied CN gains, without loss of generality, consider the missatisfied CN $c_j$ in Fig. \ref{(5,3)external}.
The message passed from $c_j$ to $v_4$ is given by
\begin{equation}
{L_{\ell}^{[4 \leftarrow j]}=2\tanh^{-1}\Bigg[g^{[j]}_\ell\tanh\frac{{L_{\ell}^{[1 \rightarrow j]}}}{2}\Bigg]},
\label{cjv4}
\end{equation}  
where $g^{[j]}_\ell$ is defined as
\begin{equation}
g^{[j]}_\ell={\prod_{k \in N(j)\backslash \{1,4\} }\tanh\frac{{L_{\ell}^{[k \rightarrow j]}}}{2}}.
\label{g_j_l}
\end{equation} 
For relatively small values of ${L_{\ell}^{[1 \rightarrow j]}}$ compared to $g^{[j]}_\ell$, which typically happens in the event of a TS failure, 
the Taylor expansion of order two for \eqref{cjv4} around zero is given by
\begin{equation}
{L_{\ell}^{[4 \leftarrow j]}}\approx g^{[j]}_\ell L_{\ell}^{[1 \rightarrow j]}.
\end{equation}
(Note that the square term in the Taylor expansion is equal to zero.)
The above model is further simplified in \cite{Schleg, But_SS} by replacing $g^{[j]}_\ell $ with an average gain, $\bar{g}_\ell$, over all possible channel noise realizations as well as all the CNs. For an LDPC code with regular CN degree of $d_c$, the expected gain is
\begin{equation}
 \bar{g}_\ell=\mathbb{E}_{\mathbf{n},j,k}\Bigg[\tanh\frac{{L_{\ell}^{[k \rightarrow j]}}}{2}\Bigg]^{d_c-2}\:,
 \label{g_bar}
\end{equation}
where $\mathbf{n}$ is the channel noise vector. The average, $\mathbb{E}_{\mathbf{n},j,k}$, is taken over all CNs $j$, all $k \in N(j)$ and all noise realizations. 
To calculate (\ref{g_bar}), one can simply calculate the expected value with respect to the probability distribution of VN to CN messages at iteration $\ell$, where this distribution is a function of degree distributions of the code as well as the channel noise distribution.  

In \cite{Schleg, But_SS}, the gain model of Equation \eqref{g_bar} is extended by incorporating the polarity inversions of the LETS internal messages to the model. These inversions are caused by erroneous LLRs entering the missatisfied CNs from the $d_c-2$ external VNs. In other words, assuming the all-zero codeword is transmitted, whenever an odd number of external messages, out of $d_c-2$, have negative signs, the polarity of the internal messages from missatisfied CNs to their neighboring VNs will be inverted. The probability of inversion, $P_{inv,\ell}$, is thus calculated by
\begin{equation}
P_{inv,\ell}=\sum_{k \text{\ odd}}\binom {d_c-2} k P_{e,\ell}^k(1-P_{e,\ell})^{d_c-2-k}\:,
\end{equation}     
where $P_{e,\ell}$ is the probability of error in each of the VN to CN messages at iteration $\ell$, and is calculated using DE.
In~\cite{But_SS}, to model the inversion, the authors suggested using the following average gain instead of \eqref{g_bar}:
\begin{equation}
\bar{g}'_\ell=(1-P_{inv,\ell})\bar{g}_\ell.
\end{equation}
\if0  
\subsection{Error Probability of a LETS Structure} 
\begin{mydef}
For a given square matrix $\mathbf{M}$, $\mathbf{PMP}^T$ is called the symmetric permutation of $\mathbf{M}$. If $\mathbf{M}$ represents the adjacency matrix of a graph, symmetric permutation results in an isomorphic graph whose nodes are relabelled. The eigenvalues of $\mathbf{M}$ and its symmetric permutation are the same. Also, their eigenvectors, up to a permutation, are equal \cite{Meyer}.   
\end{mydef}
 \begin{mydef}\label{perm_irred_mydef}
A non-negative $n \times n$ real matrix $\mathbf{M} \geq \mathbf{0}$ where all the entries $m_{ij} \geq 0$ for $1 \leq i,j \leq n$ is \emph{irreducible} if it cannot be symmetrically permuted by the permutation matrix $\mathbf{P}$ into the block upper triangular matrix
\begin{equation}
\label{Perm_irreducible_def}
\mathbf{PMP}^T \neq \left [\begin{array}{c|c}
\mathbf{M}_a &\mathbf{M}_{c}\\
\hline
\mathbf{0}& \mathbf{M}_b
\end{array} \right ],
\end{equation} 
where $\mathbf{M}_a$ and $\mathbf{M}_b$ are square matrices with sizes greater than 0 (non-trivial).
The matrix $\mathbf{M}$ is \emph{reducible} if it is not irreducible.
\end{mydef}
\begin{mydef}
The \emph{spectral radius} of the square matrix $\mathbf{M}$ is the largest absolute value of its eigenvalues, usually, denoted by $\rho (\mathbf{M})$.
\end{mydef}
\begin{theo}[Perron-Frobenius theorem of non-negative irreducible matrices]\label{Perron_Frobenius_theo}
Let $\mathbf{M}\geq \mathbf{0}$ be a non-negative irreducible matrix. Then,
\begin{itemize}
\item[(a)]$\mathbf{M}$ has a positive real eigenvalue equal to its spectral radius $\rho(\mathbf{M})$.
\item[(b)]To $\rho(\mathbf{M})$ there corresponds a positive eigenvector $\mathbf{x}>\mathbf{0}$\footnote{This is valid for both left and right eigenvectors.}.
\item[(c)]$\rho(\mathbf{M})$ is a simple (multiplicity equal to 1) eigenvalue of $\mathbf{M}$.
\item[(d)]$\mathbf{x}$ is the only non-negative eigenvector of $\mathbf{M}$.
\end{itemize}

\end{theo}

Given the irreducible non-negative square matrix $M$, the \textit{Perron-Frobenius} theory implies that the spectral radius $\rho(\mathbf{M})=r$ in which $r>0 $ is an eigenvalue of $\mathbf{M}$. Also, the left and right eigenvectors, $\mathbf{w}_1$ and $\mathbf{u}_1$, associated with $r$ are positive and there are not any other left or right eigenvectors of $\mathbf{M}$ with non-negative entries \cite{horn,varga}. 
 
In order to obtain the probability of an LETS failure, in \cite{But_SS}, first a non-recursive form of the state variables is calculated, then, the non-negative matrix theory is utilized to obtain an error indicator function. Finally, the probability of TS failure is estimated by Gaussian approximation. In the following, these steps are briefly explained.
 
By induction, the recursive state-space equations \eqref{ss_eq_flood} can be expressed in a non-recursive form as follows
\begin{IEEEeqnarray*}{l}
\mathbf{x}^{(\ell)}=\mathbf{A}^\ell\mathbf{B}\L\prod_{j=1}^\ell\bar{g}'_j
\\ 
\ +\sum_{i=1}^\ell\mathbf{A}^{\ell-i}\big(\mathbf{B}\L+\mathbf{B}_{ex}\L^{(i)}_{ex}\big)\prod_{j=i+1}^\ell\bar{g}'_j.
\IEEEyesnumber
\label{state_flood_nonreq}
\end{IEEEeqnarray*}
Now we have an equation showing the state of the messages in different branches of the TS subgraph at the ${l}$th iteration with respect to the model inputs from the channel as well as unsatisfied CNs. The question is that how can we extract the probability of the TS failure from the above equation? Invoking the Perron Frobenius theorem \cite{horn,varga}, Butler proposed to use the dominant left eigenvector of matrix ${\mathbf{A}}$, denoted by $\mathbf{w}_1^T$, to simplify this equation that is basically the same approach taken by Sun \cite{Sun_phd}. In other words, almost all the LETS structures have a non-negative irreducible $\mathbf{A}$ matrix \footnote{The only exceptions are simple cycles. they are a type of LETS structures whose induced subgraph has only degree-2 variable nodes. The interested readers are referred to Appendix B of \cite{But_SS} }. Hence, according to Theorem \ref{Perron_Frobenius_theo}, there exists a positive dominant eigenvalue of $\mathbf{A}$, $r$, whose corresponding left and right eigenvectors, $\mathbf{w}_1^T$ and $\mathbf{u}_1$, are positive.  



For any positive integer ${i}$, by induction, the following relation can be written between an eigenvalue of matrix ${\mathbf{A}}$ and its corresponding left eigenvector:
 
\begin{equation}\label{Power_prop_eigen}
{\mathbf{w}^*_k\mathbf{A}^i=\mu_k^i\mathbf{w}^*_k},
\end{equation}
where ${\mu_k \in \mathbb{C}}$ and ${\mathbf{w}^*_k}$ denote the $k$th eigenvalue and the conjugate transpose of its corresponding eigenvector. 
By left multiplying the state equation \eqref{state_flood_nonreq} with the positive real eigenvector ${\mathbf{w}_1^T}$ corresponding to ${r}$ we have
\begin{IEEEeqnarray*}{l}
\label{indicator_flood}
\mathbf{w}^T_1\mathbf{x}^{(l)}=r^l\mathbf{w}^T_1\mathbf{B}\L\prod_{j=1}^l\bar{g}'_j
\\
+\sum_{i=1}^l r^{l-i}\mathbf{w}^T_1\big(\mathbf{B}\L+\mathbf{B}_{ex}\L_{ex}^{(i)}\big)\prod_{j=i+1}^l\bar{g}'_j.
\IEEEyesnumber
\end{IEEEeqnarray*}
This equation is, basically, the projection of state variables at the $l$th iteration onto the positive vector $\mathbf{w}_1^T$. As the behaviour of large powers of matrix $\mathbf{A}$ are, asymptotically, determined by its dominant eigenvalues and its corresponding eigenvectors, equation \eqref{indicator_flood} shows whether the state variables are, generally, grow toward the negative infinity or positive infinity. The sign of the state variables are somehow representative of the sign of the total LLRs of the TS VNs. Therefore, due to the all zero codeword assumption, the negative sign of the above projection is, approximately, indicative of a TS failure.
Now we can define an error indicator by rescaling the above equation
\begin{equation}\label{indicator_Butler}
{\beta'_l=\mathbf{w}^T_1\mathbf{B}\L+\sum_{i=1}^l\frac{\mathbf{w}^T_1\big(\mathbf{B}\L+\mathbf{B}_{ex}\L_{ex}^{(i)}\big)}{r^i\prod_{j=1}^i\bar{g}'_j}}.
\end{equation}
It is assumed the error indicator function is a Gaussian random variable. Therefore, the probability of the failure of a specific TS, $P_e\{\xi(\mathcal{S})\}$, can be found via
\begin{equation}\label{BetaProbFail}
{P_e\{\xi(\mathcal{S})\}=\lim_{l\to\infty}Pr\{\beta'_l<0\}=\lim_{l\to\infty}Q\bigg(\frac{\mathbb{E}[\beta'_l]}{\sqrt{\mathbb{VAR}[\beta'_l]}}\bigg),}
\end{equation}
where $Q(x)=\frac{1}{\sqrt{2 \pi}}\int_{x}^{\infty}\exp\big (-\frac{u^2}{2}\big ) du$. In this equation, the mean and variance are calculated as
\begin{IEEEeqnarray*}{lCl"s}
\mathbb{E}[\beta'_l]=m_{ch}\bigg(1+\sum_{i=1}^l\frac{1}{r^i\prod_{j=1}^i\bar{g}'_j}\bigg)\sum_{k=1}^a(\mathbf{w}^T_1\mathbf{B})_k \\
+\sum_{i=1}^l\frac{m_{ex}^{(i)}}{r^i\prod_{j=1}^i\bar{g}'_j}\sum_{k=1}^b(\mathbf{w}^T_1\mathbf{B}_{ex})_k,
\IEEEyesnumber
\end{IEEEeqnarray*}
\begin{IEEEeqnarray*}{lCl"s}
\mathbb{VAR}[\beta'_l]=\sigma^2_{ch}\bigg(1+\sum_{i=1}^l\frac{1}{r^i\prod_{j=1}^i\bar{g}'_j}\bigg)^2\sum_{k=1}^a(\mathbf{w}^T_1\mathbf{B})^2_k \\
+\sum_{i=1}^l\frac{\sigma^{2^{(i)}}_{ex}}{(r^i\prod_{j=1}^i\bar{g}'_j)^2}\sum_{k=1}^b(\mathbf{w}^T_1\mathbf{B}_{ex})^2_k.
\IEEEyesnumber
\end{IEEEeqnarray*}
The symbol $(.)_k$ is used to represent the $k$th element of the vector inside the parentheses. To obtain the mean and variance of the indicator function, it is assumed that the channel inputs with mean $m_{ch}$ and variance $\sigma^2_{ch} $ (scalar values) and the unsatisfied CN inputs are independent. Moreover, it is assumed the input messages at different unsatisfied CNs and also different iterations are independent (Assumptions \ref{ass2} and \ref{ass3}). The unsatisfied CN inputs mean and variance at a given iteration, $m_{ex}^{(i)}$ and $\sigma^{2^{(i)}}_{ex}$, are simply computed based on the distribution of the CN to VN messages obtained from DE method.

\begin{rem}
The dominant eigenvalue of the LETSs, except for the simple cycles, are greater than one, $r>1$ \cite{But_SS}. As a result, the indicator function in \eqref{indicator_Butler}, where the unsatisfied inputs are saturated, converges within a few iterations \cite{But_SS}. With respect to the simple cycles, while $r=1$, it is shown in \cite{But_SS} that the ratio of the mean to the square root of variance in \eqref{BetaProbFail}, also, converges.  
\end{rem}
It is of significance to note that the clipping as a non-linear function cannot be directly applied on the state variables of a linear system. As a result, in \cite{But_SS}, the effect of messages saturation is, partially, taken into account by applying the clipping on the unsatisfied CN inputs obtained from DE method. In the following, we use the same approach in our proposed model. 


\fi
\subsection{Error Probability of a LETS Structure} 
Before the calculation of error probability, we discuss some of the properties of the transition matrix $\mathbf{A}$ relevant to the calculation.

For a given square matrix $\mathbf{M}$ and a permutation matrix $\mathbf{P}$, the matrix $\mathbf{PMP}^T$ is called the {\em symmetric permutation} of $\mathbf{M}$.   
If $\mathbf{M}$ represents the adjacency matrix of a graph, symmetric permutation results in an isomorphic graph whose nodes are relabeled. 
The eigenvalues of $\mathbf{M}$ and its symmetric permutation are the same. Also, their eigenvectors, up to a permutation, are equal \cite{Meyer}. 
\begin{mydef}\label{perm_irred_mydef}
A non-negative $n \times n$ real matrix $\mathbf{M}$ 
is \emph{irreducible} if it cannot be symmetrically permuted by any permutation matrix $\mathbf{P}$ into a block upper triangular matrix, i.e., 
\begin{equation}
\label{Perm_irreducible_def}
\mathbf{PMP}^T \neq \left [\begin{array}{c|c}
\mathbf{M}_a &\mathbf{M}_{c}\\
\hline
\mathbf{0}& \mathbf{M}_b
\end{array} \right ],
\end{equation} 
where $\mathbf{M}_a$ and $\mathbf{M}_b$ are square matrices with sizes greater than 0 (non-trivial).
The matrix $\mathbf{M}$ is \emph{reducible} if it is not irreducible.
\end{mydef}
The \emph{spectral radius} of a square matrix $\mathbf{M}$ is defined as the largest absolute value of its eigenvalues, and is denoted by $\rho (\mathbf{M})$.
\begin{theo}[Perron-Frobenius theorem of non-negative irreducible matrices]\label{Perron_Frobenius_theo}
Let $\mathbf{M}$ be a non-negative irreducible matrix. Then,
\begin{itemize}
\item[(a)]$\mathbf{M}$ has a positive real eigenvalue equal to its spectral radius $\rho(\mathbf{M})$.
\item[(b)]corresponding to $\rho(\mathbf{M})$, matrix $\mathbf{M}$ has a positive eigenvector $\mathbf{x}$.\footnote{This is valid for both left and right eigenvectors.}
\item[(c)]$\rho(\mathbf{M})$ is a simple (multiplicity equal to 1) eigenvalue of $\mathbf{M}$.
\item[(d)]$\mathbf{x}$ is the only non-negative eigenvector of $\mathbf{M}$.
\end{itemize}

\end{theo}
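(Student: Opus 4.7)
The plan is to establish the four claims in sequence, leveraging the Collatz--Wielandt variational characterization together with the irreducibility hypothesis. I would first introduce the functional $r(\mathbf{x}) = \min_{i\,:\,x_i > 0}(\mathbf{M}\mathbf{x})_i / x_i$ on the non-negative cone and set $r^{\star} = \sup\{r(\mathbf{x}) : \mathbf{x} \geq \mathbf{0},\,\mathbf{x} \neq \mathbf{0}\}$. The key analytic step is to show that this supremum is attained: restrict attention to the compact unit simplex and, to compensate for the lower semicontinuity of $r(\cdot)$ at the boundary, pre-apply the transformation $\mathbf{x} \mapsto (\mathbf{I}+\mathbf{M})^{n-1}\mathbf{x}$, which by irreducibility (every index reaches every other in at most $n-1$ steps) produces a strictly positive vector whenever $\mathbf{x} \neq \mathbf{0}$. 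A maximizer $\mathbf{x}^{\star}$ then satisfies $\mathbf{M}\mathbf{x}^{\star} = r^{\star}\mathbf{x}^{\star}$ with $r^{\star} \geq 0$, and strict positivity of $\mathbf{x}^{\star}$ follows because any zero coordinate would contradict the equation $(\mathbf{I}+\mathbf{M})^{n-1}\mathbf{x}^{\star} = (1+r^{\star})^{n-1}\mathbf{x}^{\star}$ together with the irreducibility-induced positivity of $(\mathbf{I}+\mathbf{M})^{n-1}\mathbf{x}^{\star}$. This delivers (b) together with $r^{\star} > 0$.

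For (a), I would verify that $r^{\star} = \rho(\mathbf{M})$: given any eigenvalue $\lambda \in \mathbb{C}$ with $\mathbf{M}\mathbf{y} = \lambda \mathbf{y}$, the entrywise bound $|\lambda|\,|\mathbf{y}| \leq \mathbf{M}|\mathbf{y}|$ combined with the Collatz--Wielandt definition yields $|\lambda| \leq r^{\star}$, while $r^{\star}$ is itself an eigenvalue. For the simplicity claim (c), I would apply the entire existence argument to $\mathbf{M}^T$, which is also non-negative and irreducible, obtaining a strictly positive left eigenvector $\mathbf{w}^T > \mathbf{0}^T$ at the same eigenvalue $r^{\star}$. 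If a second right eigenvector $\mathbf{y}$ linearly independent from $\mathbf{x}^{\star}$ existed, one could select a real $\alpha$ so that $\mathbf{z} = \mathbf{x}^{\star} - \alpha\,\Re(\mathbf{y})$ is non-negative, non-zero, and has a zero coordinate; the irreducibility argument from (b) would then force $\mathbf{z} > \mathbf{0}$, a contradiction. To exclude a non-trivial Jordan block at $r^{\star}$, suppose $\mathbf{M}\mathbf{v} = r^{\star}\mathbf{v} + \mathbf{x}^{\star}$; left-multiplying by $\mathbf{w}^T$ collapses the two $r^{\star}\mathbf{w}^T\mathbf{v}$ terms and leaves $\mathbf{w}^T\mathbf{x}^{\star} = 0$, contradicting the strict positivity of both vectors. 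Finally, for (d), any non-negative eigenvector $\mathbf{z}$ with $\mathbf{M}\mathbf{z} = \lambda \mathbf{z}$ is again forced to be strictly positive by the irreducibility trick; pairing with $\mathbf{w}^T$ gives $(\lambda - r^{\star})\,\mathbf{w}^T\mathbf{z} = 0$, so $\lambda = r^{\star}$, and by (c) $\mathbf{z}$ is a scalar multiple of $\mathbf{x}^{\star}$.

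The main obstacle I anticipate is not the existence of the positive eigenvector, which is a standard compactness argument, but rather part (c): the existence of a positive right eigenvector does not by itself preclude a non-trivial Jordan block at $r^{\star}$, and the cleanest way to dispose of this possibility is precisely the pairing with the positive left eigenvector described above. For this reason, constructing $\mathbf{w}^T$ early (by invoking the same machinery for $\mathbf{M}^T$) is essential rather than optional, and I would emphasize this dependency at the start of the simplicity argument so that the Jordan-block exclusion and the uniqueness statement (d) both follow from a single biorthogonality identity.
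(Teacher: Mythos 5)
The paper does not actually prove this statement: it is quoted verbatim as the classical Perron--Frobenius theorem for non-negative irreducible matrices (with \cite{horn} and \cite{varga} cited nearby for the surrounding matrix theory), so there is no in-paper argument to compare yours against. Your outline is the standard Wielandt/Collatz--Wielandt proof and is essentially sound: the variational functional $r(\cdot)$, the $(\mathbf{I}+\mathbf{M})^{n-1}$ positivity trick to obtain attainment of the supremum and strict positivity of the maximizer, the passage to $\mathbf{M}^T$ to manufacture the positive left eigenvector, the biorthogonality identity $\mathbf{w}^T\mathbf{x}^{\star}=0$ that rules out a non-trivial Jordan block, and the pairing $(\lambda-r^{\star})\,\mathbf{w}^T\mathbf{z}=0$ for part (d) are all the right moves, and your closing remark correctly identifies the Jordan-block exclusion as the step that genuinely requires the left eigenvector. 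Two details you elide but would need to write out: (i) to conclude that a maximizer $\mathbf{x}^{\star}$ of $r(\cdot)$ is an eigenvector, you must argue that if $\mathbf{M}\mathbf{x}^{\star}-r^{\star}\mathbf{x}^{\star}\geq\mathbf{0}$ were non-zero, applying $(\mathbf{I}+\mathbf{M})^{n-1}$ would yield a strictly positive $\mathbf{y}$ with $\mathbf{M}\mathbf{y}-r^{\star}\mathbf{y}>\mathbf{0}$ and hence $r(\mathbf{y})>r^{\star}$, contradicting maximality; (ii) in the geometric-multiplicity step, if $\Re(\mathbf{y})$ is zero or proportional to $\mathbf{x}^{\star}$ you must switch to $\Im(\mathbf{y})$, which is legitimate since $r^{\star}$ is real so both real and imaginary parts are eigenvectors. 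Neither point affects the correctness of the plan; as a proof of a textbook theorem the paper deliberately omits, your argument is complete in outline and standard in method.
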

Almost all the LETS structures, with the exception of simple cycles,  have a non-negative irreducible transition matrix $\mathbf{A}$~\cite{But_SS}.
Hence, based on Theorem \ref{Perron_Frobenius_theo}, there exists a positive dominant eigenvalue of $\mathbf{A}$, $r$, 
whose corresponding left and right eigenvectors, $\mathbf{w}_1^T$ and $\mathbf{u}_1$, are both positive. It is shown in \cite{But_SS}
that for a simple cycle, $r=1$, and for LETSs that are not simple cycles, $r > 1$.

%
%
 
 
To obtain the probability of a LETS failure, in \cite{But_SS}, the authors first derived a non-recursive equation for the state vector:
\begin{IEEEeqnarray*}{l}\label{nonRecursiveSS_flood}
\mathbf{x}^{(\ell)}=\mathbf{A}^\ell\mathbf{B}\L\prod_{j=1}^\ell\bar{g}'_j
 +\sum_{i=1}^\ell\mathbf{A}^{\ell-i}\big(\mathbf{B}\L+\mathbf{B}_{ex}\L^{(i)}_{ex}\big)\prod_{j=i+1}^\ell\bar{g}'_j\:.
\IEEEyesnumber
\label{state_flood_nonreq}
\end{IEEEeqnarray*}
They then showed that the projection of the state vector on the positive left eigenvector associated with the dominant eigenvalue $r$, given by
\begin{IEEEeqnarray*}{l}
\label{indicator_flood}
\mathbf{w}^T_1\mathbf{x}^{(\ell)}=r^\ell\mathbf{w}^T_1\mathbf{B}\L\prod_{j=1}^\ell\bar{g}'_j
+\sum_{i=1}^\ell r^{\ell-i}\mathbf{w}^T_1\big(\mathbf{B}\L+\mathbf{B}_{ex}\L_{ex}^{(i)}\big)\prod_{j=i+1}^\ell\bar{g}'_j\:,
\IEEEyesnumber
\end{IEEEeqnarray*}
can be used as an indicator of TS failure. In fact, the following scaled version of (\ref{indicator_flood}) is used in~\cite{But_SS} as the error indicator function:
\begin{equation}\label{indicator_Butler}
{\beta'_\ell=\mathbf{w}^T_1\mathbf{B}\L+\sum_{i=1}^\ell\frac{\mathbf{w}^T_1\big(\mathbf{B}\L+\mathbf{B}_{ex}\L_{ex}^{(i)}\big)}{r^i\prod_{j=1}^i\bar{g}'_j}}\:,
\end{equation}
with the probability of error for the corresponding LETS $\mathcal{S}$ given by
\begin{equation}\label{BetaProbFail}
{P_e(\mathcal{S})=\lim_{\ell\to\infty}\text{Pr}\{\beta'_\ell<0\}=\lim_{l\to\infty}Q\bigg(\frac{\mathbb{E}[\beta'_\ell]}{\sqrt{\mathbb{VAR}[\beta'_\ell]}}\bigg),}
\end{equation}
where $Q(x)=\frac{1}{\sqrt{2 \pi}}\int_{x}^{\infty}\exp\big (-\frac{u^2}{2}\big ) du$. In the derivation of (\ref{BetaProbFail}), it is assumed that 
the random variable $\beta'_\ell$ is Gaussian.

\section{Linear State-Space Model of LETSs for SPA with Row Layered Schedule}\label{sec3}
In this section, we develop a linear state-space model of LETSs for SPA with row layered schedule, and use the model to calculate 
the error probability of LETSs. In Subsection~\ref{subsec4.1}, by a proper labeling of state variables, we establish a relationship between the matrices $\mathbf{A}$, $\mathbf{B}$ and $\mathbf{B}_{ex}$ of the model for flooding schedule in (\ref{ss_eq_flood1})-(\ref{ss_eq_flood3}) and the corresponding matrices needed in the model for the layered schedule. The recursive and non-recursive equations for the state vector of the layered schedule are then derived in Subsection~\ref{subsec4.2}. In Subsection~\ref{subsec4.3}, we present the application of DE within the linear state-space model of the layered decoder. Within this subsection, we also introduce the concept of TS layer profile which plays an important role in the proper application of DE in the model, and in identifying TSs that have the same topology but may have different harmfulness. We then derive the gain values of the missatisfied CNs for a layered schedule in Subsection~\ref{gain_lay_section}. The spectral properties of the LETS system matrices in layered decoders are analyzed in Subsection \ref{SpectralProp_SubSec}. 

\subsection{Relationship between model matrices of flooding and row layered schedules} \label{subsec4.1}

To establish a relationship between the matrices that appear in the two models, it is helpful to label the state variables in a certain order.
Consider an $(a,b)$ LETS ${\cal S}$ with $m_s$ state variables $x_1,\ldots,x_{m_s}$.  Suppose that the missatisfied CNs of ${\cal S}$ are from $J$ different layers of the 
parity-check matrix ${\bf H}$, where $J \leq m_b$. We denote these layers by $L_1, \ldots, L_J$, where an smaller index for a layer implies that the CNs in that layer are updated earlier in an iteration. In the following, we say that such a LETS has $J$ layers. We use the notation $n_{L_j}$ to denote the number of state variables that are updated in layer $L_j$, for $1 \leq j \leq J$. We thus have 
$n_{L_1}+\cdots+n_{L_J}=m_s$. To assign the state variables to different internal messages of ${\cal S}$, we start with the messages that are updated in $L_1$, and assign to them variables $x_1, \ldots, x_{n_{L_1}}$. We then move on to the messages that are updated in $L_2$, and assign to them variables $x_{n_{L_1}+1}, \ldots,
x_{n_{L_2}+n_{L_1}}$. We will continue this process all the way to $L_J$ until all the $m_s$ messages have their state variables assigned to them. We call this labeling of state variables {\em systematic labeling}.

Based on the systematic labeling, the matrix ${\bf A}$ of flooding schedule will be a $J \times J$ array of matrices ${\bf A}_{i,j}, 1 \leq i \leq J, 1 \leq j \leq J$,
where the size of the matrix ${\bf A}_{i,j}$ is $n_{L_i} \times n_{L_j}$, and the $J$ diagonal matrices are all-zero, i.e., ${\bf A}_{i,i}={\bf 0}, 1 \leq i \leq J$.  Based on the 
partitioning of the state variables according to their layer, the rows of matrices ${\bf B}$ and ${\bf B}_{ex}$ can also be partitioned into $J$ blocks, with the
$i$th row block containing $n_{L_i}$ rows.  We refer to this representation of matrices as {\em systematic form}.


\begin{ex}
\label{ex2}
Consider the $(5,3)$ LETSs of the Tanner $(155,64)$ code discussed in Example~\ref{Example_53_flood}. The Tanner code has three row layers, i.e., $m_b =3$.
Each of the $(5,3)$ LETSs has $3$ unsatisfied and $6$ missatisfied check nodes. Each of the three unsatisfied check nodes belongs to a different 
row layer. Out of $6$ missatisfied check nodes, each set of two belongs to a different row layer ($J=3$). This is shown in  Fig. \ref{(5,3)lay_col}.
To distinguish the layers, different colors and line types are used in Fig. \ref{(5,3)lay_col}. As can be seen, the selection of edge labels, which reflects the 
indices of corresponding state variables, are in systematic form. The corresponding matrices are given by:  
\end{ex}
\begin{small}
 \begin{IEEEeqnarray*}{lCl"s}
\mathbf{A}=
\left[
\begin{array}{c|c|c}
\mathbf{0} &\mathbf{A}_{1,2}&\mathbf{A}_{1,3}\\
\hline
\mathbf{A}_{2,1}& \mathbf{0}& \mathbf{A}_{2,3}\\
\hline
\mathbf{A}_{3,1}& \mathbf{A}_{3,2}&\mathbf{0}
\end{array}
\right]
=\left[
\begin{array}{c c c c|c c c c|c c c c}
0& 0 &0 &0 &0& 0 &0& 1& 0 &1& 0 &0 \\
0& 0& 0 &0& 0&0 &0& 0& 0& 0& 0& 1\\
0& 0& 0 &0& 0& 0 &1 &0 &0 &0 &0& 0\\
0 &0& 0& 0& 1& 0 &0& 0& 0 &0& 1& 0\\
\hline
0& 0 &0 &0 &0& 0& 0 &0& 1 &0 &0 &0\\
0& 0 &1& 0& 0 &0 &0 &0 &0& 0 &1& 0\\
0 &1& 0& 0 &0 &0& 0 &0 &0&1 &0& 0\\
0 &0 &0& 1& 0& 0 &0 &0& 0 &0 &0 &0\\
\hline
0& 1 &0 &0& 0& 0 &0 &1 &0& 0 &0 &0\\
0& 0 &0 &0& 0& 1 &0& 0& 0& 0& 0& 0\\
1 &0& 0 &0&0& 0& 0& 0 &0 &0 &0 &0\\
0& 0& 1& 0& 1 &0 &0 &0 &0& 0& 0& 0\\
\end{array}
\right]\:,
\end{IEEEeqnarray*} 

%
 \begin{IEEEeqnarray*}{lCl"s}
\mathbf{B}=
\left[
\begin{array}{c}
\mathbf{B}_1\\
\hline
\mathbf{B}_2\\
\hline
\mathbf{B}_3
\end{array}
\right]
=\left[
\begin{array}{c c c c c}
1& 0 &0 &0 &0 \\
0& 0 &0 &1 &0\\
0& 0 &0 &0 &1\\
0& 1 &0 &0 &0\\
\hline
0& 0 &1 &0 &0\\
0& 1 &0 &0 &0\\
1& 0 &0 &0 &0\\
0& 0 &0 &0 &1\\
\hline
1& 0 &0 &0 &0\\
0& 0 &1 &0 &0\\
0& 0 &0 &1 &0\\
0& 1 &0 &0 &0\\
\end{array}
\right],
\mathbf{B}_{ex}=
\left[
\begin{array}{c}
\mathbf{B}_{ex,1}\\
\hline
\mathbf{B}_{ex,2}\\
\hline
\mathbf{B}_{ex,3}
\end{array}
\right]=\left[
\begin{array}{c c c}
0& 0 &0  \\
0& 1 &0 \\
0& 0 &1 \\
0& 0 &0 \\
\hline
1& 0 &0 \\
0& 0 &0 \\
0& 0 &0 \\
0& 0 &1 \\
\hline
0& 0 &0 \\
1& 0 &0 \\
0& 1 &0 \\
0& 0 &0 \\
\end{array}
\right].
\end{IEEEeqnarray*} 
\end{small}
\begin{figure}
\centering
\includegraphics[width=1.8in]{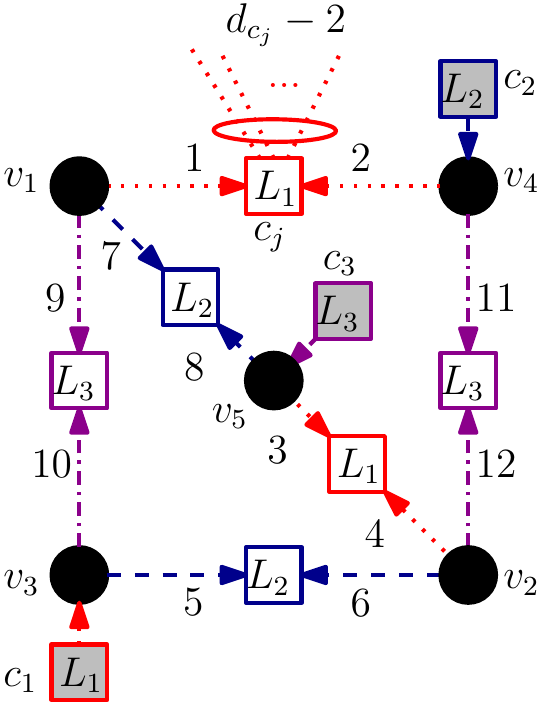}
\caption{A $(5,3)$ LETS of the Tanner $(155, 64)$ code whose edges are labeled in a systematic form. The edges of different layers are distinguished by different colors and line types.}
\label{(5,3)lay_col}
\end{figure}
In a layered schedule, the state variables of a LETS are updated in $J$ rounds, each round corresponding to one row layer, within one iteration.
Corresponding to the updating of each layer, there thus exists a set of matrices used in the model. The $m_s \times m_s$ matrix 
that represents the relationship among the state variables corresponding to the updating of layer $j$ is denoted by $\mathcal{A}_j, 1 \leq j \leq J$,
and is referred to as the ``transition matrix of layer $j$.''
Matrix $\mathcal{A}_j$ has the same block structure as matrix ${\bf A}$. In fact, the $j$th row block of the two matrices are identical. 
Matrix  $\mathcal{A}_j$, however, has zero blocks everywhere else other than the diagonal blocks that are each an identity matrix.  
This is to indicate that as the messages in the $j$th layer are updated, all the other messages in the LETS remain unchanged.
Similarly, notation $\mathcal{B}_j$ is used to denote the $m_s \times a$ matrix that is responsible for the contribution of channel LLRs 
in the state variables that are updated in layer $j$. This matrix has the same row block structure as in ${\bf B}$, with the 
difference that except for row block $j$ (consisting of $n_{L_j}$ rows) that is identical to that of ${\bf B}$, all the other row blocks are zero.
Similarly, for the contribution of unsatisfied CN messages to the state variables of layer $j$, the $m_s \times b$ matrix $\mathcal{B}_{ex,j}$ 
can be defined as a matrix with $J$ row blocks whose $j$th row block is equal to that of $B_{ex}$ while the other row blocks are 
all zero. This matrix, however, is not directly utilized in the model. The reason is that, two groups of unsatisfied CN messages contribute to the state variables in layer $j$
at iteration $\ell$. The first group are those that are updated at the end of $(\ell-1)$th iteration, $\L_{ex}^{(\ell-1)}$. The second group are the ones 
that are updated earlier in iteration $\ell$, $\L_{ex}^{(\ell)}$. Correspondingly, the matrices $\overset{\triangleleft}{\mathfrak{B}}_{ex,j}$ and $\overset{\triangleright}{\mathfrak{B}}_{ex,j}$ are defined to account for the two contributions, respectively, and we have
\begin{equation}
\overset{\triangleleft}{\mathfrak{B}}_{ex,j}+\overset{\triangleright}{\mathfrak{B}}_{ex,j}=\mathcal{B}_{ex,j}\:.
\end{equation} 

\begin{ex}
\label{ex3}
For the same LETS structure discussed in Example~\ref{ex2}, the model matrices corresponding to the second layer of row layered schedule are the followings:
\begin{small}
\begin{IEEEeqnarray*}{lCl"s}
 \mathbcal{A}_2=
\left[
\begin{array}{c|c|c}
\mathbf{I} &\mathbf{0}&\mathbf{0}\\
\hline
\mathbf{A}_{2,1}& \mathbf{0}& \mathbf{A}_{2,3}\\
\hline
\mathbf{0}& \mathbf{0}&\mathbf{I}
\end{array}
\right]\:,
\mathfrak{B}_2=
\left[
\begin{array}{c}
\mathbf{0}\\
\hline
\mathbf{B}_{2}\\
\hline
\mathbf{0}
\end{array}
\right]\:,
\overset{\triangleleft}{\mathfrak{B}}_{ex,2}=
\left[
\begin{array}{c c c}
0& 0 &0  \\
0& 0 &0 \\
0& 0 &0 \\
0& 0 &0 \\
\hline
0& 0 &0 \\
0& 0 &0 \\
0& 0 &0 \\
0& 0 &1 \\
\hline
0& 0 &0 \\
0& 0 &0 \\
0& 0 &0 \\
0& 0 &0 \\
\end{array}
\right], \
\overset{\triangleright}{\mathfrak{B}}_{ex,2}=
\left[
\begin{array}{c c c}
0& 0 &0  \\
0& 0 &0 \\
0& 0 &0 \\
0& 0 &0 \\
\hline
1& 0 &0 \\
0& 0 &0 \\
0& 0 &0 \\
0& 0 &0 \\
\hline
0& 0 &0 \\
0& 0 &0 \\
0& 0 &0 \\
0& 0 &0 \\
\end{array}
\right].
\end{IEEEeqnarray*}  
\end{small}
In the second layer, the unsatisfied CNs $c_1$ and $c_3$ contribute to the values of state variables $x_5$ and $x_8$, respectively. While the message from $c_1$ 
is updated at the $L_1$ layer (before $L_2$), the message from $c_3$ is updated only in $L_3$ (after $L_2$). So, the contribution from $c_1$ to $x_5$ 
is reflected through $\overset{\triangleright}{\mathfrak{B}}_{ex,2}$ which uses a newly updated (at the current iteration) version of CN message, 
while the contribution from $c_3$ to $x_8$ is via $\overset{\triangleleft}{\mathfrak{B}}_{ex,2}$ and uses the CN message updated at the end of the previous iteration.
\end{ex}


The last set of  matrices that we need in the linear state-space model of a row layered decoder are the gain matrices. In the flooding schedule,
the impact of external connections of the missatisfied CNs on the internal messages of a LETS is modelled by 
an iteration dependent scalar gain, $\bar{g}'_\ell$~\cite{But_SS}. This, in fact, corresponds to a diagonal gain matrix with
equal diagonal elements, i.e., $\bar{g}'_\ell\mathbf{I}_{m_s\times m_s}$, and assumes that all the missatisfied CNs have equal impact on the internal 
messages of the LETS. For a layered decoder, however, there can be a significant difference among the distribution of external messages of
different missatisfied CNs, depending on their layer and their external connections.
To model the missatisfied CN gains within a given layer $j$ at iteration $\ell$, we thus use the gain matrix $\mathfrak{G}_j^{(\ell)}$. 
To define $\mathfrak{G}_j^{(\ell)}$, we first consider the following $m_s\times m_s$ diagonal matrix:\begin{IEEEeqnarray*}{lCl"s}
\mathbf{G}^{(\ell)}=\left[
\begin{array}{c c c c}
\bar{g}_1^{'(\ell)}&0&\dotsb &0  \\
0&\bar{g}_2^{'(\ell)} & \ddots &\vdots\\
\vdots & \ddots &\ddots &0 \\
0&\dotsb &0&\bar{g}_{m_s}^{'(\ell)}
\end{array}
\right],
\end{IEEEeqnarray*}
whose diagonal entries are the linear gains corresponding to $m_s$ state variables.
(Note that the state variables corresponding to the same missatisfied CN have equal gains.) 
We then partition $\mathbf{G}^{(\ell)}$ into $J \times J$ block matrices. This partitioning corresponds to different layers of the LETS  and is similar 
to the partitioning of the transition matrix $\mathbf{A}$. As a result, we have
\begin{IEEEeqnarray*}{lCl"s}
\mathbf{G}^{(\ell)}=\left[
\begin{array}{c |c| c| c|c}
\mathbf{G}_1^{(\ell)}&\mathbf{0}&\mathbf{0}&\dotsb &\mathbf{0}  \\
\hline
\mathbf{0}&\mathbf{G}_2^{(\ell)}&\mathbf{0}& \dotsb &\mathbf{0}\\
\hline
\mathbf{0} &\mathbf{0} &\ddots &\ddots & \vdots\\
\hline
\vdots &\vdots &\ddots & \ddots &\mathbf{0}\\
\hline
\mathbf{0} &\mathbf{0}&\dotsb&\mathbf{0} & \mathbf{G}_J^{(\ell)}
\end{array}
\right].
\end{IEEEeqnarray*} 
Now, the gain matrix $\mathfrak{G}_j^{(\ell)}$ corresponding to layer $j$ is defined as $\mathbf{G}^{(\ell)}$ in which all the diagonal blocks, except the $j$th one, are replaced with the identity matrix. For example, for $j=2$,
 \begin{IEEEeqnarray*}{lCl"s}
\mathfrak{G}_2^{(\ell)}=\left[
\begin{array}{c |c| c| c|c}
\mathbf{I}&\mathbf{0}&\mathbf{0}&\dotsb &\mathbf{0}  \\
\hline
\mathbf{0}&\mathbf{G}_2^{(\ell)}&\mathbf{0}& \dotsb &\mathbf{0}\\
\hline
\mathbf{0} &\mathbf{0} &\mathbf{I} &\ddots & \vdots\\
\hline
\vdots &\vdots &\ddots & \ddots &\mathbf{0}\\
\hline
\mathbf{0} &\mathbf{0}&\dotsb&\mathbf{0} & \mathbf{I}
\end{array}
\right].
\end{IEEEeqnarray*}
   
\subsection{Linear State-Space Model of LETSs for Row Layered SPA} \label{subsec4.2}


Using the matrices presented in the previous subsection, we have the following linear state-space model of a LETS for row layered SPA:

\begin{small}
\begin{IEEEeqnarray*}{lCl"s}
\tilde{\mathbf{x}}^{(0,j)}=\bold{0} \ \ \ \ \ \ \ \  \ \ \ \ \ \ \ \ \ \ \ \ \  \ \ \ \ \ \text{\ \ \ \  \ \ \ \ \ \ for $\ell=0$, $ 1 \leq j \leq J$} \IEEEyesnumber\\
\label{lay_sj}
\tilde{\mathbf{x}}^{(\ell,j)}= \ \ \ \ \ \ \ \ \ \ \ \ \ \ \ \ \ \ \ \ \ \text{\ \ \ \ \ \ \ \ \ \ \ \ \ \ \ \ \ for $\ell\geq 1$, $ 1 \leq j \leq J$}\\  \  \mathfrak{G}_j^{(\ell)}\big (\mathfrak{A}_j\tilde{\mathbf{x}}^{(\ell-\delta_{j1}, j-1+J\delta_{j1})} +\mathfrak{B}_j\L+\overset{\triangleleft}{\mathfrak{B}}_{ex,j}\L_{ex}^{(\ell-1)}+\overset{\triangleright}{\mathfrak{B}}_{ex,j}\L_{ex}^{(\ell)} \big ).\IEEEyesnumber 
\end{IEEEeqnarray*} 
\end{small}

In the above model, the state vector at layer $j$ of iteration $\ell$ is denoted by $\tilde{\mathbf{x}}^{(\ell,j)}$. The vectors 
$\L$, $\L_{ex}^{(\ell-1)} $ and $\L_{ex}^{(\ell)} $ are the inputs to the model and represent channel LLRs and messages from unsatisfied CNs at iterations $\ell-1$ and $\ell$, respectively,
where $\L_{ex}^{(0)}=\bold{0}$. 
Also, $\delta_{j1}$, is the Kronecker delta function which is equal to $1$, when $j=1$, and is zero, otherwise. Equation (\ref{lay_sj}) implies that at the first layer of every iteration, the state vector is updated based on the state vector from the last layer of the previous iteration, while in the other layers, the updated states are a function of the state vector of the previous layer within the same iteration. 


Next, we use induction to derive a non-recursive equation for the state vector at the end of iteration $\ell$, i.e., $\tilde{\mathbf{x}}^{(\ell,J)}$.
For this, we first define some new matrices. The first matrix is defined as
\begin{equation}\label{A_jdowntok_l}
\tilde{\mathbf{A}}_{J \rightarrow k}^{(\ell)}=(\mathfrak{G}_J^{(\ell)}\mathcal{A}_J)  (\mathfrak{G}_{J-1}^{(\ell)}\mathcal{A}_{J-1}) \dots (\mathfrak{G}_k^{(\ell)}\mathcal{A}_k),
\end{equation}
which is, an ordered multiplication of the scaled version of transition matrices of different layers, $\mathfrak{G}_j^{(\ell)}\mathcal{A}_j$, 
from the layer with maximum index $J$ down to the $k$th layer, $k \geq 1$. 
By using $\tilde{\mathbf{A}}_{J \rightarrow k}^{(\ell)}$, we define three other matrices as follows:  

\begin{equation}
\tilde{\mathbf{B}}^{(\ell)}=\tilde{\mathbf{A}}_{J \rightarrow 2}^{(\ell)}\mathfrak{G}_1^{(\ell)}\mathfrak{B}_1+
\tilde{\mathbf{A}}_{J \rightarrow 3}^{(\ell)}\mathfrak{G}_2^{(\ell)}\mathfrak{B}_2+\dots+\mathfrak{G}_J^{(\ell)}\mathfrak{B}_J,
\end{equation}
\begin{equation}
\overset{\triangleright}{\mathbf{ B}}_{ex}^{(\ell)}=\tilde{\mathbf{A}}_{J \rightarrow 2}^{(\ell)}\mathfrak{G}_1^{(\ell)}\overset{\triangleright}{\mathfrak{B}}_{ex,1}+
\tilde{\mathbf{A}}_{J \rightarrow 3}^{(\ell)}\mathfrak{G}_2^{(\ell)}\overset{\triangleright}{\mathfrak{B}}_{ex,2}+\dots+\mathfrak{G}_J^{(\ell)}\overset{\triangleright}{\mathfrak{B}}_{ex,J},
\end{equation}
\begin{equation}\label{Bex_backtriangel_l}
\overset{\triangleleft}{\mathbf{ B}}_{ex}^{(\ell)}=\tilde{\mathbf{A}}_{J \rightarrow 2}^{(\ell)}\mathfrak{G}_1^{(\ell)}\overset{\triangleleft}{\mathfrak{B}}_{ex,1}+
\tilde{\mathbf{A}}_{J \rightarrow 3}^{(\ell)}\mathfrak{G}_2^{(\ell)}\overset{\triangleleft}{\mathfrak{B}}_{ex,2}+\dots+\mathfrak{G}_J^{(\ell)}\overset{\triangleleft}{\mathfrak{B}}_{ex,J}.
\end{equation}
Finally, the non-recursive formula of the state vector at the end of iteration $\ell$ is derived as   
\begin{IEEEeqnarray*}{l}
\label{ss_lay_main}
\tilde{\mathbf{x}}^{(\ell,J)}=\sum_{i=1}^\ell \big(\prod_{j=i+1}^{\underrightarrow{\ell}}\tilde{\mathbf{A}}_{J \rightarrow 1}^{(j)}\big)\tilde{\mathbf{B}}^{(i)}\L
\ +\sum_{i'=1}^\ell \big(\prod_{j'=i'+1}^{\underrightarrow{\ell}}\tilde{\mathbf{A}}_{J \rightarrow 1}^{(j')}\big)\big(\overset{\triangleleft}{\mathbf{ B}}_{ex}^{(i')}\L_{ex}^{(i'-1)}+\overset{\triangleright}{\mathbf{ B}}_{ex}^{(i')}\L_{ex}^{(i')}\big)\:,
\IEEEyesnumber
\end{IEEEeqnarray*}
where the right arrow on top of the product sign denotes the matrix product applied from the left. For example,
\begin{equation*}
\prod_{j=i+1}^{\underrightarrow{\ell}}\tilde{\mathbf{A}}_{J \rightarrow 1}^{(j)}=\tilde{\mathbf{A}}_{J \rightarrow 1}^{(\ell)}\tilde{\mathbf{A}}_{J \rightarrow 1}^{(\ell-1)}\dots \tilde{\mathbf{A}}_{J \rightarrow 1}^{(i+1)}\:.
\end{equation*} 
In the rest of the paper, since we only consider the state vector $\tilde{\mathbf{x}}^{(\ell,J)}$ at the end of each iteration, for simplicity, we may drop the index $J$, and represent the state vector by $\tilde{\mathbf{x}}^{(\ell)}$. 

In Subsection \ref{ProbErrLay}, Equation (\ref{ss_lay_main}) will be used to estimate the probability of a LETS failure in the layered decoder.  

\subsection{The Application of DE to the Layered Decoder} \label{subsec4.3}
In the state-space analysis of layered decoders, we use DE to calculate the distribution of messages entering the LETS from the rest of the Tanner graph.  
%
To apply DE to a layered decoder of a QC-LDPC code, for each iteration, one needs to derive two distributions corresponding to each edge $e$ of the base graph.
The two distributions correspond to the messages passed from CNs (VNs) to VNs (CNs) of the Tanner graph that are connected by the cluster of edges 
associated with $e$. If edge $e$ connects VN $i$ to CN $j$ in the base graph, we say that the cluster of edges corresponding to $e$ connect Type-$i$ VNs to Type-$j$ CNs
in the Tanner graph. The computation tree for the calculation of such distributions is not only a function of the base graph but also depends on the order in
 which the messages of different row layers are updated. To identify these distributions, we use the notations $\psi^{[i\rightarrow j]}_{\ell}$ and $\psi^{[i\leftarrow j]}_{\ell}$ to denote the probability distribution of the messages from Type-$i$ VNs to Type-$j$ CNs and vice versa in iteration  $\ell$, respectively.

\begin{ex}
\label{exer}
Consider the base graph of Fig. \ref{BaseTannerGraph} corresponding to the following base matrix:
\begin{equation}\label{Hb_example}
\mathbf{H}_b=\left[
\begin{array}{c c c c}
1&1&1&0\\
1&0&1&1\\
0&1&0&1
\end{array}
\right]\:.  
\end{equation}
\begin{figure}
\centering
\includegraphics[width=2.8in]{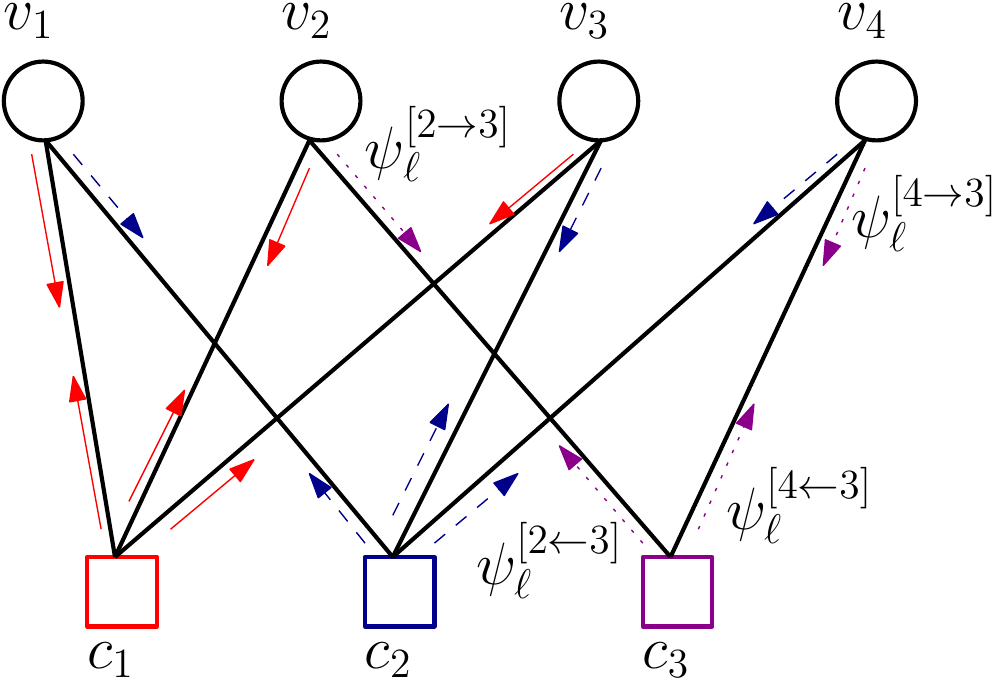}
\caption{Base graph of Example~\ref{exer}: The solid (red), dashed (blue) and dotted (purple) arrows, correspond to messages passed within layers $L_1$ to $L_3$, respectively. The symbols next to the dotted arrows 
show the probability distributions of messages updated in the $3$rd layer of the $\ell$th iteration.}
\label{BaseTannerGraph}
\end{figure}

The messages updated in different layers are identified on the base graph with different colors and line types. 
Suppose that the three layers are updated in accordance to the increasing row index. The computation trees 
of some of the CN to VN messages for the first iteration are demonstrated in Figs. \ref{1stLayer} to \ref{3rdLayer}. 
As can be seen, the computation trees are different depending on the type of the VN at the root, and the type of the connecting CN.
In particular, the trees have different depths within the same iteration. The trees will also change by changing the order in which the layers are updated.
This is unlike the flooding schedule, for which the depth of all computation trees at iteration $\ell$ is $2\ell$ regardless of the type of VNs or CNs, or any permutation of row layers. This implies that, for a layered decoder, the message distributions change with changing the order in which the row layers are updated.

\begin{figure}
\centering
\subfloat[][Messages sent from Type-$1$ CNs to Type-$1$ VNs in the $1$st layer of the $1$st iteration.]{
\includegraphics[width=0.7in]{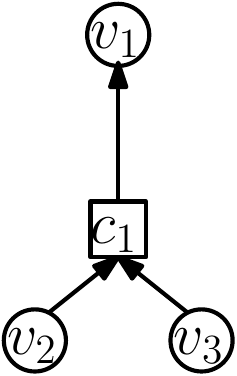}
\label{1stLayer}
}\ \ \ \ \ \  \ \ \ \ \ \
\subfloat[][Messages sent from Type-$2$ CNs to Type-$3$ VNs in the $2$nd layer of the $1$st iteration.]{
\includegraphics[width=0.8in]{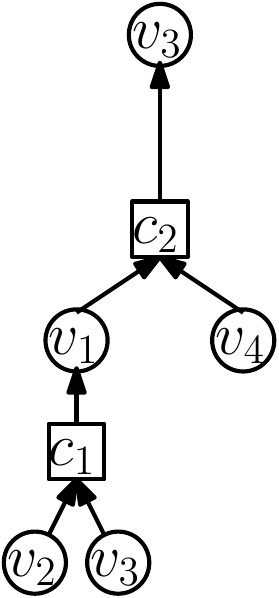}
\label{2ndLayer}}\ \ \ \ \ \  \ \ \ \ \ \ 
\subfloat[][Messages sent from Type-$2$ CNs to Type-$4$ VNs in the $2$nd layer of the $1$st iteration.]{
\includegraphics[width=0.9in]{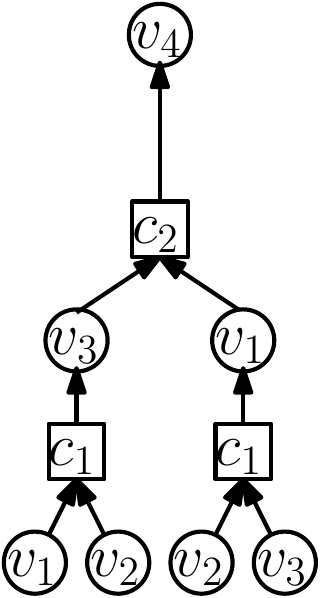}
\label{2ndLayer_B}} \ \ \ \ \ \  \ \ \ \ \ \
\subfloat[][Messages sent from Type-$3$ CNs to Type-$2$ VNs in the $3$rd layer of the $1$st iteration.]{
\includegraphics[width=0.9in]{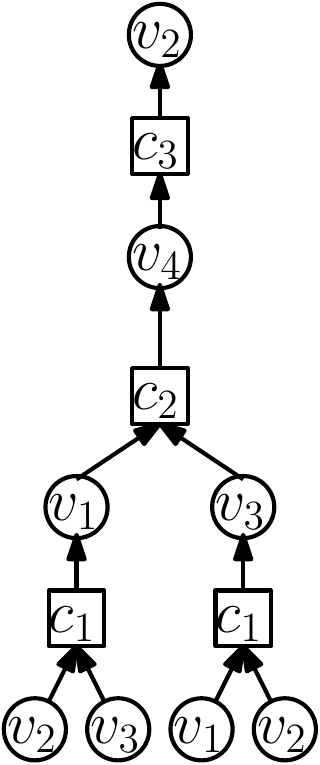}
\label{3rdLayer}}
\caption{The computation trees of certain messages at different layers of the $1$st iteration.}
\end{figure} 

\end{ex}
Consider a LETS whose missatisfied CNs belong to $J$ layers, $L_1,\ldots,L_J$. Consider a permutation $\pi$ over the set of integer numbers $\{1,\ldots,J\}$.
Assume that the messages within the row layers of the parity-check matrix (within one iteration) are updated such that CNs with type $\pi(1)$ within the LETS 
are updated first, followed by CNs with type $\pi(2)$ and so on. This means that missatisfied CNs of the LETS that belong to $L_i$ are of Type-$\pi(i)$, for $i=1,\ldots,J$.    
To obtain the distribution of incoming messages to the LETS from the rest of the Tanner graph using DE, one needs to know not only the topology of the TS, but also the following information:
\begin{enumerate}
\item types of all missatisfied and unsatisfied CNs of the TS as well as the layer to which each CN of the TS belongs (the latter corresponds to knowing the permutation $\pi$), 
\item the type and layer of all the variable nodes that are externally connected to all the missatisfied CNs, and the type and layer of all internal VNs connected to unsatisfied CNs. 
\end{enumerate}
We refer to the above information as {\em TS layer profile} (TSLP).
%
We note that the TSLP can be different for isomorphic TSs, thus, resulting in different harmfulness for such TSs. 


\begin{ex}\label{example_TSLP}
Consider the $(5,3)$ LETS of Fig. \ref{(5,3)lay_col}, and assume that the layers of the code are updated in the increasing order of row indices. Variable nodes $v_1$ and $v_5$ 
are of Type-3 and Type-5, respectively, while VNs $v_2$, $v_3$ and $v_4$, are of Type-1. As an example of a missatisfied CN, consider $c_j$. Check node $c_j$ is of Type-$1$ and is in $L_1$. 
Also, the $d_{c_j}-2=3$ external VNs of $c_j$ are of Type-$2$, -$4$ and -$5$, respectively. As a result, to derive the gain value corresponding to $c_j$ in the linear state-space model, 
one needs the probability distributions of messages from Type-$2$, -$4$ and -$5$ VNs sent to Type-$1$ CNs from the DE results. 
As an example of an unsatisfied CN, consider $c_2$.  Check node $c_2$ is of Type-$2$ and is in $L_2$, and is connected to the internal VN $v_4$, which is of Type-$1$.
The probability distribution of the messages sent from $c_2$ to $v_4$, thus follows the DE results for messages from Type-$2$ CNs to Type-$1$ VNs.
\end{ex}  

The following lemma follows directly from the fact that a QC-LDPC code is a cyclic lifting of the corresponding base code.
 
\begin{lem}\label{TSLP_lem}
Let $p$ be the lifting degree of a QC-LDPC code ${\cal C}$, and let $\mathcal{S}$ be a LETS of ${\cal C}$.
Then, the TSLP of $\mathcal{S}$ is the same as the TSLP of any of the isomorphic LETSs whose VNs are obtained by 
circularly shifting (modulo $p$) the VNs of $\mathcal{S}$.
\end{lem}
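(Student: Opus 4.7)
The plan is to exploit the automorphism of the Tanner graph induced by the cyclic lifting structure. Since $\mathbf{H}$ is an $m_b \times n_b$ array of $p \times p$ CPMs (and zero matrices), the Tanner graph can be labeled so that VN $v_i$ corresponds to a pair $(k, t)$ with $k \in \{1,\ldots,n_b\}$ identifying the column block (i.e., the VN type) and $t \in \{0,\ldots,p-1\}$ identifying the position within the block. An analogous labeling $(z, t')$ applies to CNs, where $z$ identifies the row block (the CN type, and equivalently the row layer). For any fixed $s \in \{0,\ldots,p-1\}$, I would define the shift map $\sigma_s$ that sends VN $(k,t)$ to VN $(k, t+s \bmod p)$ and CN $(z,t')$ to CN $(z, t'+s \bmod p)$.

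The first technical step is to verify that $\sigma_s$ is a graph automorphism. Because each nonzero block of $\mathbf{H}$ is a CPM, it commutes with the cyclic shift, so the edge set is preserved. Crucially, $\sigma_s$ preserves the type of every VN and every CN since it keeps $k$ and $z$ unchanged. Consequently it also preserves the layer of each CN, because in a row layered schedule the layer of a CN is exactly its row block index $z$.

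Next I would apply $\sigma_s$ to the LETS $\mathcal{S}$. Since $\sigma_s$ is a graph isomorphism, the induced subgraph of $\sigma_s(\mathcal{S})$ is isomorphic to that of $\mathcal{S}$, the missatisfied CNs map to missatisfied CNs, and the unsatisfied CNs map to unsatisfied CNs. By the type preservation above, the missatisfied and unsatisfied CNs of $\sigma_s(\mathcal{S})$ have the same types and therefore lie in the same layers $L_1,\ldots,L_J$ as those of $\mathcal{S}$. Similarly, the external VN connections of each missatisfied CN, and the internal VN connections of each unsatisfied CN, are carried by $\sigma_s$ to VNs of identical types (and, by the same argument applied to CNs, the layer structure on the neighborhood is preserved).

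Combining these observations, every item enumerated in the definition of TSLP, namely the topology, the types and layers of missatisfied and unsatisfied CNs, and the types and layers of the external and internal VNs adjacent to them, is invariant under $\sigma_s$. The permutation $\pi$ describing the updating order of missatisfied CN types is therefore also unchanged. The main subtlety, and the only point that requires care, is to make sure that the shift acts consistently on both the VN and CN sides so that the CPM structure is respected; once this is set up, type and layer preservation follow immediately and the lemma is established. The shifted LETSs are exactly those obtained by circularly shifting the VNs of $\mathcal{S}$ modulo $p$, completing the argument.
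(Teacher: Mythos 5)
Your proof is correct and takes essentially the same approach as the paper, which simply states that the lemma ``follows directly from the fact that a QC-LDPC code is a cyclic lifting of the corresponding base code.'' Your write-up makes that one-line justification explicit: the cyclic shift $\sigma_s$ is a Tanner-graph automorphism (because CPMs commute with the shift) that preserves every VN column-block index and CN row-block index, hence every type and layer entering the definition of TSLP.
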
  
%
\subsection{Missatisfied CN Gain Model in Layered Decoders}\label{gain_lay_section} 
\begin{figure}
\centering
\includegraphics[width=1.5in]{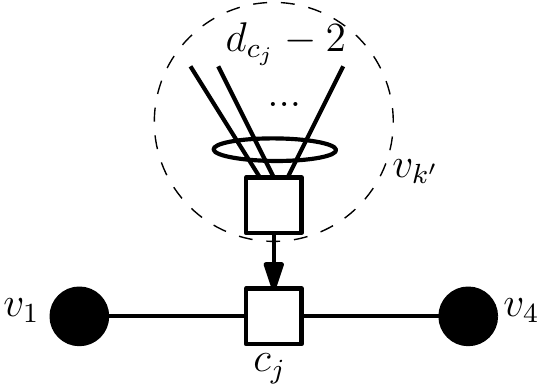}
\caption{The external messages of a missatisfied CN are represented by a virtual VN.}
\label{mis_sat_53}
\end{figure}
In Subsection \ref{subsec4.1}, we introduced the matrix $\mathbf{G}^{(\ell)}$ whose diagonal elements are the gains associated with missatisfied CNs at different layers of the $\ell$th iteration.  
Due to the difference in the distributions of external messages entering different missatisfied CNs, the gain for each missatisfied CN needs to be calculated separately.
For simplicity, the external connections of each missatisfied CN are represented by a virtual VN, as shown in Fig.~\ref{mis_sat_53}.
Without loss of generality, we consider VNs $v_1$ and $v_4$ of the $(5,3)$ LETS of Fig. \ref{(5,3)lay_col} with the missatisfied CN $c_j$ connecting them (as shown in Fig. \ref{mis_sat_53}). 
The virtual VN is denoted by $v_{k'}$ in Fig. \ref{mis_sat_53}.
The message $L_{\ell}^{[4 \leftarrow j]}$ can then be calculated as 
\begin{equation}
L_{\ell}^{[4 \leftarrow j]}=f(L_{\ell}^{[1 \rightarrow j]}, L_{\ell}^{[k' \rightarrow j]})\:,
\end{equation}
where $f(\cdot,\cdot)$ is the box-plus operation given in (\ref{boxplus_pair}).
For small values of $L_{\ell}^{[1 \rightarrow j]}$, i.e., $L_{\ell}^{[1 \rightarrow j]} \approx 0$,
the CN linear estimation based on Taylor expansion can be obtained as
\begin{equation}
L_{\ell}^{[4 \leftarrow j]} \approx f_{x_1}(0,L_{\ell}^{[k' \rightarrow j]})L_{\ell}^{[1 \rightarrow j]}\:,
\end{equation}
where $f_{x_1}$ represents the partial derivative of $f(x_1,x_2)$ with respect to $x_1$.  We thus have
\begin{equation}
\begin{split}
f_{x_1}(0,L_{\ell}^{[k' \rightarrow j]}) = \ln\Bigg(\dfrac{e^{L_{\ell}^{[k' \rightarrow j]}}-1}{e^{L_{\ell}^{[k' \rightarrow j]}}+1} \Bigg)  &\\ =\tanh \Bigg( \dfrac{{L_{\ell}^{[k' \rightarrow j]}}}{2}\Bigg)\:.
\end{split}
\end{equation}
The average gain corresponding to CN $c_j$, $\bar{g}^{(\ell)}_{c_j}$, is then obtained as
\begin{equation}
\bar{g}^{(\ell)}_{c_j}=\int_{-\infty}^{\infty}\tanh(\frac{\lambda}{2})\hat{\psi}^{[k'\rightarrow j]}_\ell(\lambda)d\lambda,
\label{g_lay}
\end{equation}
in which $\hat{\psi}^{[k'\rightarrow j]}_\ell$ is the probability distribution of the message from virtual VN, $v_{k'}$, to the missatisfied CN $c_j$ at iteration $\ell$. 
Given the TSLP of a LETS, this distribution can be calculated for different missatisfied CNs using DE. 
It is noted that there are $\frac{m_s}{2}$ missatisfied CNs in a LETS subgraph, and that each of the gains calculated by \eqref{g_lay} should be used  for the two state variables that have the corresponding CN in common. 

We further modify Equation \eqref{g_lay} to take into account the effect of polarity inversion, i.e., whenever an error occurs in the messages from the virtual VN, $v_{k'}$, for $k'=\{1,\dots,\frac{m_s}{2}\}$, the polarity of the state variables passing through the corresponding missatisfied CN is altered. The probability of polarity inversion is thus calculated by
\begin{equation}
P_{inv,\ell}^{[k'\rightarrow j]}=\int_{-\infty}^{0}\hat{\psi}^{[k'\rightarrow j]}_\ell(\lambda)d\lambda\:.
\label{Pinv_lay}
\end{equation}
To incorporate the polarity inversion in the model, we modify the average gains as follows:
\begin{equation}
\bar{g}'^{(\ell)}_{c_j}=(1-P_{inv,\ell}^{[k'\rightarrow j]})\bar{g}^{(\ell)}_{c_j}\:,
\end{equation}  
where $\bar{g}^{(\ell)}_{c_j}$ is given by (\ref{g_lay}).

\subsection{Spectral Properties of LETS System Matrices in Layered Decoders } \label{SpectralProp_SubSec}
A careful study of Equation \eqref{ss_lay_main} reveals that the iteration dependent matrix
\begin{equation}
\tilde{\mathbf{A}}_{J \rightarrow 1}^{(\ell)}=(\mathfrak{G}_J^{(\ell)}\mathcal{A}_J)  (\mathfrak{G}_{J-1}^{(\ell)}\mathcal{A}_{J-1}) \dots  (\mathfrak{G}_1^{(\ell)}\mathcal{A}_1),
\end{equation}
plays a crucial role in the evolution of the state vector, and consequently in the failure rate, of a LETS. 
In the event of a failure, as the iterations progress, the average gains of missatisfied CNs tend to one. We thus have
\begin{equation}\label{Atild_iterDepend_AtildIterIndepend}
\displaystyle{\lim_{\ell \to \infty}}\tilde{\mathbf{A}}_{J \rightarrow 1}^{(\ell)}=\mathcal{A}_J\mathcal{A}_{J-1} \dots  \mathcal{A}_1 \stackrel{\Delta}{=} \tilde{\mathbf{A}}_{J \rightarrow 1}.\\
\end{equation} 
The matrix $\tilde{\mathbf{A}}_{J \rightarrow 1}$, obtained by the multiplication of transition matrices of different layers of the LETS, is iteration independant. In the state-space analysis of the layered decoder, $\tilde{\mathbf{A}}_{J \rightarrow 1}$ plays a similar role as the transition matrix $\mathbf{A}$ does in the model for a flooding decoder. In the following, we refer to $\tilde{\mathbf{A}}_{J \rightarrow 1}$ as the transition matrix of the layered decoder, and investigate its spectral properties. 
In particular, we demonstrate that, unlike the case for flooding where for majority of LETSs, the transition matrix $\mathbf{A}$ is irreducible, 
the transition matrix $\tilde{\mathbf{A}}_{J \rightarrow 1}$ for layered decoding is reducible. We also show that the spectral radius of $\tilde{\mathbf{A}}_{J \rightarrow 1}$ is always larger than that of $\mathbf{A}$.

Using the systematic form of system model matrices as described in Subsection~\ref{subsec4.1}, the first column block of the transition matrix  $\tilde{\mathbf{A}}_{J \rightarrow 1}$, consisting of $n_{L_1}$ columns, is zero. We thus have the following result.

\begin{lem}
\label{Atild_reducible_lem}
The transition matrix  $\tilde{\mathbf{A}}_{J \rightarrow 1}$ of a LETS in a layered decoder is reducible.
\end{lem}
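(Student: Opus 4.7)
The plan is to verify the assertion made in the sentence immediately preceding the lemma---that the first $n_{L_1}$ columns of $\tilde{\mathbf{A}}_{J\rightarrow 1}$ vanish---and then read off reducibility from Definition~\ref{perm_irred_mydef} using the trivial permutation $\mathbf{P}=\mathbf{I}$.

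For the zero-column-block claim I would induct on the number of factors in $\tilde{\mathbf{A}}_{J\rightarrow 1}=\mathcal{A}_J\mathcal{A}_{J-1}\cdots\mathcal{A}_1$. The base step is a direct inspection of $\mathcal{A}_1$ in the systematic labeling of Subsection~\ref{subsec4.1}: $\mathcal{A}_1$ is a $J\times J$ block matrix whose first row block coincides with the first row block of $\mathbf{A}$, and whose $j$th row block for $j\geq 2$ contains an identity block only in its own diagonal position. Since the first row block of $\mathbf{A}$ begins with $\mathbf{A}_{1,1}=\mathbf{0}$ (all diagonal blocks of $\mathbf{A}$ vanish in the systematic form), and the identity blocks of rows $2,\ldots,J$ sit in columns disjoint from the first column block, every $(k,1)$ sub-block of $\mathcal{A}_1$ is zero, i.e., the first $n_{L_1}$ columns of $\mathcal{A}_1$ are identically zero. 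The inductive step is the elementary observation that if the first $n_{L_1}$ columns of $\mathbf{M}$ are zero, then so are the first $n_{L_1}$ columns of $\mathbf{N}\mathbf{M}$ for any conformable $\mathbf{N}$. Iterating this through the successive left-multiplications by $\mathcal{A}_2,\mathcal{A}_3,\ldots,\mathcal{A}_J$ transfers the zero-column-block property to $\tilde{\mathbf{A}}_{J\rightarrow 1}$.

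With that established, reducibility is a one-line verification. Partitioning $\tilde{\mathbf{A}}_{J\rightarrow 1}$ by separating the first $n_{L_1}$ indices from the remaining $m_s-n_{L_1}$ gives
$$\tilde{\mathbf{A}}_{J\rightarrow 1}=\begin{pmatrix}\mathbf{0}_{n_{L_1}\times n_{L_1}} & \mathbf{M}_c \\ \mathbf{0} & \mathbf{M}_b\end{pmatrix},$$
which is already in block upper triangular form. Applying Definition~\ref{perm_irred_mydef} with $\mathbf{P}=\mathbf{I}$, $\mathbf{M}_a=\mathbf{0}_{n_{L_1}\times n_{L_1}}$, and $\mathbf{M}_b$ of order $m_s-n_{L_1}$ yields reducibility, both diagonal blocks being nontrivial whenever $J\geq 2$. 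The degenerate case $J=1$ collapses to $\tilde{\mathbf{A}}_{1\rightarrow 1}=\mathcal{A}_1=\mathbf{0}$, which is trivially reducible since $m_s\geq 2$ for any LETS. I do not anticipate any real obstacle; the only point that needs care is bookkeeping the systematic labeling so that the zero column block lands in the upper-left corner under the identity permutation, avoiding the need for an explicit symmetric permutation.
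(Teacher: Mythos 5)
Your proposal is correct and follows essentially the same route as the paper, which simply observes that in the systematic labeling the first $n_{L_1}$ columns of $\tilde{\mathbf{A}}_{J\rightarrow 1}$ vanish (because the rightmost factor $\mathcal{A}_1$ has a zero first column block) and then reads off reducibility from the block upper triangular form. Your write-up merely fills in the bookkeeping that the paper leaves implicit; no new idea or different decomposition is involved.
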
 


\begin{ex}
The transition matrix, $\tilde{\mathbf{A}}_{3 \rightarrow 1}$, of the $(5,3)$ LETS in Fig. \ref{(5,3)lay_col} (also considered in Examples $2$ and $3$) is equal to  
\end{ex}

\begin{small}
\begin{equation}\label{Atild_3arrow1}
\begin{split}
&\tilde{\mathbf{A}}_{3 \rightarrow 1}=\mathbcal{A}_3\mathbcal{A}_2\mathbcal{A}_1=
\left[
\begin{array}{c|c|c}
\mathbf{0} &\mathbf{A}_{1,2}&\mathbf{A}_{1,3}\\
\hline
\mathbf{0}&\mathbf{A}_{2,1}\mathbf{A}_{1,2}& \mathbf{A}_{2,1}\mathbf{A}_{1,3}+\mathbf{A}_{2,3}\\
\hline
\mathbf{0}& \mathbf{A}_{3,1}\mathbf{A}_{1,2}+\mathbf{A}_{3,2}\mathbf{A}_{2,1}\mathbf{A}_{1,2}&\mathbf{A}_{3,1}\mathbf{A}_{1,3}+\mathbf{A}_{3,2}\mathbf{A}_{2,1}\mathbf{A}_{1,3}+\\
& &\mathbf{A}_{3,2}\mathbf{A}_{2,3} 
\end{array}
\right]\\
& \ \ \ \ \ \ \ \ \ \ \ \ \ \ \ \ \ \ =\left[
\begin{array}{c c c c|c c c c|c c c c}
0&0&0&0&0&0&0&1&0&1&0&0\\
0&0&0&0&0&0&0&0&0&0&0&1\\
0&0&0&0&0&0&1&0&0&0&0&0\\
0&0&0&0&1&0&0&0&0&0&1&0\\
\hline
0&0&0&0&0&0&0&0&1&0&0&0\\
0&0&0&0&0&0&1&0&0&0&1&0\\
0&0&0&0&0&0&0&0&0&1&0&1\\
0&0&0&0&1&0&0&0&0&0&1&0\\
\hline
0&0&0&0&1&0&0&0&0&0&1&1\\
0&0&0&0&0&0&1&0&0&0&1&0\\
0&0&0&0&0&0&0&1&0&1&0&0\\
0&0&0&0&0&0&1&0&1&0&0&0
\end{array}
\right]. 
\end{split}
\end{equation}
\end{small}

\begin{theo}{\cite[p.~50]{varga}}\label{ferob_normal_theo}
For every real $n\times n$ matrix $\mathbf{M}$, there exists an $n\times n$ permutation matrix $\mathbf{P}$ such that 
\begin{IEEEeqnarray*}{lCl"s}\label{FroNormForm}
\mathbf{P}\mathbf{M}\mathbf{P}^T=\left[
\begin{array}{c c c c}
\mathbf{M}_{1,1}&\mathbf{M}_{1,2}&\dotsb &\mathbf{M}_{1,t}  \\
0&\mathbf{M}_{2,2} & \dots &\mathbf{M}_{2,t}\\
\vdots & \ddots &\ddots &\vdots \\
0&\dotsb &0&\mathbf{M}_{t,t}
\end{array}
\right] ,
\IEEEyesnumber
\end{IEEEeqnarray*}
is a $t \times t$ array of matrices, where each of the diagonal matrices $\mathbf{M}_{i,i}$, $1\leq i \leq t$, is either irreducible or a $1\times 1$ null matrix (zero).  
\end{theo}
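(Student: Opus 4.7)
The plan is to prove this theorem via the digraph representation of $\mathbf{M}$ combined with strongly connected component (SCC) decomposition. First, I would associate to $\mathbf{M}$ its directed graph $G_{\mathbf{M}}$ on vertex set $\{1,\ldots,n\}$, placing an arc from $i$ to $j$ whenever $M_{ij}\neq 0$. A standard reformulation of Definition~\ref{perm_irred_mydef} is that $\mathbf{M}$ is irreducible iff $G_{\mathbf{M}}$ is strongly connected: a symmetric permutation to the form (\ref{Perm_irreducible_def}) exists precisely when the vertex set of $G_{\mathbf{M}}$ splits into two nonempty parts with no arcs leaving the first for the second. This translates the entire problem into a purely graph-theoretic one.

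Next, I would decompose $G_{\mathbf{M}}$ into its SCCs $C_1,\ldots,C_t$. The condensation $G_{\mathbf{M}}^{\star}$, whose nodes are the SCCs and whose arcs record inter-component reachability, is a directed acyclic graph, so it admits a topological ordering. I would relabel the SCCs according to such an ordering so that whenever there is an arc in $G_{\mathbf{M}}^{\star}$ from $C_i$ to $C_j$ we have $i<j$. The permutation matrix $\mathbf{P}$ is then defined by listing the vertices of $G_{\mathbf{M}}$ so that all vertices of $C_1$ come first (in any internal order), followed by all vertices of $C_2$, and so on up to $C_t$.

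With this labeling, $\mathbf{PMP}^T$ has exactly the block structure displayed in (\ref{FroNormForm}). A nonzero entry in a block $\mathbf{M}_{i,j}$ with $i>j$ would witness an arc of $G_{\mathbf{M}}^{\star}$ from $C_i$ to an earlier $C_j$, contradicting the topological ordering and yielding the zero blocks below the diagonal. For each diagonal block $\mathbf{M}_{i,i}$, the subgraph of $G_{\mathbf{M}}$ induced on $C_i$ is strongly connected by definition, so either $|C_i|\geq 2$, in which case the graph characterization forces $\mathbf{M}_{i,i}$ to be irreducible, or $|C_i|=1$; in the singleton case $\mathbf{M}_{i,i}$ is $1\times 1$, equal to the null matrix precisely when the single vertex carries no self-loop (otherwise it is irreducible under the standard convention for $1\times 1$ matrices).

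The main obstacle I would anticipate is the careful verification that the algebraic notion of irreducibility in Definition~\ref{perm_irred_mydef} coincides with strong connectivity of $G_{\mathbf{M}}$, since the reverse direction requires showing that any reachability barrier yields an explicit symmetric permutation into the block triangular form (\ref{Perm_irreducible_def}). A purely inductive alternative is possible: if $\mathbf{M}$ is already irreducible or $1\times 1$ null then $t=1$ suffices; otherwise a symmetric permutation produces a $2\times 2$ block triangular form, and the induction hypothesis applied to each diagonal block produces the desired refinement, which one then glues together with a composed permutation. I would prefer the SCC route because it exposes the combinatorial meaning of the $t$ diagonal blocks, but the inductive route is more self-contained and would be the backup if the graph-theoretic bookkeeping becomes cumbersome.
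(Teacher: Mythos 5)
The paper does not actually prove this statement: it is imported verbatim as a classical result with a citation to Varga's \emph{Matrix Iterative Analysis}, so there is no in-paper argument to compare against. Your proof via the strongly connected component decomposition is correct and complete in outline: the translation of Definition~\ref{perm_irred_mydef} into strong connectivity of the digraph $G_{\mathbf{M}}$ is the standard equivalence (and you correctly flag that the nontrivial direction is turning a reachability barrier --- e.g.\ the set of vertices reachable from a fixed vertex versus its complement --- into an explicit symmetric permutation achieving the form (\ref{Perm_irreducible_def})); the condensation being acyclic gives the topological order, which forces the sub-diagonal blocks to vanish; and your handling of singleton components with or without self-loops matches the ``irreducible or $1\times 1$ null'' dichotomy in the statement. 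This differs from the classical route in the cited source, which is the purely inductive one you sketch as a backup: if $\mathbf{M}$ is reducible, symmetrically permute it into a $2\times 2$ block upper triangular form and recurse on the diagonal blocks, composing the permutations. The inductive argument is shorter and needs no graph theory, but it does not by itself explain what the diagonal blocks are; your SCC argument identifies them canonically as the strongly connected components and makes the uniqueness (up to ordering and internal relabeling) of the diagonal blocks transparent, which is the property the paper actually exploits later when it argues that the FNF of $\tilde{\mathbf{A}}_{J \rightarrow 1}$ has exactly one irreducible diagonal block. Either route is acceptable; there is no gap.
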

The form of a matrix given by (\ref{FroNormForm}) is called \textit{Frobenius normal form} or FNF, in brief. In the above theorem, if $\mathbf{M}$ is irreducible, then $t=1$.

\begin{theo}\cite [p.~51]{varga}
\label{rho_reducible_nonNeg_theo}  
Let $\mathbf{M}$ be an $n\times n$ non-negative matrix. Then,
\begin{itemize}
        \item[(a)] $\mathbf{M}$ has a real non-negative eigenvalue equal to its spectral radius $\rho(\mathbf{M})$. In addition, the corresponding real eigenvalue is positive unless $\mathbf{M}$ is reducible and its FNF is strictly upper triangular.\footnote{A matrix $\mathbf{M}$ is called strictly upper triangular if $\mathbf{M}(i,j)=0$ for $i\geq j$, where $\mathbf{M}(i,j)$ denotes the element of $\mathbf{M}$ in the $i$th row and the $j$th column.}
        \item[(b)] The associated eigenvector of $\rho(\mathbf{M})$ is non-negative.
    \end{itemize} 
\end{theo}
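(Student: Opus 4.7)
The plan is to reduce the general non-negative case to the irreducible case already handled by the Perron--Frobenius Theorem (Theorem~\ref{Perron_Frobenius_theo}), via a perturbation-and-limit argument, and then use the Frobenius Normal Form (Theorem~\ref{ferob_normal_theo}) to pin down exactly when the spectral radius fails to be strictly positive.

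First, to establish the existence of a non-negative real eigenvalue equal to $\rho(\mathbf{M})$ together with a non-negative eigenvector, I would perturb $\mathbf{M}$ into the strictly positive (hence irreducible) regime. Specifically, let $\mathbf{J}$ denote the $n\times n$ all-ones matrix and, for $\epsilon>0$, set $\mathbf{M}_\epsilon=\mathbf{M}+\epsilon\mathbf{J}$. Since $\mathbf{M}_\epsilon>\mathbf{0}$ entrywise, it is irreducible, so Theorem~\ref{Perron_Frobenius_theo} furnishes a positive eigenvalue $\lambda_\epsilon=\rho(\mathbf{M}_\epsilon)$ and a strictly positive eigenvector $\mathbf{v}_\epsilon$, which I normalize so that $\|\mathbf{v}_\epsilon\|_1=1$. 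Continuity of the spectral radius as a function of the matrix entries gives $\lambda_\epsilon\to\rho(\mathbf{M})$ as $\epsilon\downarrow 0$. The vectors $\mathbf{v}_\epsilon$ lie in the compact simplex $\{\mathbf{x}\geq\mathbf{0}:\|\mathbf{x}\|_1=1\}$, so along a subsequence $\epsilon_k\downarrow 0$ they converge to some $\mathbf{v}\geq\mathbf{0}$ with $\|\mathbf{v}\|_1=1$. Passing to the limit in $\mathbf{M}_{\epsilon_k}\mathbf{v}_{\epsilon_k}=\lambda_{\epsilon_k}\mathbf{v}_{\epsilon_k}$ yields $\mathbf{M}\mathbf{v}=\rho(\mathbf{M})\mathbf{v}$, which proves part~(b) and the first half of part~(a).

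Next, to prove the positivity statement in part~(a), I would use the block-triangular structure from Theorem~\ref{ferob_normal_theo}. Write $\mathbf{P}\mathbf{M}\mathbf{P}^T$ in FNF as in \eqref{FroNormForm}; since similarity preserves the spectrum, the eigenvalues of $\mathbf{M}$ are exactly the union (with multiplicity) of the eigenvalues of the diagonal blocks $\mathbf{M}_{i,i}$, and hence $\rho(\mathbf{M})=\max_i\rho(\mathbf{M}_{i,i})$. If every diagonal block is a $1\times 1$ null block, then $\mathbf{P}\mathbf{M}\mathbf{P}^T$ is strictly upper triangular (the case excluded in the statement) and all eigenvalues are zero, giving $\rho(\mathbf{M})=0$. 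Otherwise at least one $\mathbf{M}_{k,k}$ is irreducible and non-null: if it has size $\geq 2$, Theorem~\ref{Perron_Frobenius_theo}(a) gives $\rho(\mathbf{M}_{k,k})>0$; if it has size $1$, irreducibility and non-nullity force its sole entry to be positive, again giving $\rho(\mathbf{M}_{k,k})>0$. In either subcase $\rho(\mathbf{M})\geq\rho(\mathbf{M}_{k,k})>0$, completing part~(a).

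The main obstacle is the limit step. One must be careful that the eigenvector selection $\mathbf{v}_\epsilon$ (which is only determined up to positive scaling by Theorem~\ref{Perron_Frobenius_theo}) is normalized in a way that makes $\{\mathbf{v}_\epsilon\}$ lie in a compact set and makes the limit nontrivial; $\ell_1$-normalization on the non-negative cone accomplishes both at once, since $\mathbf{v}_\epsilon>\mathbf{0}$ ensures $\|\mathbf{v}_\epsilon\|_1\neq 0$ and the limit $\mathbf{v}$ is automatically non-negative with unit $\ell_1$-norm (so in particular nonzero). Continuity of $\rho$ on the space of matrices, while standard, should also be invoked explicitly, as it is what guarantees $\lambda_\epsilon\to\rho(\mathbf{M})$ rather than a convergence to some other eigenvalue of $\mathbf{M}$.
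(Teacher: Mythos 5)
Your proof is correct, but it reaches the existence statements by a genuinely different route from the paper. The paper argues purely structurally: the irreducible case is immediate from Theorem~\ref{Perron_Frobenius_theo}, and in the reducible case it passes to the Frobenius normal form of Theorem~\ref{ferob_normal_theo}, observing that the spectrum of $\mathbf{M}$ is the union of the spectra of the diagonal blocks, so that $\rho(\mathbf{M})$ is either the positive Perron root of some irreducible block or zero when every block is a $1\times 1$ null matrix. Your positivity argument in part~(a) is exactly this FNF argument, and it is sound. Where you diverge is in showing that $\rho(\mathbf{M})$ is actually an eigenvalue with a non-negative eigenvector: you perturb to $\mathbf{M}_\epsilon=\mathbf{M}+\epsilon\mathbf{J}>\mathbf{0}$, apply Theorem~\ref{Perron_Frobenius_theo} to the irreducible perturbation, and pass to the limit along $\ell_1$-normalized Perron vectors using compactness of the simplex and continuity of the spectral radius. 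This buys something the paper's sketch leaves implicit: the FNF argument delivers the eigenvalue assertions but does not by itself construct the non-negative eigenvector of part~(b) in the reducible case --- doing that structurally requires solving a coupled block-triangular system (zero-padding the Perron vector of the extremal block only works in special block positions; otherwise one needs a Neumann-series argument showing $(\rho\mathbf{I}-\mathbf{M}_a)^{-1}\geq\mathbf{0}$ when $\rho>\rho(\mathbf{M}_a)$, with extra care when the radius is attained in several blocks). Your perturbation argument sidesteps all of this at the modest cost of invoking continuity of eigenvalues, which you correctly flag as needing explicit mention. One small point of convention: your ``size $1$, irreducible and non-null'' case tacitly adopts the dichotomy of Theorem~\ref{ferob_normal_theo}, under which $1\times 1$ null blocks are not counted as irreducible; this is consistent with the paper, but worth stating, since under Definition~\ref{perm_irred_mydef} a $1\times 1$ zero matrix admits no nontrivial partition and is vacuously irreducible.
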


The proof of Theorem~\ref{rho_reducible_nonNeg_theo} follows from Theorem~\ref{Perron_Frobenius_theo} if $\mathbf{M}$ is irreducible.
If $\mathbf{M}$ is reducible, then by considering the FNF $\mathbf{P}\mathbf{M}\mathbf{P}^T$ of $\mathbf{M}$, and the fact that the eigenvalues of $\mathbf{M}$
are the union of the eigenvalues of the diagonal matrices of FNF, 
one can see that $\mathbf{M}$ has either a real positive eigenvalue equal to its spectral radius (if there is at least one irreducible diagonal matrix in FNF), 
or the spectral radis is zero (if all the diagonal matrices of FNF are $1 \times 1$ zero matrices).

\begin{theo}\cite [Theorem 22]{noutsos_slide}\label{Mx_bx_geq_zero}
Let $\mathbf{M}$ be an $n\times n$ real non-negative matrix. Suppose that for a non-negative vector $\mathbf{x}\geq \mathbf{0}$ and $\mathbf{x}\neq \mathbf{0}$, we have  $\mathbf{Mx}-\beta\mathbf{x}\geq \mathbf{0}$, where $\beta>0$ is a constant. Then,
\begin{equation*}
\rho(\mathbf{M}) \geq \beta,
\end{equation*} 
with the inequality being strict if $\mathbf{Mx}-\beta\mathbf{x}> \mathbf{0}$. 
\end{theo}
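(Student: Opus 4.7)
The plan is to treat the irreducible and reducible cases separately. First I would handle the case where $\mathbf{M}$ is irreducible by invoking the Perron--Frobenius Theorem \ref{Perron_Frobenius_theo} applied to $\mathbf{M}^T$: there exists a strictly positive left eigenvector $\mathbf{w}>\mathbf{0}$ with $\mathbf{w}^T\mathbf{M}=\rho(\mathbf{M})\mathbf{w}^T$. Left-multiplying the hypothesis $\mathbf{Mx}-\beta\mathbf{x}\geq\mathbf{0}$ by $\mathbf{w}^T$ gives
\begin{equation*}
\mathbf{w}^T(\mathbf{Mx}-\beta\mathbf{x}) \;=\; \bigl(\rho(\mathbf{M})-\beta\bigr)\,\mathbf{w}^T\mathbf{x} \;\geq\; 0.
\end{equation*}
Since $\mathbf{w}>\mathbf{0}$ and $\mathbf{x}\geq\mathbf{0}$ with $\mathbf{x}\neq\mathbf{0}$, the scalar $\mathbf{w}^T\mathbf{x}$ is strictly positive, so $\rho(\mathbf{M})\geq\beta$. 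If in addition $\mathbf{Mx}-\beta\mathbf{x}>\mathbf{0}$, positivity of $\mathbf{w}$ makes $\mathbf{w}^T(\mathbf{Mx}-\beta\mathbf{x})>0$, and hence $\rho(\mathbf{M})>\beta$.

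Second, I would extend the non-strict bound to a reducible $\mathbf{M}$ by a perturbation argument. For $\epsilon>0$, let $\mathbf{M}_\epsilon=\mathbf{M}+\epsilon\mathbf{J}$, where $\mathbf{J}$ is the all-ones matrix. All entries of $\mathbf{M}_\epsilon$ are strictly positive, so $\mathbf{M}_\epsilon$ is irreducible. Because $\mathbf{J}\mathbf{x}$ has every component equal to $\sum_i x_i>0$, we have
\begin{equation*}
\mathbf{M}_\epsilon\mathbf{x}-\beta\mathbf{x} \;=\; (\mathbf{Mx}-\beta\mathbf{x})+\epsilon\mathbf{J}\mathbf{x} \;\geq\; \mathbf{0},
\end{equation*}
so the irreducible case yields $\rho(\mathbf{M}_\epsilon)\geq\beta$. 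Using the continuity of the spectral radius in the matrix entries and letting $\epsilon\to 0^+$ produces $\rho(\mathbf{M})=\lim_{\epsilon\to 0^+}\rho(\mathbf{M}_\epsilon)\geq\beta$.

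Finally, for the strict inequality in the possibly reducible case, I would convert the strict hypothesis into a stronger non-strict one and reapply the previous step. Let $S=\{i:x_i>0\}$, which is nonempty since $\mathbf{x}\neq\mathbf{0}$, and define $\beta'=\min_{i\in S}(\mathbf{Mx})_i/x_i$. The strict hypothesis $\mathbf{Mx}>\beta\mathbf{x}$ forces $(\mathbf{Mx})_i/x_i>\beta$ for every $i\in S$, so $\beta'>\beta$. By construction $(\mathbf{Mx})_i\geq\beta' x_i$ on $S$, while for $i\notin S$ we have $(\mathbf{Mx})_i\geq 0=\beta' x_i$ automatically, giving $\mathbf{Mx}-\beta'\mathbf{x}\geq\mathbf{0}$. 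The already-established non-strict bound (applied with $\beta'$ in place of $\beta$) then yields $\rho(\mathbf{M})\geq\beta'>\beta$.

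I expect the only nontrivial ingredient to be the continuity of the spectral radius used in the perturbation step; this is a standard consequence of the continuous dependence of eigenvalues on matrix entries, and everything else reduces to direct applications of Perron--Frobenius and elementary inequalities involving non-negative vectors.
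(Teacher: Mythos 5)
Your proof is correct, but note that the paper itself offers no proof of this statement: it is quoted verbatim as Theorem~22 of the cited Perron--Frobenius lecture notes (reference \texttt{noutsos\_slide}) and used as a black box, so there is nothing in the paper to compare your argument against step by step. Your three-stage argument is sound and self-contained: the irreducible case via pairing the hypothesis with the strictly positive left Perron eigenvector (which is exactly how such ``subinvariance'' bounds are classically proved); the extension to reducible $\mathbf{M}$ by perturbing with $\epsilon\mathbf{J}$ and using continuity of the spectral radius; and the strict case by manufacturing a larger constant $\beta'=\min_{i\in S}(\mathbf{Mx})_i/x_i>\beta$ supported on the positive coordinates of $\mathbf{x}$ and reapplying the non-strict bound. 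Each step checks out --- in particular $\mathbf{w}^T\mathbf{x}>0$ because $\mathbf{w}>\mathbf{0}$ and $\mathbf{x}\geq\mathbf{0}$, $\mathbf{x}\neq\mathbf{0}$; and in the last step $\beta'>\beta>0$ so the hypothesis $\beta'>0$ of the non-strict version is met, while the coordinates outside $S$ satisfy $(\mathbf{Mx})_i\geq 0=\beta' x_i$ automatically. The only external ingredients you rely on, the Perron--Frobenius theorem for irreducible matrices (the paper's Theorem~\ref{Perron_Frobenius_theo}) and continuity of eigenvalues in the matrix entries, are both standard, so your write-up would serve as a legitimate replacement for the paper's citation. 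For what it is worth, in the only place the paper actually applies this theorem (Proposition~\ref{rhoTild_geq_rho_corollary}) the test vector is the strictly positive Perron vector $\mathbf{u}_1$ of an irreducible $\mathbf{A}$, so only your first, easiest case is ever needed there; the reducible and strict cases you prove give the theorem in the full generality in which it is stated.
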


The following result is simple to prove.

\begin{lem}
Let $\tilde{\mathbf{A}}_{J \rightarrow 1}$ and $\mathbf{A}$ be the transition matrices of a LETS in layered and flooding decoders, respectively. 
We then have
\begin{equation}\label{Atild_equivalent}
 \tilde{\mathbf{A}}_{J \rightarrow 1}=\mathbf{A}+\mathbf{A}'(\mathbf{A-I}), 
\end{equation} 
where
\begin{equation}
\mathbf{A}'=\mathbf{A}_l^{J-1}+\mathbf{A}_l^{J-2}+\dots+\mathbf{A}_l,
\end{equation}
and $\mathbf{A}_l$ is a lower triangular matrix whose non-zero elements (under main diagonal) are equal to those of matrix $\mathbf{A}$.
\end{lem}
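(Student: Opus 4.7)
The plan is to rewrite each layer transition matrix in the simple form $\mathcal{A}_j = \mathbf{I} + \mathbf{P}_j(\mathbf{A}-\mathbf{I})$, where $\mathbf{P}_j$ is the block-diagonal projector onto the $j$-th row block of the systematic labeling (i.e., $\mathbf{P}_j$ is zero outside the $(j,j)$ diagonal block, which equals $\mathbf{I}_{n_{L_j}}$). I would first verify this identity directly from the construction of $\mathcal{A}_j$ in Subsection~\ref{subsec4.1}: outside row block $j$ the identity is preserved, while inside row block $j$ the ``$-\mathbf{P}_j$'' piece cancels the identity there, leaving exactly the $j$-th row block of $\mathbf{A}$ (using $\mathbf{A}_{j,j}=\mathbf{0}$).

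Substituting into $\tilde{\mathbf{A}}_{J\to 1} = \mathcal{A}_J\mathcal{A}_{J-1}\cdots\mathcal{A}_1$ and expanding produces a sum of $2^J$ terms indexed by subsets $S\subseteq\{1,\dots,J\}$. The subset $S=\{s_1<s_2<\cdots<s_k\}$ contributes
\begin{equation*}
\mathbf{P}_{s_k}(\mathbf{A}-\mathbf{I})\,\mathbf{P}_{s_{k-1}}(\mathbf{A}-\mathbf{I})\cdots\mathbf{P}_{s_1}(\mathbf{A}-\mathbf{I}),
\end{equation*}
with the indices strictly decreasing from left to right because the factors $\mathcal{A}_j$ are multiplied in decreasing $j$. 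The key algebraic identity I would rely on is that for $i\neq j$, $\mathbf{P}_j(\mathbf{A}-\mathbf{I})\mathbf{P}_i$ is the matrix whose only nonzero entry is the block $\mathbf{A}_{j,i}$ at position $(j,i)$ (the $-\mathbf{I}$ part of $\mathbf{A}-\mathbf{I}$ does not survive because $i\neq j$).

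Telescoping this identity across the chain of $\mathbf{P}$'s collapses each subset's contribution to a single ``path product'' $\mathbf{A}_{s_k,s_{k-1}}\mathbf{A}_{s_{k-1},s_{k-2}}\cdots\mathbf{A}_{s_2,s_1}$ supported at block position $(s_k,s_1)$, multiplied on the right by the remaining $(\mathbf{A}-\mathbf{I})$. Summing these path products over all strictly increasing length-$k$ chains is, by the definition of matrix multiplication, exactly the $(k-1)$-st power of the strictly block lower-triangular part $\mathbf{A}_l$, since the support condition ``row index $>$ column index'' on $\mathbf{A}_l$ enforces precisely the strict monotonicity of the chain. Separately, the empty subset ($k=0$) contributes $\mathbf{I}$, and the $k=1$ subsets sum to $\bigl(\sum_j \mathbf{P}_j\bigr)(\mathbf{A}-\mathbf{I}) = \mathbf{A}-\mathbf{I}$.

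Collecting terms yields $\tilde{\mathbf{A}}_{J\to 1} = \mathbf{I}+(\mathbf{A}-\mathbf{I})+\sum_{k=2}^{J}\mathbf{A}_l^{k-1}(\mathbf{A}-\mathbf{I}) = \mathbf{A}+\mathbf{A}'(\mathbf{A}-\mathbf{I})$ with $\mathbf{A}'=\mathbf{A}_l+\mathbf{A}_l^2+\cdots+\mathbf{A}_l^{J-1}$, which is the claim. Consistent with the paper labelling this ``simple to prove,'' there is no genuine obstacle; the only bookkeeping that needs care is matching the sum over size-$k$ chains with $\mathbf{A}_l^{k-1}$, and this follows from the standard path-sum interpretation of matrix multiplication. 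As a sanity check, I would verify the formula on the $(5,3)$ LETS of Example~\ref{ex2} using the explicit $\tilde{\mathbf{A}}_{3\to 1}$ in~\eqref{Atild_3arrow1}. If one preferred to avoid the expansion altogether, an alternative is a direct induction on $J$ using $\tilde{\mathbf{A}}_{J\to 1}=\mathcal{A}_J\,\tilde{\mathbf{A}}_{J-1\to 1}$ and the block structure of $\mathcal{A}_J$, but the subset-expansion route is more transparent.
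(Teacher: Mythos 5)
Your proposal is correct. Note that the paper supplies no proof of this lemma at all---it is introduced only with the remark that it is ``simple to prove''---so there is no argument of the authors' to match; the closest relative in the paper is Lemma~\ref{muAlplusAuLemma} in the appendix, which handles an analogous block identity by writing out the layer equations explicitly for $J=3$, i.e., essentially the direct-substitution/induction route you mention at the end as an alternative. Your subset-expansion argument is cleaner and uniform in $J$, and every step checks out: the identity $\mathcal{A}_j=\mathbf{I}+\mathbf{P}_j(\mathbf{A}-\mathbf{I})$ follows from the definition of $\mathcal{A}_j$ (identity diagonal blocks off row block $j$, the $j$th row block of $\mathbf{A}$ in row block $j$, with $\mathbf{A}_{j,j}=\mathbf{0}$ making the bookkeeping consistent); the collapse $\mathbf{P}_j(\mathbf{A}-\mathbf{I})\mathbf{P}_i=\mathbf{P}_j\mathbf{A}\mathbf{P}_i$ for $i\neq j$ is immediate from $\mathbf{P}_j\mathbf{P}_i=\mathbf{0}$; and the match between the sum over strictly increasing $k$-chains and $\mathbf{A}_l^{k-1}$ is the standard path-sum reading of matrix powers. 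Two small points are worth making explicit if you write this up: first, the paper's elementwise description of $\mathbf{A}_l$ coincides with the strictly \emph{block}-lower-triangular part you actually use only because the diagonal blocks of $\mathbf{A}$ are zero, so the two notions agree; second, the decreasing left-to-right ordering of indices in each expansion term is precisely what forces the chains to be increasing and hence produces powers of $\mathbf{A}_l$ rather than $\mathbf{A}_u$---the reversed product $\mathcal{A}_1\cdots\mathcal{A}_J$ would give powers of the upper part, which is consistent with Lemma~\ref{AlplusmuAuLem}.
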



\begin{pro}\label{rhoTild_geq_rho_corollary}
Let $\tilde{\mathbf{A}}_{J \rightarrow 1}$ and $\mathbf{A}$ be the transition matrices of a LETS in layered and flooding decoders, respectively. 
If $\mathbf{A}$ is irreducible, then
\begin{equation*}
\rho(\tilde{\mathbf{A}}_{J \rightarrow 1})\geq \rho(\mathbf{A}).
\end{equation*}  
\end{pro}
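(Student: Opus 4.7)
The plan is to exploit the decomposition $\tilde{\mathbf{A}}_{J \rightarrow 1} = \mathbf{A} + \mathbf{A}'(\mathbf{A}-\mathbf{I})$ from the preceding lemma, together with the Perron eigenvector of $\mathbf{A}$, and then invoke Theorem~\ref{Mx_bx_geq_zero} to extract the spectral radius comparison. The key observation is that $\mathbf{A}'$ is a sum of powers of the non-negative matrix $\mathbf{A}_l$, hence $\mathbf{A}' \geq \mathbf{0}$, which will let us treat $(\mathbf{A}-\mathbf{I})\mathbf{u}_1$ cleanly once we test against the Perron eigenvector.

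First, since $\mathbf{A}$ is irreducible and non-negative, Theorem~\ref{Perron_Frobenius_theo} guarantees a positive right eigenvector $\mathbf{u}_1 > \mathbf{0}$ associated with $r = \rho(\mathbf{A})$, i.e., $\mathbf{A}\mathbf{u}_1 = r \mathbf{u}_1$. Substituting into the decomposition \eqref{Atild_equivalent},
\begin{equation*}
\tilde{\mathbf{A}}_{J \rightarrow 1}\mathbf{u}_1 \;=\; \mathbf{A}\mathbf{u}_1 + \mathbf{A}'(\mathbf{A}-\mathbf{I})\mathbf{u}_1 \;=\; r\mathbf{u}_1 + (r-1)\mathbf{A}'\mathbf{u}_1.
\end{equation*}
Because $\mathbf{A}_l \geq \mathbf{0}$, every power $\mathbf{A}_l^k$ is non-negative, hence $\mathbf{A}' \geq \mathbf{0}$ and $\mathbf{A}'\mathbf{u}_1 \geq \mathbf{0}$. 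Therefore, provided $r \geq 1$, the term $(r-1)\mathbf{A}'\mathbf{u}_1$ is componentwise non-negative, giving
\begin{equation*}
\tilde{\mathbf{A}}_{J \rightarrow 1}\mathbf{u}_1 - r \mathbf{u}_1 \;\geq\; \mathbf{0}.
\end{equation*}
Since $\tilde{\mathbf{A}}_{J \rightarrow 1}$ is non-negative and $\mathbf{u}_1 > \mathbf{0}$, Theorem~\ref{Mx_bx_geq_zero} applied with $\beta = r$ yields $\rho(\tilde{\mathbf{A}}_{J \rightarrow 1}) \geq r = \rho(\mathbf{A})$, which is exactly the claim.

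The only place where care is needed is the sign of $r-1$, which is also the main obstacle to address. For a LETS that is a simple cycle one has $r=1$, and for any other LETS it is stated (following Theorem~\ref{Perron_Frobenius_theo}) that $r > 1$; so in every case with $\mathbf{A}$ irreducible we have $r \geq 1$ and the argument goes through (with equality possible only in the simple-cycle case, where $(r-1)\mathbf{A}'\mathbf{u}_1 = \mathbf{0}$). I would state this $r \geq 1$ fact at the outset, cite the remark from \cite{But_SS} that already appears in the paper, and then invoke Theorem~\ref{Mx_bx_geq_zero} to finish. No further computation or structural analysis of $\tilde{\mathbf{A}}_{J\to 1}$ is required.
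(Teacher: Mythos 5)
Your proof is correct and follows essentially the same route as the paper: right-multiply the decomposition $\tilde{\mathbf{A}}_{J \rightarrow 1}=\mathbf{A}+\mathbf{A}'(\mathbf{A}-\mathbf{I})$ by the Perron eigenvector $\mathbf{u}_1>\mathbf{0}$ of $\mathbf{A}$, use $\mathbf{A}'\geq\mathbf{0}$ and $r\geq 1$ to get $\tilde{\mathbf{A}}_{J \rightarrow 1}\mathbf{u}_1-r\mathbf{u}_1\geq\mathbf{0}$, and conclude via Theorem~\ref{Mx_bx_geq_zero}. The only cosmetic difference is that the paper asserts $r>1$ outright (since irreducibility of $\mathbf{A}$ already excludes simple cycles), whereas you more cautiously argue $r\geq 1$; both suffice.
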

\begin{proof}
Since $\mathbf{A}$ is irreducible, it has a simple positive eigenvalue, $r$, which is equal to $\rho(\mathbf{A})$. To $r$, 
there corresponds a positive right eigenvector, $\mathbf{u}_1$. By right multiplication of \eqref{Atild_equivalent} with $\mathbf{u}_1$, we have
\begin{equation}\label{Atild_u_1Projection}
 \tilde{\mathbf{A}}_{J \rightarrow 1}\mathbf{u}_1=r\mathbf{u}_1+\mathbf{A}'(r-1)\mathbf{u}_1. 
\end{equation}
Since $r>1$, $\mathbf{u}_1>\mathbf{0}$ and $\mathbf{A}'\geq \mathbf{0}$, we have $\mathbf{A}'(r-1)\mathbf{u}_1 \geq \mathbf{0}$ and 
$r \mathbf{u}_1>\mathbf{0}$, and therefore
 \begin{equation}
 \tilde{\mathbf{A}}_{J \rightarrow 1}\mathbf{u}_1-r\mathbf{u}_1\geq \mathbf{0}. 
\end{equation}
Using this in Theorem \ref{Mx_bx_geq_zero} completes the proof.
\end{proof} 

Based on Lemma \ref{Atild_reducible_lem}, we know that $\tilde{\mathbf{A}}_{J \rightarrow 1}$ is a non-negative reducible matrix. 
As a result, the FNF of $\tilde{\mathbf{A}}_{J \rightarrow 1}$ has either irreducible or $1\times 1$ null matrices as its diagonal blocks. 
In the following, we demonstrate that  the FNF of $\tilde{\mathbf{A}}_{J \rightarrow 1}$ has only one irreducible diagonal block with all the other blocks 
being equal to $1\times 1$ zero matrices. To show this, we use the directed edge graph (digraph) representation of the transition matrices. This representation, which was also used in~\cite{But_SS}, represents each state variable with a node in the graph and the dependencies among them by directed edges, i.e., 
a directed edge $x_i x_j$ in the graph means that state variable $x_j$, corresponding to the {\em head} node of the edge, at iteration $\ell$ is a function of state variable $x_i$, corresponding to the {\em tail} node of the edge, at iteration $\ell -1$. In the digraph, the tail (head) nodes of all the incoming (outgoing) edges of a node $x$, i.e., all the edges with head (tail) $x$, are called the {\em parents} ({\em children}) of $x$. 

It appears that the adjacency matrix of the digraph is equal to the transpose of the transition matrix.
In the following, we use notations $D_{f}(\mathcal{S})=(V_{f}, E_{f})$ and $D_{l}(\mathcal{S})=(V_{l}, E_{l})$ to denote the digraphs 
corresponding to a LETS $\mathcal{S}$ in flooding and layered decoding, respectively, where the set of nodes and edges in each graph are 
denoted by $V$ and $E$, respectively, with indices $f$ and $l$ indicating ``flooding'' and ``layered,'' respectively.  

A digraph is {\em strongly connected} if there exists a directed walk between any pair of its nodes.
Consider a LETS $\mathcal{S}$ with an irreducible transition matrix $\mathbf{A}$ in a flooding decoder.
The irreducibility of $\mathbf{A}$ implies that the digraph $D_{f}(\mathcal{S})=(V_{f}, E_{f})$  is strongly connected~\cite{horn}.

\begin{ex}
The digraph $D_{f}(\mathcal{S})$ of the flooding decoder for the $(5,3)$ LETS of Fig. \ref{(5,3)lay_col} is shown in Fig. \ref{dig_Flood_53}. 
As can be seen this digraph is strongly connected.
\end{ex}

The digraph $D_{l}(\mathcal{S})$ of a LETS $\mathcal{S}$ for a layered decoder can be obtained from its adjacency 
matrix $\tilde{\mathbf{A}}^T_{J \rightarrow 1}$. The digraph $D_{l}(\mathcal{S})$ can also be constructed from 
the flooding digraph $D_{f}(\mathcal{S})$ through $J-1$ steps, each corresponding to one layer of decoding. 
In step $j, j = 1, \ldots, J-1$, one considers all the nodes corresponding to state variables that are updated in layer $L_{j+1}$ 
and their incoming edges. If the tail of any of such incoming edges has been updated in the previous layers within the same iteration, then 
the incoming edge is removed and new incoming edges are created from the parents of the updated tail to the head of the removed edge. 
 
\begin{ex}
The steps for the construction of $D_{l}(\mathcal{S})$ from $D_{f}(\mathcal{S})$ for the $(5,3)$ LETS of Fig. \ref{(5,3)lay_col} are shown in Fig. \ref{DIG53_example}, where different colors and line types are used to identify the state variables and updated edges in different layers.
It can be verified that the adjacency matrix of the resulted digraph of Fig.~\ref{dig_Lay3_53} is equal to the transpose of the matrix $\tilde{\mathbf{A}}_{3 \rightarrow 1}$, given in Equation \eqref{Atild_3arrow1}. In particular, the zero columns of matrix $\tilde{\mathbf{A}}_{3 \rightarrow 1}$ correspond to nodes of Fig.~\ref{dig_Lay3_53} with no outgoing edges.
\end{ex}

\begin{figure*}
\centering
\subfloat[][\label{dig_Flood_53}Flooding decoder digraph $D_{f}(\mathcal{S})$]{
\includegraphics[width=2.5in]{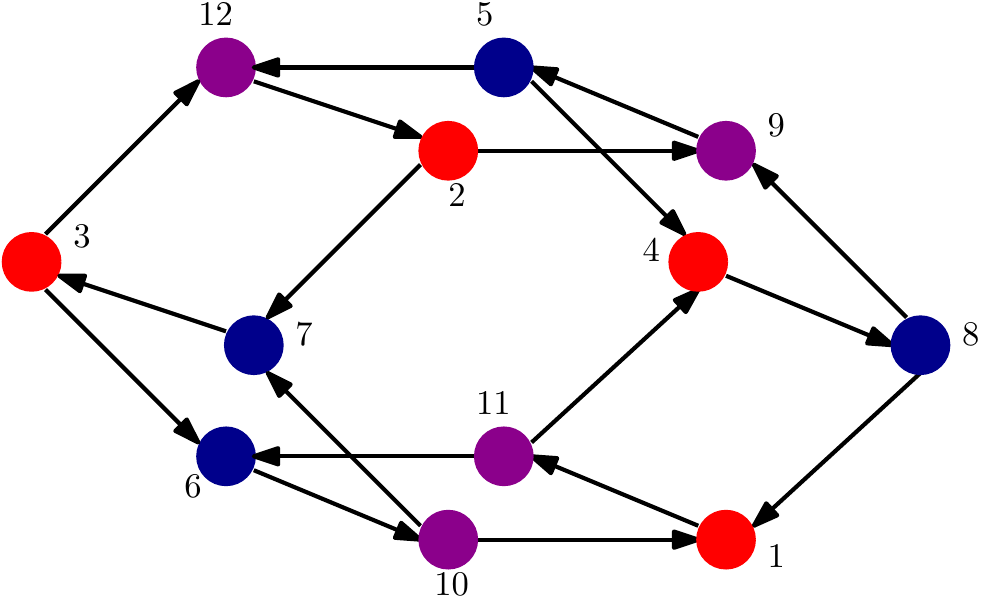}}
\ \ \ \ \ \ \ \ \subfloat[][\label{dig_Lay1_53}Activating the edges updated in $L_1$]{
\includegraphics[width=2.5in]{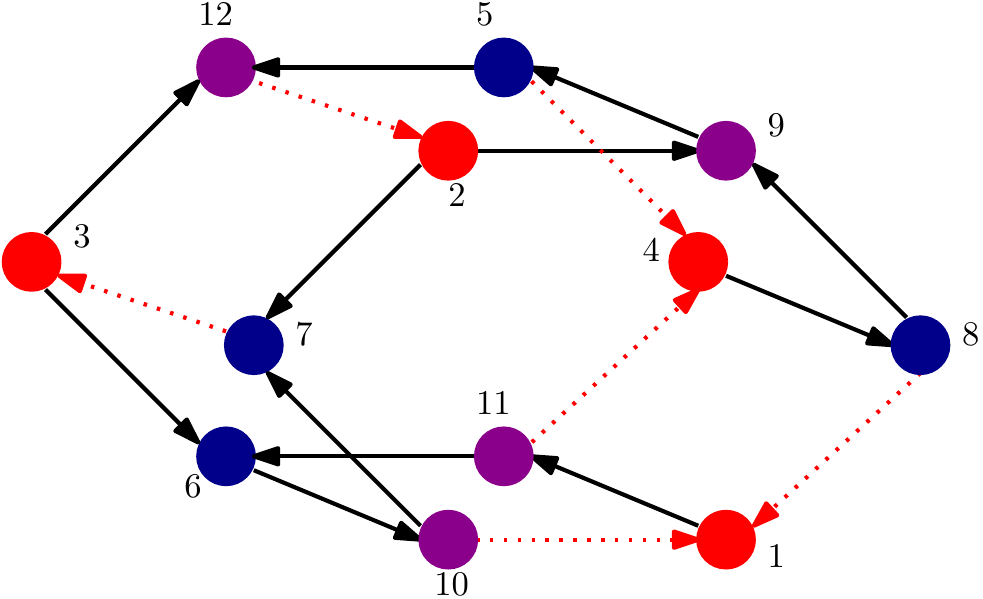}}\\
\subfloat[][\label{dig_Lay2_53}Step 1 of constructing the layered decoder digraph]{
\includegraphics[ width=2.5in, angle =0]{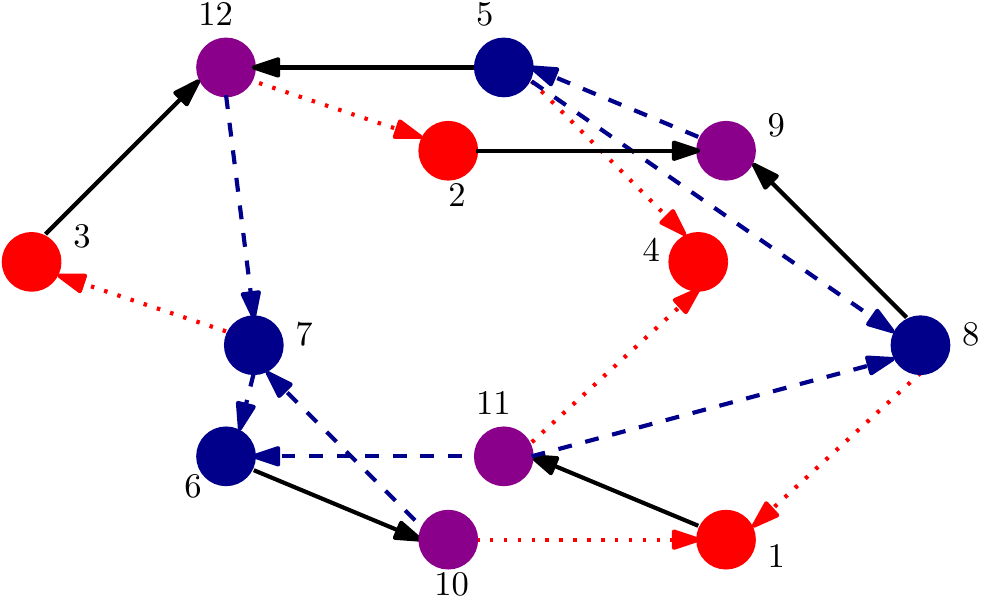}}
\ \ \ \ \ \ \ \  \subfloat[][\label{dig_Lay3_53}Step 2 of constructing the layered decoder digraph]{
\includegraphics[ width=2.5in, angle =0]{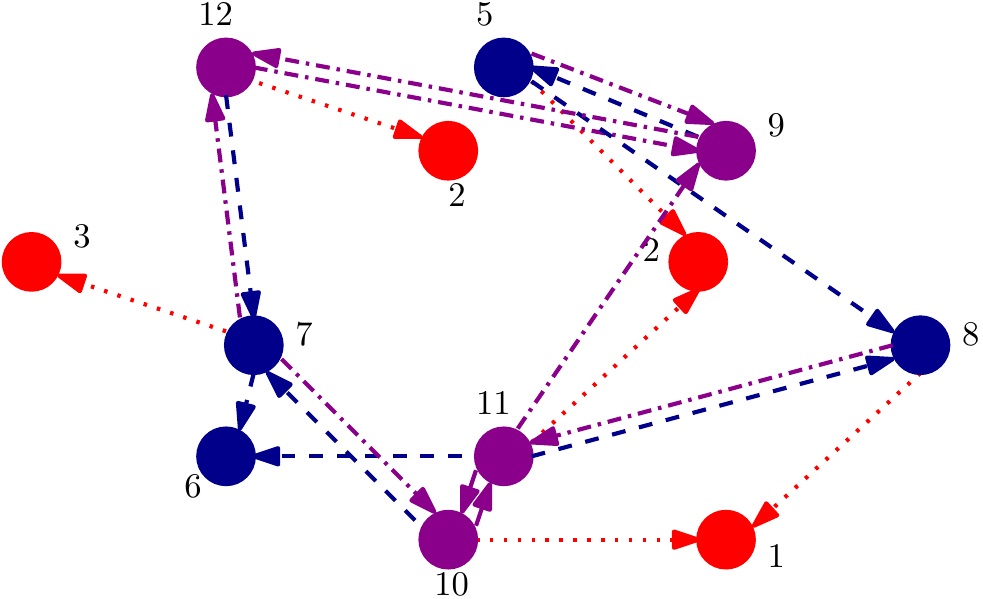}}
\caption{The steps of constructing the digraph $D_{l}(\mathcal{S})$ of the $(5,3)$ LETS for a layered decoder from the flooding digraph $D_{f}(\mathcal{S})$. (The colors (edge types) red (dotted), blue (dashed) and purple (dash-dotted) represent $L_1$, $L_2$ and $L_3$, respectively.) }
\label{DIG53_example}
\end{figure*}
The following result follows from the construction of $D_{l}(\mathcal{S})$  from $D_{f}(\mathcal{S})$.

\begin{lem}
\label{out_edge_removal_lem}
Consider a state variable $v \in V_{f}$ of $D_{f}(\mathcal{S})$, for a LETS $\mathcal{S}$, updated in layer $L_k$, $k=\{1,2,\dots,J\}$.
If all the outgoing edges of $v$ are connected to nodes from subsequent layers $L_j$, $k<j\leq J$, 
the corresponding node $v \in V_{l}$ of $D_{l}(\mathcal{S})$ will not have any outgoing edges. 
The total number of nodes $n_z$ in $D_{l}(\mathcal{S})$ that have this property are greater 
than or equal to the number of nodes updated in the first layer, i.e., $n_z \geq n_{L_1}$.   
\end{lem}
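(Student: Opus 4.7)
The plan is to work directly with the factorization $\tilde{\mathbf{A}}_{J \rightarrow 1} = \mathcal{A}_J \mathcal{A}_{J-1} \cdots \mathcal{A}_1$ and show that, under the hypothesis of the first assertion, the $v$-th column of $\tilde{\mathbf{A}}_{J \rightarrow 1}$ vanishes; since the adjacency matrix of $D_l$ is the transpose of $\tilde{\mathbf{A}}_{J \rightarrow 1}$, this is exactly the statement that $v$ has no outgoing edges in $D_l$. The second assertion then follows by specializing the first to every node of $L_1$.

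Let $v \in L_k$ have all of its $D_f$-children in layers $L_j$ with $j > k$. In the block-partitioned form of $\mathbf{A}$, this hypothesis together with $\mathbf{A}_{k,k} = \mathbf{0}$ yields $(\mathbf{A}_{j,k})_v = \mathbf{0}$ for every $j \leq k$, where $(\mathbf{A}_{j,k})_v$ denotes the $v$-th column of the sub-block $\mathbf{A}_{j,k}$. I would then apply the layer matrices to $\mathbf{e}_v$ one by one. Using the structure of $\mathcal{A}_j$---its $j$-th row block equals $\mathbf{A}$'s $j$-th row block, and every other row block is zero except for an identity on the diagonal---a direct block-level computation gives $\mathcal{A}_j \mathbf{e}_v = \mathbf{e}_v$ for every $j < k$: the $j$-th row block contributes $(\mathbf{A}_{j,k})_v = \mathbf{0}$ by the hypothesis, the row block at position $k$ contributes $\mathbf{I}\,\mathbf{e}_v$ read from the $k$-th column sub-block, and the remaining row blocks read zero out of the $k$-th column block. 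Iterating, $\mathcal{A}_{k-1} \cdots \mathcal{A}_1 \mathbf{e}_v = \mathbf{e}_v$. Applying $\mathcal{A}_k$ next annihilates the column, because the $k$-th row block contributes $(\mathbf{A}_{k,k})_v = \mathbf{0}$ and every other row block again reads zero out of the $k$-th column block. Hence $\mathcal{A}_k \mathcal{A}_{k-1} \cdots \mathcal{A}_1 \mathbf{e}_v = \mathbf{0}$, and left-multiplying by $\mathcal{A}_J \cdots \mathcal{A}_{k+1}$ preserves this, giving $\tilde{\mathbf{A}}_{J \rightarrow 1} \mathbf{e}_v = \mathbf{0}$.

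For the second assertion, I specialize to $k = 1$: the hypothesis holds automatically for every $v \in L_1$ because $\mathbf{A}_{1,1} = \mathbf{0}$ rules out intra-layer $D_f$-children, so all $n_{L_1}$ first-layer nodes have zero columns in $\tilde{\mathbf{A}}_{J \rightarrow 1}$---which is the observation already recorded in the paragraph immediately preceding Lemma~\ref{Atild_reducible_lem}---and consequently have no outgoing edges in $D_l$, yielding $n_z \geq n_{L_1}$. The virtue of this algebraic route is that the block structure of the $\mathcal{A}_j$'s makes the telescoping automatic, so the main check is the single identity $(\mathbf{A}_{j,k})_v = \mathbf{0}$ for $j \leq k$; a purely graph-theoretic argument based on unrolling the construction of $D_l$ is also possible, but would require an additional inductive step ruling out new outgoing edges accruing to $v$ at each construction step, which I expect to be the harder path.
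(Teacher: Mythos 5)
Your proof is correct, and it takes a genuinely different route from the paper's. The paper offers no formal argument for this lemma: it simply asserts that the result ``follows from the construction of $D_{l}(\mathcal{S})$ from $D_{f}(\mathcal{S})$,'' i.e., from the informal edge-rewiring procedure that builds the layered digraph out of the flooding one. You instead verify the claim algebraically, showing that the hypothesis forces $(\mathbf{A}_{j,k})_v=\mathbf{0}$ for $j\leq k$, that consequently $\mathcal{A}_{j}\mathbf{e}_v=\mathbf{e}_v$ for $j<k$ while $\mathcal{A}_k$ has an identically zero $k$-th column block, and hence that $\tilde{\mathbf{A}}_{J\rightarrow 1}\mathbf{e}_v=\mathbf{0}$; the transpose relation between $\tilde{\mathbf{A}}_{J\rightarrow 1}$ and the adjacency matrix of $D_l(\mathcal{S})$ then gives the absence of outgoing edges, and the bound $n_z\geq n_{L_1}$ follows by specializing to $k=1$, where the hypothesis is automatic because $\mathbf{A}_{1,1}=\mathbf{0}$. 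All of these block computations check out. What your route buys is precision: the paper's graph construction is ambiguous about whether the rewiring is applied recursively to newly created edges (a node in $L_k$ can, in principle, acquire new outgoing edges into a later layer during an intermediate step, and one must argue these are ultimately eliminated), whereas the matrix product $\mathcal{A}_J\cdots\mathcal{A}_1$ has no such ambiguity and the telescoping is automatic. What the paper's route buys is intuition about how messages propagate through layers, which it reuses for Lemma~\ref{Strong_connect_lay_dig_lem}; your calculation, by contrast, recovers and generalizes the observation the paper had already made before Lemma~\ref{Atild_reducible_lem} that the first $n_{L_1}$ columns of $\tilde{\mathbf{A}}_{J\rightarrow 1}$ vanish.
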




The following lemma applies to LETSs with irreducible transition matrices ${\bf A}$ for flooding schedule, and 
follows from the construction of $D_{l}(\mathcal{S})$ from $D_{f}(\mathcal{S})$ and the fact that for such LETSs the digraph
$D_{f}(\mathcal{S})$ is strongly connected.

\begin{lem}
\label{Strong_connect_lay_dig_lem}
Consider a LETS $\mathcal{S}$ with an irreducible transition matrix ${\bf A}$ for flooding schedule.
If after the construction of $D_{l}(\mathcal{S})$ from $D_{f}(\mathcal{S})$, all the nodes with no outgoing edges, and their incoming edges are
removed from $D_{l}(\mathcal{S})$, then the remaining digraph (if any)
is strongly connected.   
\end{lem}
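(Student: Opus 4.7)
The plan is to lift the strong connectivity of $D_{f}(\mathcal{S})$, which holds because $\mathbf{A}$ is irreducible, up to the reduced layered digraph. Let $R$ denote the set of surviving nodes. First I would translate the iterative construction of $D_{l}$ from $D_{f}$ into an explicit edge characterization: $(u,v)$ is an edge of $D_{l}$ if and only if there is a walk $u \to w_1 \to w_2 \to \cdots \to w_t \to v$ in $D_{f}$ whose first step is a ``descent'' ($k_{w_1} < k_u$) and whose remaining steps are ``ascents'' with strictly increasing layer indices ($k_{w_1} < k_{w_2} < \cdots < k_{w_t} < k_v$). Combined with Lemma~\ref{out_edge_removal_lem}, this also pins down $R = \{v : v \text{ has some outgoing } D_{f}\text{-edge to a strictly lower layer}\}$.

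Given this edge description, the main step is to show that any two survivors $x, y \in R$ are joined by a directed walk in the induced subgraph on $R$. Since $x \in R$, I would fix a descent edge $x \to x^*$ in $D_{f}$; by strong connectivity of $D_{f}$ there is a $D_{f}$-walk from $x^*$ to $y$, and prepending $x \to x^*$ yields a $D_{f}$-walk $W$ from $x$ to $y$ that begins with a descent. I would then parse $W$ greedily into maximal descent-then-ascent segments: a new segment starts at $x$ and at every node of $W$ whose outgoing edge in $W$ is a descent, and each segment terminates either when the next edge of $W$ is a descent or when $y$ is reached. By the edge characterization, each segment is realized by a single edge of $D_{l}$ between its two endpoints; every internal segment boundary initiates a descent in $W$ and therefore lies in $R$, while $x$ and $y$ lie in $R$ by hypothesis. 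Concatenating these edges produces a walk from $x$ to $y$ entirely through $R$-nodes, and strong connectivity follows from the arbitrariness of the pair (the case $R = \emptyset$ is vacuous, as accommodated by the ``if any'').

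The hard part will be justifying the edge characterization of $D_{l}$ from the step-by-step unrolling used in the construction, together with the bookkeeping at the two endpoints of $W$. In particular, one must check that prepending $x$'s descent edge does not spoil reachability of $y$ (handled by strong connectivity of $D_{f}$) and that the final segment can always close at $y$: if the last edge of $W$ is a descent, the final segment is a single edge terminating at $y$; if it is an ascent, then $y$ sits atop an ascending run and is still a legitimate segment endpoint. Once these pieces are in place, the rest of the proof is a clean repackaging of $D_{f}$-walks into $D_{l}$-edges.
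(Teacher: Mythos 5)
Your proof is correct and follows essentially the route the paper intends: the paper states this lemma without an explicit proof, asserting only that it follows from the construction of $D_{l}(\mathcal{S})$ from $D_{f}(\mathcal{S})$ together with the strong connectivity of $D_{f}(\mathcal{S})$. Your descent-then-ascent edge characterization of $D_{l}$ (which is consistent with the block expansion of $\tilde{\mathbf{A}}_{J \rightarrow 1}$, e.g., the term $\mathbf{A}_{3,2}\mathbf{A}_{2,1}\mathbf{A}_{1,3}$ in the example), the resulting identification of the surviving node set, and the greedy segmentation of a $D_{f}$-walk that begins with a descent supply exactly the details the paper omits.
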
  

Lemma \ref{Strong_connect_lay_dig_lem} implies that the FNF of the transition matrix $\tilde{\mathbf{A}}_{J \rightarrow 1}$ of a layered decoder 
for a LETS with irreducible matrix ${\bf A}$ has at most one irreducible diagonal block. The following Lemma proves 
that the number of irreducible blocks is exactly one. 
 
 \begin{lem}\label{1_only_1_irred_lem}
The FNF of a layered decoder transition matrix $\tilde{\mathbf{A}}_{J \rightarrow 1}$ corresponding to a LETS, $\mathcal{S}$, 
with irreducible flooding transition matrix, $\mathbf{A}$, has one and only one irreducible diagonal block.  
\end{lem}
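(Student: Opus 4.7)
The plan is to use the digraph representation $D_l(\mathcal{S})$ of $\tilde{\mathbf{A}}_{J\rightarrow 1}$ together with the Frobenius normal form (FNF) and the spectral lower bound furnished by Proposition \ref{rhoTild_geq_rho_corollary}. The key organizing principle is the standard dictionary between these objects: each irreducible diagonal block in the FNF of $\tilde{\mathbf{A}}_{J\rightarrow 1}$ corresponds to a strongly connected induced subdigraph of $D_l(\mathcal{S})$, while each $1\times 1$ null block corresponds to an isolated ``dead-end'' vertex. The goal is therefore to show that $D_l(\mathcal{S})$ gives rise to exactly one strongly connected component that contributes a non-null irreducible block.

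First I would let $V'\subseteq V_l$ denote the set of nodes with no outgoing edges in $D_l(\mathcal{S})$. Because an edge $v\rightarrow i$ of $D_l(\mathcal{S})$ corresponds to the entry $\tilde{\mathbf{A}}_{J\rightarrow 1}[i,v]=1$, every $v\in V'$ yields an identically zero column of $\tilde{\mathbf{A}}_{J\rightarrow 1}$; in particular no $v\in V'$ carries a self-loop, and the submatrix indexed by $V'\times V'$ is the zero matrix. Ordering the vertices so that $V'$ precedes $V_l\setminus V'$, the matrix $\tilde{\mathbf{A}}_{J\rightarrow 1}$ is brought by symmetric permutation into the block upper triangular form
\begin{equation*}
\begin{pmatrix} \mathbf{0}_{|V'|\times|V'|} & \ast \\ \mathbf{0} & \tilde{\mathbf{A}} \end{pmatrix},
\end{equation*}
where $\tilde{\mathbf{A}}$ is the submatrix supported on $V_l\setminus V'$. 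The top-left block decomposes into $|V'|$ individual $1\times 1$ null blocks along the diagonal, and by Lemma \ref{Strong_connect_lay_dig_lem} the induced digraph on $V_l\setminus V'$ is strongly connected, so once $\tilde{\mathbf{A}}$ is shown to be non-empty and non-null, it is automatically irreducible and provides a single irreducible diagonal block of the FNF.

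The remaining step, which is the crux of the argument, is to exclude the degenerate possibilities $V_l\setminus V'=\emptyset$ and $\tilde{\mathbf{A}}=[0]$. I would invoke Proposition \ref{rhoTild_geq_rho_corollary} together with the fact recalled from \cite{But_SS} that $r=\rho(\mathbf{A})>1$ for every LETS whose flooding transition matrix is irreducible (simple cycles, for which $r=1$, being excluded by the hypothesis). This yields $\rho(\tilde{\mathbf{A}}_{J\rightarrow 1})\geq r>1$, so $\tilde{\mathbf{A}}_{J\rightarrow 1}$ must possess an eigenvalue of modulus strictly greater than one. Since the spectrum of a block upper triangular matrix is the union of the spectra of its diagonal blocks, and every $1\times 1$ null block contributes only the eigenvalue $0$, this nontrivial eigenvalue must come from $\tilde{\mathbf{A}}$. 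Hence $\tilde{\mathbf{A}}$ is a genuine irreducible block, and it is necessarily the unique one.

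I expect the main obstacle to be the careful bookkeeping that translates ``strongly connected subdigraph of $D_l(\mathcal{S})$'' into ``irreducible diagonal block of the FNF,'' and in particular the verification that no extra irreducible block can hide inside the top-left submatrix indexed by $V'$. This concern is resolved once one observes that the columns indexed by $V'$ are identically zero, which forces the entire $V'\times V'$ submatrix to vanish, so in FNF it decomposes into $|V'|$ separate $1\times 1$ null matrices rather than contributing any additional irreducible piece.
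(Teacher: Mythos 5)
Your proposal is correct and follows essentially the same route as the paper: Proposition~\ref{rhoTild_geq_rho_corollary} together with $\rho(\mathbf{A})>1$ forces at least one irreducible diagonal block, while Lemma~\ref{Strong_connect_lay_dig_lem} caps the count at one. You merely make explicit the bookkeeping (zero columns $\Rightarrow$ $1\times 1$ null blocks, strongly connected remainder $\Rightarrow$ single irreducible block) that the paper leaves implicit in its construction of $D_l(\mathcal{S})$ and Equation~\eqref{p_Atild_p_1}.
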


\begin{proof}
From Proposition~\ref{rhoTild_geq_rho_corollary}, we have $\rho(\tilde{\mathbf{A}}_{J \rightarrow 1})\geq \rho(\mathbf{A})$.
Moreover, since ${\bf A}$ is irreducible, $\rho(\mathbf{A}) > 1$~\cite{But_SS}. This means $\rho(\tilde{\mathbf{A}}_{J \rightarrow 1}) > 1$, and thus the FNF of $\tilde{\mathbf{A}}_{J \rightarrow 1}$ must have at least one irreducible diagonal block. This together with Lemma~\ref{Strong_connect_lay_dig_lem} completes the proof.
\end{proof}


Based on Lemma~\ref{1_only_1_irred_lem}, 
there exists a permutation matrix $\mathbf{P}$ such that the FNF of $\tilde{\mathbf{A}}_{J \rightarrow 1}$ can be written as follows
\begin{equation}
\mathbf{P}\tilde{\mathbf{A}}_{J \rightarrow 1}\mathbf{P}^T=\left [\begin{array}{c|c}
\mathbf{0}_{n_z\times n_z} &\tilde{\mathbf{A}}'_{n_z\times (m_s-n_z)}\\
\hline
\mathbf{0}_{(m_s-n_z)\times n_z}& \tilde{\mathbf{A}}_{(m_s-n_z)\times (m_s-n_z)}
\end{array} \right ],
\label{p_Atild_p_1}
\end{equation}
where $\tilde{\mathbf{A}}$ is a non-negative irreducible matrix and $n_z$ is defined in Lemma \ref{out_edge_removal_lem}.
  
\begin{theo}\label{main_theo_rowlayered_spectra}
Let $\tilde{\mathbf{A}}_{J \rightarrow 1}$ be the non-negative transition matrix of a LETS, which is not a simple cycle, in a layered decoder. 
Also, let the matrix $\tilde{\mathbf{A}}$ be the only irreducible diagonal block of the FNF of $\tilde{\mathbf{A}}_{J \rightarrow 1}$. Then,
\begin{itemize}
\item[(a)]$\tilde{\mathbf{A}}_{J \rightarrow 1}$ has a simple positive eigenvalue, $\tilde{r}>1$, equal to its spectral radius $\rho(\tilde{\mathbf{A}}_{J \rightarrow 1})$. 
\item[(b)]To $\tilde{r}$, there corresponds a non-negative left eigenvector $\tilde{\mathbf{w}}_1^T$ that is equal to the positive left eigenvector of $\tilde{\mathbf{A}}$, denoted by $\tilde{\boldsymbol{\omega}}_1^T$, appended by $n_z$ zeros. 
\item[(c)]To $\tilde{r}$, there corresponds a non-negative right eigenvector $\tilde{\mathbf{u}}_1$ that is equal to the positive right eigenvector of $\tilde{\mathbf{A}}$, denoted by $\tilde{\boldsymbol{\nu}}_1$, appended by the vector $\frac{\tilde{\mathbf{A}}'}{\tilde{r}}\tilde{\boldsymbol{\nu}}_1$, in which the matrix $\tilde{\mathbf{A}}'$ is defined in \eqref{p_Atild_p_1}.
\end{itemize} 
\end{theo}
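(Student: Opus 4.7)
The plan is to exploit the Frobenius normal form (FNF) in \eqref{p_Atild_p_1} established via Lemmas \ref{Atild_reducible_lem}, \ref{Strong_connect_lay_dig_lem} and \ref{1_only_1_irred_lem}, together with the classical Perron--Frobenius theorem (Theorem \ref{Perron_Frobenius_theo}) applied to the unique irreducible diagonal block $\tilde{\mathbf{A}}$. Since symmetric permutation by $\mathbf{P}$ preserves the spectrum, eigenvectors of $\tilde{\mathbf{A}}_{J\rightarrow 1}$ are in one-to-one correspondence (via $\mathbf{P}$) with eigenvectors of the block upper triangular matrix on the right-hand side of \eqref{p_Atild_p_1}. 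Throughout, I work with that permuted matrix and then invert the permutation at the end.

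For part (a), since the eigenvalues of a block upper triangular matrix are the union of the eigenvalues of its diagonal blocks, the nonzero eigenvalues of $\tilde{\mathbf{A}}_{J\rightarrow 1}$ are exactly the eigenvalues of $\tilde{\mathbf{A}}$ (the $n_z$ diagonal zeros contribute only the eigenvalue $0$). Because $\tilde{\mathbf{A}}$ is non-negative and irreducible, Theorem \ref{Perron_Frobenius_theo} gives a simple positive eigenvalue $\tilde{r}=\rho(\tilde{\mathbf{A}})=\rho(\tilde{\mathbf{A}}_{J\rightarrow 1})$. To show $\tilde{r}>1$, I combine Proposition \ref{rhoTild_geq_rho_corollary}, which gives $\rho(\tilde{\mathbf{A}}_{J\rightarrow 1})\geq\rho(\mathbf{A})$, with the fact (cited in the paragraph after Theorem \ref{Perron_Frobenius_theo}) that $\rho(\mathbf{A})>1$ for any LETS that is not a simple cycle. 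Simplicity follows because the irreducible block contributes this eigenvalue with multiplicity one, and the other block ($n_z\times n_z$ zero matrix) contributes only the eigenvalue $0\neq\tilde{r}$.

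For parts (b) and (c), I partition a left/right eigenvector conformally with the block structure in \eqref{p_Atild_p_1}. Writing a left eigenvector in the permuted basis as $[\mathbf{a}^T,\mathbf{b}^T]$ and imposing $[\mathbf{a}^T,\mathbf{b}^T]\,\mathbf{P}\tilde{\mathbf{A}}_{J\rightarrow 1}\mathbf{P}^T=\tilde{r}\,[\mathbf{a}^T,\mathbf{b}^T]$ gives the two equations
\begin{equation*}
\tilde{r}\,\mathbf{a}^T=\mathbf{0}^T,\qquad \mathbf{a}^T\tilde{\mathbf{A}}'+\mathbf{b}^T\tilde{\mathbf{A}}=\tilde{r}\,\mathbf{b}^T.
\end{equation*}
Since $\tilde{r}>0$, the first equation forces $\mathbf{a}=\mathbf{0}$, and the second then reduces to $\mathbf{b}^T\tilde{\mathbf{A}}=\tilde{r}\,\mathbf{b}^T$, whose unique non-negative solution (up to scaling) is the Perron left eigenvector $\tilde{\boldsymbol{\omega}}_1^T$ of $\tilde{\mathbf{A}}$. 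Inverting $\mathbf{P}$ yields the claimed $\tilde{\mathbf{w}}_1^T$, namely $\tilde{\boldsymbol{\omega}}_1^T$ appended (in the appropriate positions) by $n_z$ zeros. For part (c), partitioning a right eigenvector as $[\mathbf{c}^T,\mathbf{d}^T]^T$ and applying the block upper triangular matrix gives
\begin{equation*}
\tilde{\mathbf{A}}'\mathbf{d}=\tilde{r}\,\mathbf{c},\qquad \tilde{\mathbf{A}}\mathbf{d}=\tilde{r}\,\mathbf{d}.
\end{equation*}
Hence $\mathbf{d}=\tilde{\boldsymbol{\nu}}_1$ (the Perron right eigenvector of $\tilde{\mathbf{A}}$, which is positive) and $\mathbf{c}=\tfrac{1}{\tilde{r}}\tilde{\mathbf{A}}'\tilde{\boldsymbol{\nu}}_1$, which is non-negative because $\tilde{\mathbf{A}}'\geq\mathbf{0}$ and $\tilde{\boldsymbol{\nu}}_1>\mathbf{0}$.

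The conceptually delicate step is not any of the eigenvector calculations above but the bookkeeping between the original and permuted bases: one must verify that the ``append by $n_z$ zeros'' description for $\tilde{\mathbf{w}}_1$ and the ``append by $\tfrac{1}{\tilde{r}}\tilde{\mathbf{A}}'\tilde{\boldsymbol{\nu}}_1$'' description for $\tilde{\mathbf{u}}_1$ are consistent with the ordering induced by $\mathbf{P}$, and that the non-negativity (rather than strict positivity) is the sharpest conclusion available, since coordinates in $\mathbf{c}$ can vanish if $\tilde{\mathbf{A}}'$ has zero rows. Once this is handled, parts (b) and (c) are essentially forced by the upper triangular block equations.
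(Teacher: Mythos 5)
Your proposal is correct and follows essentially the same route as the paper: both use the FNF block upper triangular form of \eqref{p_Atild_p_1} so that the spectrum is that of $\tilde{\mathbf{A}}$ plus $n_z$ zeros, invoke Perron--Frobenius on the unique irreducible block for simplicity and positivity of $\tilde{r}$, obtain $\tilde{r}>1$ from Proposition~\ref{rhoTild_geq_rho_corollary} together with $\rho(\mathbf{A})>1$ (exactly the argument the paper reuses from the proof of Lemma~\ref{1_only_1_irred_lem}), and read off the left and right eigenvectors from the two block equations. Your extra remarks on the permutation bookkeeping and on non-negativity versus strict positivity of the appended part of $\tilde{\mathbf{u}}_1$ are accurate refinements of what the paper leaves implicit.
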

\begin{proof}
(a) The eigenvalues of $\tilde{\mathbf{A}}_{J \rightarrow 1}$ are the roots of $\det(\tilde{\mathbf{A}}_{J \rightarrow 1}-\mu\mathbf{I})$, which based on \eqref{p_Atild_p_1} simplifies to $\det(\tilde{\mathbf{A}}-\mu\mathbf{I})\mu^{n_z}$. The eigenvalues of irreducible matrix $\tilde{\mathbf{A}}$ together with $n_z$ zeros are thus the eigenvalues of $\tilde{\mathbf{A}}_{J \rightarrow 1}$. Since, for a LETS that is not a simple cycle, $\tilde{\mathbf{A}}$ is a non-negative irreducible matrix, its dominant eigenvalue, $\tilde{r}$, which is positive and simple, is the dominant eigenvalue of $\tilde{\mathbf{A}}_{J \rightarrow 1}$ as well. Based on the proof of Lemma~\ref{1_only_1_irred_lem}, this eigenvalue is larger than one, i.e., $\tilde{r}>1$.

(b) Based on \eqref{p_Atild_p_1}, to find the left eigenvector, $\tilde{\mathbf{w}}_1^T$, corresponding to $\tilde{r}$, we need to solve
\begin{equation}
\tilde{\mathbf{w}}_1^T\left [\begin{array}{c|c}
\mathbf{0}_{n_z\times n_z} &\tilde{\mathbf{A}}'_{n_z\times (m_s-n_z)}\\
\hline
\mathbf{0}_{(m_s-n_z)\times n_z}& \tilde{\mathbf{A}}_{(m_s-n_z)\times (m_s-n_z)}
\end{array} \right ]=\tilde{r}\tilde{\mathbf{w}}_1^T.
\end{equation} 
The solution to the above equation is a vector of the form
\begin{equation}
\tilde{\mathbf{w}}_1^T=\left [\begin{array}{c|c} \mathbf{0}_{1 \times n_z} & \tilde{\boldsymbol{\omega}}_1^T 
\end{array} \right ],
\end{equation}   
where $\tilde{\boldsymbol{\omega}}_1^T$ is the positive left eigenvector of $\tilde{\mathbf{A}}$ corresponding to $\tilde{r}$.

(c) Similar to the proof of part (b), it can be easily shown that the right eigenvector of the FNF of $\tilde{\mathbf{A}}_{J \rightarrow 1}$ is
\begin{equation}
\tilde{\mathbf{u}}_1=\left [\begin{array}{c}  \frac{\tilde{\mathbf{A}}'}{\tilde{r}}\tilde{\boldsymbol{\nu}}_1\\
\hline
\tilde{\boldsymbol{\nu}}_1 
\end{array} \right ].
\end{equation} 
\end{proof}

\begin{ex}
The matrices $\tilde{\mathbf{A}}$ and $\tilde{\mathbf{A}}'$ in the FNF of the transition matrix of the 
$(5,3)$ LETS of the Tanner code for the layered decoder are 
%
\begin{small}
\begin{IEEEeqnarray*}{l}
\tilde{\mathbf{A}}=
\left[
\begin{array}{c c c c c c c }
0&0&0&1&0&0&0\\
0&0&0&0&1&0&1\\
1&0&0&0&0&1&0\\
1&0&0&0&0&1&1\\
0&1&0&0&0&1&0\\
0&0&1&0&1&0&0\\
0&1&0&1&0&0&0
\end{array}
\right],
\tilde{\mathbf{A}}'=
\left[
\begin{array}{c c c c c c c }
0&0&1&0&1&0&0\\
0&0&0&0&0&0&1\\
0&1&0&0&0&0&0\\
1&0&0&0&0&1&0\\
0&1&0&0&0&1&0\\
\end{array}
\right].
\end{IEEEeqnarray*} 
\end{small}
The dominant eigenvalue of $\tilde{\mathbf{A}}$ is $\tilde{r}=2.0136$, which is also equal to $\rho(\tilde{\mathbf{A}}_{3 \rightarrow 1})$. Moreover, the corresponding dominant left and right eigenvectors of the FNF of $\tilde{\mathbf{A}}_{3 \rightarrow 1}$ are
\begin{small}
\begin{IEEEeqnarray*}{l}
\tilde{\mathbf{w}}_1=
\left[
\begin{array}{c  }
		 0\\
         0\\
         0\\
         0\\
         0\\
         \hline
    0.2838\\
    0.4027\\
    0.2525\\
    0.3189\\
    0.4525\\
    0.5085\\
    0.3584
\end{array}
\right],\tilde{\mathbf{u}}_1=
\left[
\begin{array}{c  }
	0.3342\\
    0.2355\\
    0.2096\\
    0.2974\\
    0.3756\\
    \hline
    0.2647\\
    0.4220\\
    0.2974\\
    0.5329\\
    0.3756\\
    0.3342\\
    0.4743
\end{array}
\right].
\end{IEEEeqnarray*} 
\end{small}\\
The lower partition of the left and right eigenvectors are equal to $\tilde{\boldsymbol{\omega}}_1$ and $\tilde{\boldsymbol{\nu}}_1$, the left and right eigenvectors of $\tilde{\mathbf{A}}$, respectively. Moreover, the upper partition of $\tilde{\mathbf{w}}_1$ is extended by $n_z=5$ zeros while the upper part of $\tilde{\mathbf{u}}_1$ is equal to $\frac{\tilde{\mathbf{A}}'}{2.0136}\tilde{\boldsymbol{\nu}}_1$.        
\end{ex}
In the following, to complement the result of Theorem~\ref{main_theo_rowlayered_spectra}, we discuss the case of LETSs that are simple cycles.

\begin{theo}\label{CycleAtildSpectra}
Consider a simple cycle $\mathcal{S}$ of length $2a$, and let ${\bf A}$ and $\tilde{\mathbf{A}}_{J \rightarrow 1}$ denote the non-negative transition matrices of $\mathcal{S}$ for flooding and layered decoders, respectively. Then
\begin{itemize}
\item[(a)] The FNF of $\tilde{\mathbf{A}}_{J \rightarrow 1}$ has two irreducible diagonal blocks.
\item[(b)] $\tilde{r}=1$ is an eigenvalue of $\tilde{\mathbf{A}}_{J \rightarrow 1}$, whose right eigenvector $\tilde{\mathbf{u}}_1$ is the same as the right eigenvector of ${\bf A}$, and whose left eigenvector, $\mathbf{\tilde{w}}_1^T$, is an all-one vector of length $2a-n_z$ appended by $n_z$ zeros (up to a permutation), where $n_z$ is the number of zero columns of $\tilde{\mathbf{A}}_{J \rightarrow 1}$.
\end{itemize}
\end{theo}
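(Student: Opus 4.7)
The plan is to exploit the fact that, for a simple cycle of length $2a$, every variable node and every missatisfied check node has degree exactly two, so each state variable has a unique ``predecessor'' that can be traced around the cycle. First, tracing dependencies around the cycle in both directions shows that the $2a$ state variables split into two disjoint directed cycles of length $a$ in the flooding digraph $D_{f}(\mathcal{S})$, one corresponding to messages traveling clockwise around the length-$2a$ cycle and the other counter-clockwise. Consequently $\mathbf{A}$, after a symmetric permutation, is block-diagonal with two $a \times a$ irreducible cyclic permutation blocks; each flow contributes $1$ as its dominant eigenvalue, and the right eigenspace for eigenvalue $1$ is two-dimensional, spanned by the indicators of the two flows.

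I would then verify that this two-component decomposition is inherited by $D_{l}(\mathcal{S})$: the step-by-step edge-replacement construction keeps the two flows disjoint, because every node of $D_{f}(\mathcal{S})$ has a unique parent lying in the same flow, and each substitution therefore stays within that flow. Two consequences follow immediately, and both are crucial: $\tilde{\mathbf{A}}_{J \rightarrow 1}$ is block-diagonal with one block per flow under the same permutation, and every node of $D_{l}(\mathcal{S})$ still has in-degree exactly one. Applying the reasoning of Lemma~\ref{Strong_connect_lay_dig_lem} separately inside each flow (whose flooding digraph is a strongly connected $a$-cycle) then shows that, after removing the nodes with no outgoing edges and their incident edges, what remains in each flow is strongly connected; and a strongly connected digraph in which every node has in-degree one must be a single directed cycle. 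Each surviving component is thus an irreducible $k_{i} \times k_{i}$ cyclic permutation matrix with $k_{i} \geq 1$, so the FNF of $\tilde{\mathbf{A}}_{J \rightarrow 1}$ contains exactly two irreducible diagonal blocks together with $n_{z} = 2a - k_{1} - k_{2}$ trivial $1 \times 1$ null blocks. This establishes part (a).

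For part (b), each cyclic permutation block has dominant eigenvalue $1$ with the all-ones vector as both its dominant left and right eigenvector, hence $\tilde{r} = \rho(\tilde{\mathbf{A}}_{J \rightarrow 1}) = 1$. Because every row of $\tilde{\mathbf{A}}_{J \rightarrow 1}$ contains exactly one entry equal to $1$, we have $\tilde{\mathbf{A}}_{J \rightarrow 1}\mathbf{1} = \mathbf{1}$; the single-parent row equations further force any right eigenvector for eigenvalue $1$ to be constant on each flow, with the dead-end coordinates pinned to the corresponding flow constant through their unique incoming edge. The same constraints describe the right eigenspace of $\mathbf{A}$, so the two eigenspaces coincide and $\tilde{\mathbf{u}}_{1}$ may be taken equal to $\mathbf{u}_{1}$. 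For the left eigenvector, each null column of $\tilde{\mathbf{A}}_{J \rightarrow 1}$ forces a zero in the corresponding coordinate of $\tilde{\mathbf{w}}_{1}$ (since $\tilde{r} \neq 0$), while on each of the two irreducible diagonal blocks the all-ones vector is the dominant left eigenvector of a cyclic permutation matrix; after the FNF permutation this is precisely the all-ones vector of length $2a - n_{z}$ appended by $n_{z}$ zeros claimed in part (b).

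The main obstacle is the structural argument in Step 2 showing that every node of $D_{l}(\mathcal{S})$ inherits in-degree one and that, together with strong connectivity of the non-dead-end subgraph, this constrains each surviving component to a single directed cycle. The unique-parent property is the linchpin: it persists through the edge-replacement process because replacing an edge $u \to v$ by edges from the parents of $u$ preserves a single incoming edge at $v$ (since $u$ itself had only one parent), and it is what forces each irreducible block to be a cyclic permutation matrix rather than a more general non-negative irreducible matrix. This specialization is exactly what pins the spectral radius down to $1$ and both eigenvectors to the all-ones patterns prescribed in (b).
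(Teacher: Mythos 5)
Your proof is correct, and its skeleton coincides with the paper's: both identify the flooding digraph of a simple cycle as two disjoint directed $a$-cycles (one per direction of flow), observe that the layered edge-replacement construction keeps the two flows separate so that each flow contributes exactly one irreducible diagonal block to the FNF, and then read off the eigenvalue $\tilde{r}=1$ with the stated left and right eigenvectors. Where you genuinely diverge is in how part (b) is justified. The paper gets the right-eigenvector claim by specializing the algebraic identity $\tilde{\mathbf{A}}_{J \rightarrow 1}\mathbf{u}_1=r\mathbf{u}_1+\mathbf{A}'(r-1)\mathbf{u}_1$ at $r=1$, and it imports the all-ones left eigenvectors of the irreducible blocks from~\cite{But_SS}; you instead exploit the unique-parent (in-degree-one) invariant to show that each surviving irreducible block is literally a cyclic permutation matrix, from which $\rho=1$, the all-ones left eigenvectors, and the coincidence of the two right eigenspaces all follow in a self-contained way. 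Your route buys a slightly stronger structural conclusion (cyclic permutation blocks rather than merely irreducible ones) and avoids the external citation, at the cost of having to carry the in-degree-one invariant through the layered construction; the paper's route is shorter because it reuses machinery already established for general LETSs. One small caveat, which applies equally to the paper's own argument: Lemma~\ref{Strong_connect_lay_dig_lem} is stated for LETSs with irreducible flooding transition matrix, which a simple cycle is not, so your remark that you are applying its \emph{reasoning} separately to each (strongly connected) flow is the right way to phrase it, and is exactly what the paper does implicitly.
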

\begin{proof}
(a) The flooding digraph $D_{f}(\mathcal{S})$ of the simple cycle $\mathcal{S}$ consists of two disconnected directed cycles in opposite directions~\cite{But_SS}. 
By the construction of $D_{l}(\mathcal{S})$ from $D_{f}(\mathcal{S})$, each of the directed cycles will be transformed into a new directed cycle,
that is strongly connected, and thus corresponds to an irreducible diagonal block in the FNF of $\tilde{\mathbf{A}}_{J \rightarrow 1}$.
In fact, the FNF of $\tilde{\mathbf{A}}_{J \rightarrow 1}$ has the following structure 

\begin{small}
\begin{IEEEeqnarray*}{lCl"s}  
\left [\begin{array}{c|c}
\mathbf{0} &\tilde{\mathbf{A}}'\\
\hline
\mathbf{0}& \begin{array}{c|c}
\tilde{\mathbf{A}}_{1_{(a-n_{z_1})\times(a-n_{z_1})}}&\mathbf{0}\\
\hline
\mathbf{0}& \tilde{\mathbf{A}}_{2_{(a-n_{z_2})\times(a-n_{z_2})}}
\end{array} 
\end{array} \right ]_{2a\times 2a},
\end{IEEEeqnarray*}
\end{small}
\hspace*{-0.4cm} in which $\tilde{\mathbf{A}}_{1}$ and $\tilde{\mathbf{A}}_{2}$ are the irreducible blocks corresponding to the two directed cycles
in opposite directions.
 
(b) It is known that for a simple cycle $\mathcal{S}$, the dominant eigenvalue of ${\bf A}$ is $r=1$ with multiplicity $2$~\cite{But_SS}.
It can be seen that Equation \eqref{Atild_u_1Projection} is also applicable to simple cycles. Replacing $r=1$ in \eqref{Atild_u_1Projection}
indicates that a right eigenvector $\mathbf{u}_1$ of ${\bf A}$ corresponding to $r=1$ is also a right eigenvector of $\tilde{\mathbf{A}}_{J \rightarrow 1}$
corresponding to $\tilde{r}=1$. In fact, the eigenvalue $\tilde{r}=1$ is the dominant eigenvalue of $\tilde{\mathbf{A}}_{J \rightarrow 1}$ with multiplicity $2$.
(Each of the diagonal blocks $\tilde{\mathbf{A}}_1$ and $\tilde{\mathbf{A}}_2$ has one dominant eigenvalue $\tilde{r}=1$.)
The all-one vectors with sizes $a-n_{z_1}$ and $a-n_{z_2}$ are the left eigenvectors of $\tilde{\mathbf{A}}_{1}$ and $\tilde{\mathbf{A}}_{2}$, respectively~\cite{But_SS}. 
It is then easy to see that an all-one vector with size $2a-n_{z_1}-n_{z_2}$ appended by $n_z=n_{z_1}+n_{z_2}$ zeros is a left eigenvector of $\tilde{\mathbf{A}}_{J \rightarrow 1}$ (up to a permutation).       
\end{proof}

\section{Calculation of the Failure Probability of a LETS and the Impact of Row Block Permutations}\label{Analysis_Opt_Section}
In Subsection~\ref{ProbErrLay}, we calculate the error probability of a LETS using the linear state-space model. The effect 
of row block permutations on the error probability of a LETS is then studied in Subsection~\ref{Eff_of_rowPerm_rtild_subsec}. We end this section by 
proposing a two-step search algorithm in Subsection~\ref{Optim_row_lay_sec} to find a row block permutation that minimizes the error floor.

\subsection{Computing the Failure Probability of LETSs in Layered Decoders} \label{ProbErrLay}
As the error indicator function, we consider the projection of the state vector, given in \eqref{ss_lay_main}, onto the 
non-negative left eigenvector $\tilde{\mathbf{w}}_1^T$ of $\tilde{\mathbf{A}}_{J \rightarrow 1}$ corresponding to the dominant eigenvalue $\tilde{r}$:
\begin{IEEEeqnarray*}{l}
\tilde{\beta}^{(\ell)}\triangleq \mathbf{\tilde{w}}_1^T\tilde{\mathbf{x}}^{(\ell)}=\boldsymbol{\gamma}_{ch}^T\L
\ +\sum_{i'=1}^\ell \big({\overset{\triangleleft}{\boldsymbol{\gamma}}_{ex}^{(i')T}}\L_{ex}^{(i'-1)}+\overset{\triangleright}{\mathbf{ \boldsymbol{\gamma}}}_{ex}^{(i')T}\L_{ex}^{(i')}\big),
\IEEEyesnumber
\end{IEEEeqnarray*}
in which the vectors $\boldsymbol{\gamma}_{ch}^T$, $\overset{\triangleleft}{\boldsymbol{\gamma}}_{ex}^{(i')T}$ and $\overset{\triangleright}{\boldsymbol{\gamma}}_{ex}^{(i')T}$ have lengths equal to $a$, $b$ and $b$, respectively. These vectors are defined by  
\begin{IEEEeqnarray*}{lCl"s}
\label{gamma_ch_T}
\boldsymbol{\gamma}_{ch}^T=\mathbf{\tilde{w}}_1^T\sum_{i=1}^\ell \big(\prod_{j=i+1}^{\underrightarrow{\ell}}\tilde{\mathbf{A}}_{J \rightarrow 1}^{(j)}\big)\tilde{\mathbf{B}}^{(i)}\:,\IEEEyesnumber \\
\label{gamaExleft}
\overset{\triangleleft}{\boldsymbol{\gamma}}_{ex}^{(i')T}=\mathbf{\tilde{w}}_1^T\big(\prod_{j'=i'+1}^{\underrightarrow{\ell}}\tilde{\mathbf{A}}_{J \rightarrow 1}^{(j')}\big)\overset{\triangleleft}{\mathbf{ B}}_{ex}^{(i')} \ \ \ \ \  i'=1,\dots,\ell\:, \IEEEyesnumber \\
\label{gamaExRight}
\overset{\triangleright}{\boldsymbol{\gamma}}_{ex}^{(i')T}=\mathbf{\tilde{w}}_1^T\big(\prod_{j'=i'+1}^{\underrightarrow{\ell}}\tilde{\mathbf{A}}_{J \rightarrow 1}^{(j')}\big)\overset{\triangleright}{\mathbf{ B}}_{ex}^{(i')}\ \ \ \ \  i'=1,\dots,\ell\:. \IEEEyesnumber
\end{IEEEeqnarray*}
The mean and variance of $\tilde{\beta}^{(\ell)}$ are calculated as 
\begin{IEEEeqnarray*}{lCl"s}\label{Mean_beta_layered}
\mathbb{E}[\tilde{\beta}^{(\ell)}]= (2/\sigma_{ch}^2) \sum_{k=1}^a(\boldsymbol{\gamma}_{ch}^T)_k 
+\sum_{i'=1}^{\ell-1} \big({\overset{\triangleleft}{\boldsymbol{\gamma}}_{ex}^{(i'+1)T}}+\overset{\triangleright}{\mathbf{ \boldsymbol{\gamma}}}_{ex}^{(i')T}\big)\mathbf{m}_{ex}^{(i')}+ \overset{\triangleright}{\mathbf{ \boldsymbol{\gamma}}}_{ex}^{(\ell)T}\mathbf{m}_{ex}^{(\ell)},
\IEEEyesnumber
\end{IEEEeqnarray*}
and
\begin{IEEEeqnarray*}{lCl"s}\label{Var_beta_layered}
\mathbb{VAR}[\tilde{\beta}^{(\ell)}]=(4/\sigma_{ch}^2)\sum_{k=1}^a(\boldsymbol{\gamma}_{ch}^T)_k^2 \\
+\sum_{i'=1}^{\ell-1} \big({\overset{\triangleleft}{\boldsymbol{\gamma}}_{ex}^{(i'+1)T}}+\overset{\triangleright}{\mathbf{ \boldsymbol{\gamma}}}_{ex}^{(i')T}\big)\mathbf{\Sigma}_{ex}^{(i')}\big({\overset{\triangleleft}{\boldsymbol{\gamma}}_{ex}^{(i'+1)}}+\overset{\triangleright}{\mathbf{ \boldsymbol{\gamma}}}_{ex}^{(i')}\big)+
\overset{\triangleright}{\mathbf{ \boldsymbol{\gamma}}}_{ex}^{(\ell)T}\mathbf{\Sigma}_{ex}^{(\ell)}\overset{\triangleright}{\mathbf{ \boldsymbol{\gamma}}}_{ex}^{(\ell)},
\IEEEyesnumber
\end{IEEEeqnarray*}
respectively. The symbol $(.)_k$ is used to represent the $k$th element of the vector inside the parentheses. 
The $b\times1$ vector $\mathbf{m}_{ex}^{(i')}$ 
and the $b\times b$ matrix $\mathbf{\Sigma}_{ex}^{(i')}$ are the mean and the covariance matrix of the 
inputs from the unsatisfied CNs at iteration $i'$, respectively. 
Since we assume that external inputs to the LETS are independent, the matrix $\mathbf{\Sigma}_{ex}^{(i')}$ is diagonal. 
The mean and the variance of the inputs from unsatisfied CNs are calculated using DE. 
Also, the missatisfied CN gains are calculated based on DE, as explained in Subsection \ref{gain_lay_section}. These gains are 
utilized in iteration dependent gain matrices involved in Equations \eqref{A_jdowntok_l} to \eqref{Bex_backtriangel_l}. 
Finally, the vector $\boldsymbol{\gamma}_{ch}^T$ in \eqref{gamma_ch_T} as well as the set of vectors $\overset{\triangleleft}{\boldsymbol{\gamma}}_{ex}^{(i')T}$ and $\overset{\triangleright}{\boldsymbol{\gamma}}_{ex}^{(i')T}$ for $i'=1,\dots,\ell$ in \eqref{gamaExleft} and \eqref{gamaExRight}, respectively, are used in Equations \eqref{Mean_beta_layered} and \eqref{Var_beta_layered} to find the mean and the variance of the error indicator function, respectively. 
Finally, assuming that $\tilde{\beta}^{(\ell)}$ is Gaussian, the failure probability of a LETS, $\mathcal{S}$, is obtained by
\begin{equation}\label{P_fail_Layered}
{P_e(\mathcal{S})=\lim_{\ell\to\infty}\text{Pr}\{\tilde{\beta}^{(\ell)}<0\}=\lim_{l\to\infty}Q\bigg(\frac{\mathbb{E}[\tilde{\beta}^{(\ell)}]}{\sqrt{\mathbb{VAR}[\tilde{\beta}^{(\ell)}]}}\bigg).}
\end{equation}
In practice, the error probability, given above, converges rather fast within a few iterations.

Suppose that the set $\{{\cal T}_i\}$ contains all the dominant LETSs of an LDPC code. Let ${\cal E}_i$ denote the event that the decoder is failed 
due to ${\cal T}_i$. To estimate the error floor of the LDPC code, we partition $\{{\cal T}_i\}$ in accordance with the TSLP of TSs.
This implies that within each class of LETSs, we first identify different non-isomorphic structures and then among those LETSs with the same structure, we 
identify different TSLPs. We then accordingly partition the class into different {\em groups}, where all the LETSs within a group have the same TSLP.
Suppose that $\mathcal{S}_i$ is the representative of the $i$th TS group with the size $\Upsilon_i$. (Note that having the same TSLP implies that all the LETSs within a group have the same system matrices.) We then have the following approximation for the error floor of the code:
\begin{equation}
P_{f} \approx P\bigg(\bigcup_i\mathcal{E}_i\bigg) \stackrel{<}{\approx} \sum_i \Upsilon_i P_e(\mathcal{S}_i),
\end{equation}
where the last step follows from the union bound.

We recall that for a LETS $\mathcal{S}$ with an irreducible flooding transition matrix ${\bf A}$, all the elements of the left eigenvector of ${\bf A}$ 
corresponding to the dominant eigenvalue $r$ are positive. For the layered transition matrix $\tilde{\mathbf{A}}_{J \rightarrow 1}$, however,
there are some zeros in the left eigenvector $\tilde{\mathbf{w}}_1^T$ corresponding to the dominant eigenvalue $\tilde{r}$. This implies that the corresponding state variables have no effect on the value of the error indicator function $\tilde{\beta}^{(\ell)}$. In fact, only the state variables corresponding to the 
irreducible part of the transition matrix $\tilde{\mathbf{A}}_{J \rightarrow 1}$ are the ones that determine the value of $\tilde{\beta}^{(\ell)}$, and are the main contributors to the growth of erroneous messages inside $\mathcal{S}$.

\subsection{Effect of Different Row Block Permutations on the Dominant Eigenvalue of $\tilde{\mathbf{A}}_{J \rightarrow 1}$}\label{Eff_of_rowPerm_rtild_subsec}
In the asymptotic regime of $\ell \rightarrow \infty$, the dominant eigenvalue $\tilde{r}$ of the transition matrix $\tilde{\mathbf{A}}_{J \rightarrow 1}$ plays an important role in the failure probability of a LETS in a layered decoder.  The value of $\tilde{r}$ can however change 
depending on the order in which different layers within the LETS are updated. To simplify the discussions, we assume that in the layered decoding, the row blocks of the parity check matrix are updated based on their increased indices, i.e., starting from the first row block all the way down to the row block number $m_b$. We thus associate the different orderings of row updates with different permutations of the row blocks of the parity-check matrix. In this part, we consider the effect  of different row block permutations on $\tilde{r}$. 
 
In general, for a LETS with $J$ layers, there are $J!$ different layer permutations. We denote the set of all possible permutations of $J$ layers by 
$\Pi_J$. A given permutation $\pi \in \Pi_J$ is identified by the sequence $\pi(1),\ldots,\pi(J)$, or more briefly by $\pi_1,\ldots,\pi_J$.
 
\begin{pro}\label{cyclic_perm_lem}
Consider the application of a permutation $\pi$ to the $J$ layers of a LETS $\mathcal{S}$, and denote the corresponding 
transition matrix of $\mathcal{S}$ in the layered decoder by $\tilde{\mathbf{A}}_{\pi_J \rightarrow \pi_1}$. 
The dominant eigenvalue of $\tilde{\mathbf{A}}_{\pi_J \rightarrow \pi_1}$ is then invariant to any cyclic shift of the 
permutation $\pi$. 
\end{pro}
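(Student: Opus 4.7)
The plan is to exploit the well-known fact that for two square matrices $M_1$ and $M_2$ of the same size, the products $M_1 M_2$ and $M_2 M_1$ share the same characteristic polynomial and therefore the same eigenvalues (with multiplicities), hence the same spectral radius. Since all the layer transition matrices $\mathcal{A}_j$ are $m_s \times m_s$, any two ordered products obtained by a single-position cyclic rotation will satisfy this relation, and the full claim will follow by iterating.

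First, I would unfold the definition in \eqref{A_jdowntok_l}, applied to a general permutation $\pi \in \Pi_J$, to write
\begin{equation*}
\tilde{\mathbf{A}}_{\pi_J \rightarrow \pi_1} = \mathcal{A}_{\pi_J}\,\mathcal{A}_{\pi_{J-1}}\cdots \mathcal{A}_{\pi_1}.
\end{equation*}
A left cyclic shift by one position produces the permutation $\pi' = (\pi_2,\pi_3,\ldots,\pi_J,\pi_1)$, so that
\begin{equation*}
\tilde{\mathbf{A}}_{\pi'_J \rightarrow \pi'_1} = \mathcal{A}_{\pi_1}\,\mathcal{A}_{\pi_J}\cdots \mathcal{A}_{\pi_2}.
\end{equation*}
Setting $M_1 \triangleq \mathcal{A}_{\pi_1}$ and $M_2 \triangleq \mathcal{A}_{\pi_J}\mathcal{A}_{\pi_{J-1}}\cdots \mathcal{A}_{\pi_2}$, the original matrix equals $M_2 M_1$ and the shifted matrix equals $M_1 M_2$.

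Next, I would invoke the standard linear-algebra identity that, for square matrices $M_1,M_2$ of the same order, $M_1 M_2$ and $M_2 M_1$ have identical characteristic polynomials, and consequently identical spectra and identical spectral radii. This immediately gives
\begin{equation*}
\rho\bigl(\tilde{\mathbf{A}}_{\pi_J \rightarrow \pi_1}\bigr) = \rho\bigl(\tilde{\mathbf{A}}_{\pi'_J \rightarrow \pi'_1}\bigr),
\end{equation*}
and since the dominant eigenvalue is a real non-negative number equal to the spectral radius (by Theorem~\ref{rho_reducible_nonNeg_theo}), it is preserved under a single cyclic shift. Iterating the argument over the $J$ possible shifts completes the proof.

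I do not anticipate a serious obstacle: the only subtle point is to be careful that the identity ``$M_1 M_2$ and $M_2 M_1$ share eigenvalues'' requires $M_1$ and $M_2$ to be square of the same size, which is guaranteed here because every $\mathcal{A}_j$ is $m_s \times m_s$ by construction in Subsection~\ref{subsec4.1}. It is also worth remarking, for clarity, that this invariance concerns only the spectrum of $\tilde{\mathbf{A}}_{\pi_J \rightarrow \pi_1}$; the eigenvectors (and hence the matrices $\tilde{\mathbf{B}}^{(\ell)}$, $\overset{\triangleleft}{\mathbf{B}}_{ex}^{(\ell)}$, $\overset{\triangleright}{\mathbf{B}}_{ex}^{(\ell)}$ and ultimately the full failure probability in \eqref{P_fail_Layered}) are in general not invariant under a cyclic shift, so the proposition does not claim that cyclically equivalent orderings produce identical error floors.
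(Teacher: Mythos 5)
Your proposal is correct and follows essentially the same route as the paper: the paper's proof propagates the dominant right eigenvector through successive left-multiplications by $\mathcal{A}_{\pi_1}, \mathcal{A}_{\pi_2}, \ldots$, which is exactly the standard eigenvector argument underlying the identity $\rho(M_1 M_2)=\rho(M_2 M_1)$ that you invoke as a black box. Citing the characteristic-polynomial version is, if anything, slightly cleaner, since it preserves the entire spectrum with multiplicities in one step and so makes it immediate that the \emph{dominant} eigenvalue (not merely some eigenvalue) is unchanged.
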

\begin{proof}
Suppose that the right eigenvector of $\tilde{r}$ for $\tilde{\mathbf{A}}_{\pi_J \rightarrow \pi_1}$ is $\tilde{\mathbf{u}}_1$. We then have
\begin{equation*}
\underbrace{\mathcal{A}_{\pi_J } \mathcal{A}_{\pi_{J-1} } \dots \mathcal{A}_{\pi_1 }}_{\tilde{\mathbf{A}}_{\pi_J \rightarrow \pi_1}}\tilde{\mathbf{u}}_1=\tilde{r}\tilde{\mathbf{u}}_1.
\end{equation*}
Multiplying both sides of this equation sequentially with $\mathcal{A}_{\pi_1 }, \mathcal{A}_{\pi_2 }, \ldots, \mathcal{A}_{\pi_{J-1} }$, results in
\begin{small}
\begin{IEEEeqnarray*}{lCl"s}
\underbrace{\mathcal{A}_{\pi_1 } \mathcal{A}_{\pi_J }\dots \mathcal{A}_{\pi_2 }}_{\text{1 cyclic shift}}(\mathcal{A}_{\pi_1 }\tilde{\mathbf{u}}_1)=\tilde{r}(\mathcal{A}_{\pi_1 }\tilde{\mathbf{u}}_1), \\
\underbrace{\mathcal{A}_{\pi_2 } \mathcal{A}_{\pi_1 } \mathcal{A}_{\pi_J } \dots \mathcal{A}_{\pi_3 }}_{\text{2 cyclic shifts}}(\mathcal{A}_{\pi_2 }\mathcal{A}_{\pi_1 } \tilde{\mathbf{u}}_1)=\tilde{r}(\mathcal{A}_{\pi_2 } \mathcal{A}_{\pi_1 } \tilde{\mathbf{u}}_1), \\
\ \ \ \ \ \ \ \ \ \ \ \ \  \vdots\\
\underbrace{\mathcal{A}_{\pi_{J-1} }\dots\mathcal{A}_{\pi_1 }\mathcal{A}_{\pi_J }}_{\text{$J-1$ cyclic shifts}}(\mathcal{A}_{\pi_{J-1}}\dots \mathcal{A}_{\pi_1 }\tilde{\mathbf{u}}_1)=\tilde{r}(\mathcal{A}_{\pi_{J-1}}\dots\mathcal{A}_{\pi_1 }\tilde{\mathbf{u}}_1),
\end{IEEEeqnarray*}
\end{small}
respectively. As can be seen, $\tilde{r}$ is the eigenvalue of all the transition matrices resulted from different cyclic shifts of the permutation $\pi$.
The above equations can be written for all the eigenvalues of $\tilde{\mathbf{A}}_{\pi_J \rightarrow \pi_1}$, and thus, $\tilde{r}$ remains the dominant eigenvalue of all such transition matrices. 
\end{proof}

In the following proposition, whose proof is given in the appendix, we prove that reversing the layer permutation does not change $\tilde{r}$.

\begin{pro}\label{flip_perm_proposition}
Consider the transition matrix $\tilde{\mathbf{A}}_{\pi_J \rightarrow \pi_1}$ of a LETS with $J$ layers corresponding to a layer permutation $\pi$.
Then, the eigenvalues of $\tilde{\mathbf{A}}_{\pi_J \rightarrow \pi_1}$ are invariant under the reversing of $\pi$, i.e., 
the transition matrix $\tilde{\mathbf{A}}_{\pi_1 \rightarrow \pi_J}$ corresponding to the reverse permutation $\pi_J,\ldots,\pi_1$ has the same eigenvalues as
$\tilde{\mathbf{A}}_{\pi_J \rightarrow \pi_1}$.
\end{pro}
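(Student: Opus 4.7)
The plan is to recognize $\tilde{\mathbf{A}}_{\pi_J \to \pi_1}$ as a block Gauss--Seidel iteration matrix for the flooding transition matrix $\mathbf{A}$ and to exploit the arc-reversal symmetry of the LETS subgraph to exchange the forward and backward sweeps. After a block similarity that relabels the layers in the order $\pi_1,\pi_2,\ldots,\pi_J$ (which preserves eigenvalues), I will decompose $\mathbf{A}=\mathbf{L}+\mathbf{U}$ into its strict block-lower and strict block-upper triangular parts; the block diagonal of $\mathbf{A}$ vanishes since each missatisfied CN has its two VN endpoints within the same layer. Checking the action of a single $\mathcal{A}_{\pi_k}$ against the Gauss--Seidel update then shows that $\tilde{\mathbf{A}}_{\pi_J \to \pi_1}=(\mathbf{I}-\mathbf{L})^{-1}\mathbf{U}$ (a forward sweep) while $\tilde{\mathbf{A}}_{\pi_1 \to \pi_J}=(\mathbf{I}-\mathbf{U})^{-1}\mathbf{L}$ (a backward sweep).

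The main ingredient is a permutation matrix $\mathbf{Q}$ with $\mathbf{A}^T=\mathbf{Q}\mathbf{A}\mathbf{Q}^{-1}$ that is block-diagonal with respect to the layer partition. The $\mathbf{Q}$ I will use is the involution that swaps the two state variables meeting at each missatisfied CN, i.e., the arc-reversal map $\sigma$ on the line digraph $D_f(\mathcal{S})$. A routine check against the definition of $D_f$ verifies that $(\vec e,\vec f)$ is a directed edge iff $(\sigma(\vec f),\sigma(\vec e))$ is, which is exactly the matrix identity $\mathbf{A}^T=\mathbf{Q}\mathbf{A}\mathbf{Q}^{-1}$. Block-diagonality of $\mathbf{Q}$ is immediate because reversing an arc preserves its underlying missatisfied CN, and hence its layer, so $\mathbf{Q}\mathbf{L}\mathbf{Q}^{-1}$ remains strictly block-lower and $\mathbf{Q}\mathbf{U}\mathbf{Q}^{-1}$ remains strictly block-upper. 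Matching triangular parts in $\mathbf{A}^T=\mathbf{Q}\mathbf{A}\mathbf{Q}^{-1}$ therefore yields the two key identities
\begin{equation*}
\mathbf{L}^T=\mathbf{Q}\mathbf{U}\mathbf{Q}^{-1},\qquad \mathbf{U}^T=\mathbf{Q}\mathbf{L}\mathbf{Q}^{-1}.
\end{equation*}

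The conclusion then follows by a short chain of spectrum-preserving manipulations (transposition, conjugation by $\mathbf{Q}$, and the cyclic identity that $XY$ and $YX$ share the same characteristic polynomial for square matrices of the same size). Writing ``$\sim$'' for equality of spectra, I will argue
\begin{equation*}
\tilde{\mathbf{A}}_{\pi_J \to \pi_1}\sim \mathbf{U}^T(\mathbf{I}-\mathbf{L}^T)^{-1}\sim (\mathbf{I}-\mathbf{L}^T)^{-1}\mathbf{U}^T=\mathbf{Q}(\mathbf{I}-\mathbf{U})^{-1}\mathbf{L}\,\mathbf{Q}^{-1}\sim \tilde{\mathbf{A}}_{\pi_1 \to \pi_J},
\end{equation*}
where the first $\sim$ is transposition, the second is the cyclic identity applied to $X=\mathbf{U}^T$ and $Y=(\mathbf{I}-\mathbf{L}^T)^{-1}$, the equality substitutes the two key identities together with $(\mathbf{I}-\mathbf{Q}\mathbf{U}\mathbf{Q}^{-1})^{-1}=\mathbf{Q}(\mathbf{I}-\mathbf{U})^{-1}\mathbf{Q}^{-1}$, and the last $\sim$ is conjugation by $\mathbf{Q}$.

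The hardest part of the argument is the middle paragraph: verifying the arc-reversal identity $\mathbf{A}^T=\mathbf{Q}\mathbf{A}\mathbf{Q}^{-1}$ and noting that $\mathbf{Q}$ is block-diagonal with respect to the layer structure. Once these are in hand, the rest is routine linear algebra. This framing also clarifies why the analogous reversal invariance can fail for an arbitrary product of layer updates: it rests essentially on the graph-theoretic pair symmetry that every LETS possesses through the double incidence of each missatisfied check node.
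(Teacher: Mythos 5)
Your proof is correct, and it rests on the same two pillars as the paper's: the triangular splitting $\mathbf{A}=\mathbf{A}_l+\mathbf{A}_u$ induced by the layer ordering, and the arc-reversal involution (your $\mathbf{Q}$, the paper's $\mathbf{P}$ of Lemma~\ref{AteqPAP_Lemma}) that conjugates $\mathbf{A}$ to $\mathbf{A}^T$ while swapping the lower and upper parts. Where you diverge is in how the spectrum of the layered transition matrix is handled: the paper characterizes the eigenvalues implicitly through the quadratic pencil $(\mu\mathbf{A}_l+\mathbf{A}_u)\mathbf{u}=\mu\mathbf{u}$ (Lemmas~\ref{muAlplusAuLemma} and~\ref{AlplusmuAuLem}, proved by unwinding the block rows) and then matches roots of $\det[(\mathbf{A}_l+\mu\mathbf{A}_u)-\mu\mathbf{I}]=0$, whereas you identify $\tilde{\mathbf{A}}_{\pi_J\to\pi_1}=(\mathbf{I}-\mathbf{L})^{-1}\mathbf{U}$ in closed Gauss--Seidel form (the two are equivalent: $(\mu\mathbf{L}+\mathbf{U})\mathbf{u}=\mu\mathbf{u}$ is exactly $(\mathbf{I}-\mathbf{L})^{-1}\mathbf{U}\mathbf{u}=\mu\mathbf{u}$) and finish with transposition, $\operatorname{spec}(XY)=\operatorname{spec}(YX)$, and conjugation. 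Your version buys a cleaner, more standard linear-algebra finish and makes the mechanism (forward vs.\ backward sweep exchanged by arc reversal) more transparent; the paper's version avoids inverses and keeps everything at the level of the explicit block products it has already computed. One small correction: your stated reason for the vanishing block diagonal of $\mathbf{A}$ --- that the two endpoints of a missatisfied CN lie in the same layer --- is not the right one. The diagonal blocks $\mathbf{A}_{i,i}$ vanish because, in a CPM-based QC-LDPC code, a variable node has at most one neighboring check node per row block, so two distinct missatisfied CNs sharing a VN necessarily sit in different layers; the fact itself is asserted in Subsection~\ref{subsec4.1}, so this does not affect the validity of your argument.
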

Based on Propositions~\ref{cyclic_perm_lem} and \ref{flip_perm_proposition}, we have the following result.

\begin{cor}\label{MaxDominantPerm}
For a LETS with $J$ layers ($J\geq 3$), the number of distinct dominant eigenvalues of the layered transition matrix 
corresponding to different permutations of the layers is upper bounded by $\frac{(J-1)!}{2}$. 
\end{cor}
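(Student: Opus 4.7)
The plan is to view the combined effect of Propositions \ref{cyclic_perm_lem} and \ref{flip_perm_proposition} as a group action of the dihedral group $D_J$ of order $2J$ on the set $\Pi_J$ of layer permutations, and then to count the number of orbits. By Proposition \ref{cyclic_perm_lem}, the $J$ cyclic shifts of any permutation $\pi \in \Pi_J$ produce the same dominant eigenvalue $\tilde{r}$, and by Proposition \ref{flip_perm_proposition}, reversing $\pi$ also preserves $\tilde{r}$. These two operations together generate $D_J$, so all permutations in the same $D_J$-orbit yield identical dominant eigenvalues of the associated layered transition matrix.

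The key step is to show that for $J \geq 3$, every $D_J$-orbit on $\Pi_J$ has size exactly $2J$, equivalently, that the only element of $D_J$ fixing some $\pi \in \Pi_J$ is the identity. I would argue this directly from the fact that the entries of a permutation are all distinct. Any non-identity rotation $\rho^{k}$, $0<k<J$, sends position $1$ to position $1+k \bmod J$; fixing $\pi$ would force $\pi_1=\pi_{1+k}$ with $1 \neq 1+k \bmod J$, contradicting distinctness. Similarly, any reflection in $D_J$ with $J\geq 3$, regarded as a permutation of positions, is a non-identity involution and therefore swaps at least one pair $\{i,j\}$ with $i\neq j$; fixing $\pi$ would then require $\pi_i=\pi_j$, again impossible.

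By the orbit–stabilizer theorem, every orbit then has cardinality $|D_J|=2J$, so the number of orbits equals $|\Pi_J|/(2J)=J!/(2J)=(J-1)!/2$. Since the dominant eigenvalue is constant on each orbit, the number of distinct dominant eigenvalues over all $J!$ permutations in $\Pi_J$ is bounded above by $(J-1)!/2$, which is the claimed bound. The main (very mild) obstacle is step two: checking that the cyclic-shift and reversal operations generate a faithful action of $D_J$ on $\Pi_J$, i.e., that no non-identity dihedral element fixes any layer permutation; this is a short combinatorial argument but has to be stated carefully because it is exactly where the hypothesis $J\geq 3$ is used (for $J=2$ the reversal coincides with the non-trivial cyclic shift and the count collapses).
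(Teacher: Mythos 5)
Your proof is correct and takes essentially the same route as the paper, which states the corollary as an immediate consequence of Propositions~\ref{cyclic_perm_lem} and \ref{flip_perm_proposition} without writing out the counting. Your orbit--stabilizer argument, including the check that no non-identity element of the dihedral group fixes a permutation (so every orbit has size exactly $2J$ and the count $J!/(2J)=(J-1)!/2$ goes through), is precisely the justification the paper leaves implicit.
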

Corollary~\ref{MaxDominantPerm} implies that for a LETS with three layers ($J=3$), all the layered transition matrices corresponding to 
different layer permutations have the same dominant eigenvalue.
\begin{ex}
The Tanner $(155,64)$ LDPC code has $3$ row layers, and based on Corollary \ref{MaxDominantPerm}, the dominant eigenvalue of the layered transition matrices of all of its LETSs are invariant under all the possible ($6$) layer permutations. 
\end{ex}

\subsection{Optimizing the Row Layered Schedule}\label{Optim_row_lay_sec}
The contribution of each LETS $\mathcal{S}$ to the error floor of an LDPC code, decoded by a row layered iterative algorithm, depends on the 
distribution of messages entering $\mathcal{S}$ through its missatisfied and unsatisfied CNs as well as the internal structure of $\mathcal{S}$.
In the linear state-space model used in this work, the internal structure is reflected in the system matrices in general, and the transition matrix $\tilde{\mathbf{A}}_{J \rightarrow 1}$, in particular. More specifically, our results show that the dominant eigenvalue $\tilde{r}$ of the transition matrix plays an important role in the growth rate 
of erroneous messages inside the subgraph of $\mathcal{S}$, with larger values of $\tilde{r}$ generally corresponding to larger growth rate.
As we discussed in Subsection~\ref{Eff_of_rowPerm_rtild_subsec}, however, the transition matrix $\tilde{\mathbf{A}}_{J \rightarrow 1}$ and $\tilde{r}$ can change
with layer permutations. 
Moreover, different layer permutations can result in notable change in the TSLP and thus the distributions 
of the external messages entering the TS.
As a result, different row layered schedules can potentially produce error floors that are considerably different. 
This motivates the search for layer permutations which result in low error floors.
In the following, we show that our proposed model can be used not only for the error floor estimation of layered decoders, 
but also as an efficient tool to find row layered schedules with low error floors.

Generally, for a given QC-LDPC code whose base matrix $\mathbf{H}_b$ has $m_b$ rows, there are ${m_b}!$ different row layered schedules corresponding to different row permutations of $\mathbf{H}_b$. The complexity of an exhaustive search among all such schedules based on the exact estimation of the error floor can be prohibitive for relatively large values of $m_b$. The main source of complexity in our model is to obtain the distribution of external messages by DE. 
(Recall that, for each iteration, $2 |E_b|$ distributions are needed to be calculated, where $|E_b|$ is the number of edges in the base graph,
which is equal to the number of nonzero elements of  $\mathbf{H}_b$.) 
To simplify the search among the ${m_b}!$ different row layered schedules, rather than the derivation of such distributions for each schedule, 
we select one schedule, say the one corresponding to the original order of the rows, and then at each iteration $\ell$ and for each layer $j$, 
we derive the average distributions of CN to VN messages and VN to CN messages, denoted by $\stackrel{\leftarrow}{\bar{\psi}_{\ell}^j}$
and $\stackrel{\rightarrow}{\bar{\psi}_{\ell}^j}$, respectively, where the average is taken over all the corresponding distributions within layer $L_j$.
These average distributions are then used to represent all the CN to VN and VN to CN distributions in the $j$th layer of decoding 
regardless of the schedule.
%
%
As an example, for the base graph of Fig. \ref{BaseTannerGraph}, the distribution $\stackrel{\leftarrow}{\bar{\psi}_{\ell}^3}$ is the average of 
$\psi_{\ell}^{[2\leftarrow 3]}$ and $\psi_{\ell}^{[4\leftarrow 3]}$, and $\stackrel{\rightarrow}{\bar{\psi}_{\ell}^3}$ is the average of 
$\psi_{\ell}^{[2\rightarrow 3]}$ and $\psi_{\ell}^{[4\rightarrow 3]}$.


The averaging process just explained will result in a less accurate estimate of the failure rate of a LETS. In fact, by ignoring the effect of scheduling in 
the distribution of external messages and by the averaging, we only observe the effect of the schedule on the internal messages of the LETS.
To find the schedule with the lowest error floor, we then perform the search in two steps. In the first step, we use the above approximation/simplification and search among
all the $m_b!$ schedules to find a few candidates that have lower error floors. In the second step, we examine the candidate schedules by calculating the error floor estimates accurately (by considering the effect of scheduling in the distribution of external messages), and find the one with the lowest error floor.


\section{Simulation Results and Discussions} \label{sec4}\label{Simulation_sec_lay}
In this section, we investigate the accuracy of the proposed linear state-space model in estimating the error floor of row layered SPA through simulations. 
We consider two QC-LDPC codes $\mathcal{C}_1$ and $\mathcal{C}_2$, whose exponent matrices are shown in Figs.~\ref{C1_640par} and \ref{C2_576par}, respectively.
$\mathcal{C}_1$ is a $(640,192)$ variable-regular code with $d_v=5$ and irregular CN degrees~\cite{Ryan1} and $\mathcal{C}_2$ is a $(576,432)$ irregular 
code used in Wimax standard~\cite{wimax_standard}. The lifting degrees for the two codes are $64$ and $24$, respectively.

\begin{figure*}
  \centering  
 \subfloat[Exponent matrix of $\mathfrak{C}_1$: $(640, 192)$ QC-LDPC code with lifting degree $64$.]{\label{C1_640par}
  $ 
  \scriptsize{
\begin{array}{|c| c |c |c |c |c |c| c| c| c|}
\hline
8&-1&26&62&-1&-1&59&19&-1&60\\
\hline
51&9&50&-1&39&-1&4&-1&25&26\\
\hline
-1&32&10&7&56&52&41&55&61&41\\\hline
46&11&-1&43&-1&63&8&51&37&-1\\\hline
47&7&7&50&49&53&-1&12&-1&-1\\\hline
31&-1&-1&-1&31&38&-1&-1&23&48\\\hline
-1&25&21&56&59&30&27&23&27&18\\\hline
\end{array}}
$}\\
\subfloat[Exponent matrix of $\mathfrak{C}_2$: $(576, 432)$ Wimax QC-LDPC code with lifting degree 24.]{\label{C2_576par}
$\scriptsize{
\begin{array}{|c |c |c |c |c |c |c| c| c |c |c| c| c| c |c| c| c| c| c| c| c| c| c| c|}
\hline
-1&20&-1&7&-1&-1&3&6&4&-1&-1&21&7&13&19&23&5&23&0&0&-1&-1&-1&-1\\\hline
10&-1&3&17&8&-1&-1&-1&-1&17&10&2&9&10&8&14&9&6&-1&0&0&-1&-1&-1\\\hline
-1&-1&5&-1&-1&15&9&-1&17&16&-1&9&1&18&11&7&15&1&20&-1&0&0&-1&-1\\\hline
16&0&-1&-1&15&-1&-1&0&12&-1&20&3&23&2&21&9&3&4&-1&-1&-1&0&0&-1\\\hline
-1&13&15&20&-1&6&18&-1&-1&-1&-1&21&19&0&0&18&15&6&-1&-1&-1&-1&0&0\\\hline
19&-1&-1&-1&3&7&-1&8&-1&18&7&17&21&21&6&16&2&22&0&-1&-1&-1&-1&0\\\hline
\end{array}
}
$
}
\caption{QC-LDPC codes used for simulations. The entries of the matrices, that are not equal to $-1$, represent the right circular shift of the identity matrix to create the corresponding block of the parity-check matrix. The $-1$ entries represent zero blocks.}
\label{par_used_lay}
\end{figure*}

The Monte Carlo simulation results as well as estimation results for the layered decoding of $\mathcal{C}_1$ where the order of row layers are the same as that of 
Fig. \ref{C1_640par}, for different saturation levels are presented in Fig. \ref{qc640_lay_fer_difSAT}. 
The most harmful structure of this code in the error floor region is the $(5,5)$ LETS, shown in Fig.~\ref{55qc640}, with multiplicity $64$. 
All the $(5,5)$ LETSs have the same TSLP. As expected, by increasing the saturation level, the error floor is reduced. 
The figure also shows a good match between the linear model estimation results and simulation results. The slight over-estimation of error floor is attributed to
the linear approximation of missatisfied CN operations (see~\cite{Ali-TCOM}, for more information).

We now investigate the effect of row block permutations on the error floor of $\mathcal{C}_1$. $\mathcal{C}_1$ has $7$ row layers, each corresponding to one of the row blocks of the parity-check matrix. We label these row blocks with numbers $1$ to $7$ based on their indices in Fig. \ref{C1_640par}, i.e., the first row block in Fig. \ref{C1_640par} is labeled by $1$, the second by $2$, and so on. We then represent different schedules with different permutations of numbers from $1$ to $7$.
For example, permutation $(1, 2, 3, 4, 5, 6, 7)$ corresponds to a schedule which updates the row layers in the same order as they appear in Fig. \ref{C1_640par}.


\begin{figure}
\centering
\includegraphics[width=3.6in]
{}
\caption{Simulation and estimation results of $\mathcal{C}_1$ for different saturation levels. The maximum number of iterations $I_{max}=30$. }
\label{qc640_lay_fer_difSAT}
\end{figure}
\begin{figure}
\centering
\includegraphics[width=1.8in]{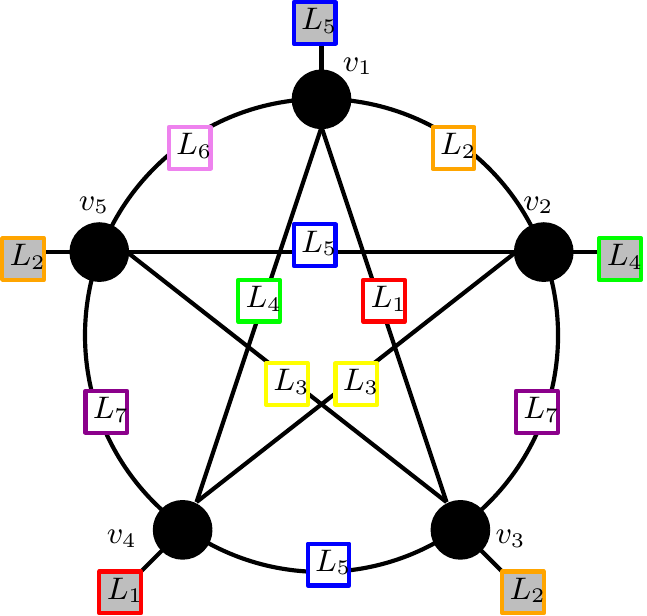}
\caption{The $(5,5)$ LETS structure of $\mathcal{C}_1$ in which the row layers for different CNs are shown.  }
\label{55qc640}
\end{figure}
In order to investigate the effect of different row schedules on the error floor performance of $\mathcal{C}_1$, 
the technique of Section \ref{Optim_row_lay_sec} is used. The total number of possible row permutations for $\mathcal{C}_1$ is $7!=5040$. 
These permutations in general correspond to different system matrices for the $(5,5)$ LETS. 
By using a single application of DE, as discussed in Section \ref{Optim_row_lay_sec}, the failure rate of the $(5,5)$ LETS for different schedules is 
approximated. The results for SNR of $6$ dB and saturation level of $31.75$ are provided in Fig. \ref{qc640_lay_Opt_31_75} for all the schedules. 
As can be seen, the estimation results are partitioned into $8$ groups, separated by vertical dotted lines. 
The transition matrices of the schedules within each group have the same dominant eigenvalue $\tilde{r}$. 
We have also sorted the groups according to the increasing value of $\tilde{r}$. 
The eight different values of $\tilde{r}$ are shown in Fig. \ref{qc640_lay_Opt_31_75}, and range from $12.402$ to $16.125$. Note that based on
Corollary~\ref{MaxDominantPerm}, the upper bound on the number of different $\tilde{r}$ values for different schedules for the case where the LETS has $7$ layers is $360$, in general.
Fig. \ref{qc640_lay_Opt_31_75} shows the trend that increasing $\tilde{r}$, on average, increases the failure probability of the LETS. 
Within each group, however, the variance of the error probabilities is still rather large. This implies that while $\tilde{r}$ plays an important role in the failure probability of a LETS, there are also other factors, including the layering structure of the TS reflected through the system matrices, that affect the harmfulness.
(Note that, in this analysis,  although the distributions of the external messages of the LETS for different layers, obtained through DE, 
remains constant for different schedules, but the assignment of different CNs and VNs of the LETS to different layers will change due to the change of schedule.
As a result, the distributions of messages associated with these nodes will also change in different schedules.)

To find a schedule with low error floor, in the next step, we select a few schedules whose transition matrices have the minimum dominant eigenvalue, $\tilde{r}=12.402$, and 
result in the lowest error rates in the first step. We then apply our estimation technique accurately to find the error floor of these candidate schedules, and select the one that has the lowest error floor.
As a result, we obtain the schedule corresponding to the permutation $(2,3,1,7,4,5,6)$, shown in Fig. \ref{qc640_lay_Opt_31_75} with a full diamond. 
For comparison, we have also selected one of the schedules with the worst error floor, $(4,6,5,7,3,1,2)$, as well as 
the original one, $(1,2,3,4,5,6,7)$. These schedules are specified in Fig. \ref{qc640_lay_Opt_31_75} by a full triangle and a full circle, respectively. 
The simulation and estimation results of these three schedules are presented in Fig. \ref{qc640_lay_diffSched_31_75}. 
The maximum number of iterations and the saturation level are $I_{max}=30$ and $31.75$, respectively. 
For comparison, we have also included the FER of the flooding schedule with maximum number of iterations $I_{max}=60$ in Fig. \ref{qc640_lay_diffSched_31_75}. 
As can be seen, all the estimation results match closely with the corresponding simulations.
Remarkably, there is a substantial difference between the FER of the best and worst layered schedules in the error floor region, 
with the performance of flooding schedule in the middle. This demonstrates the gain that one can obtain in performance by 
properly choosing the updating order of layers in a layered decoder, basically at no cost. It also shows that a layered decoder 
can, in general, have a better or a worse performance compared to its flooding counterpart. By proper permutation of row layers, 
the layered decoder not only has a faster convergence speed compared to a flooding decoder but also can have a better performance.

\begin{figure}
\centering
\includegraphics[width=3.75in]{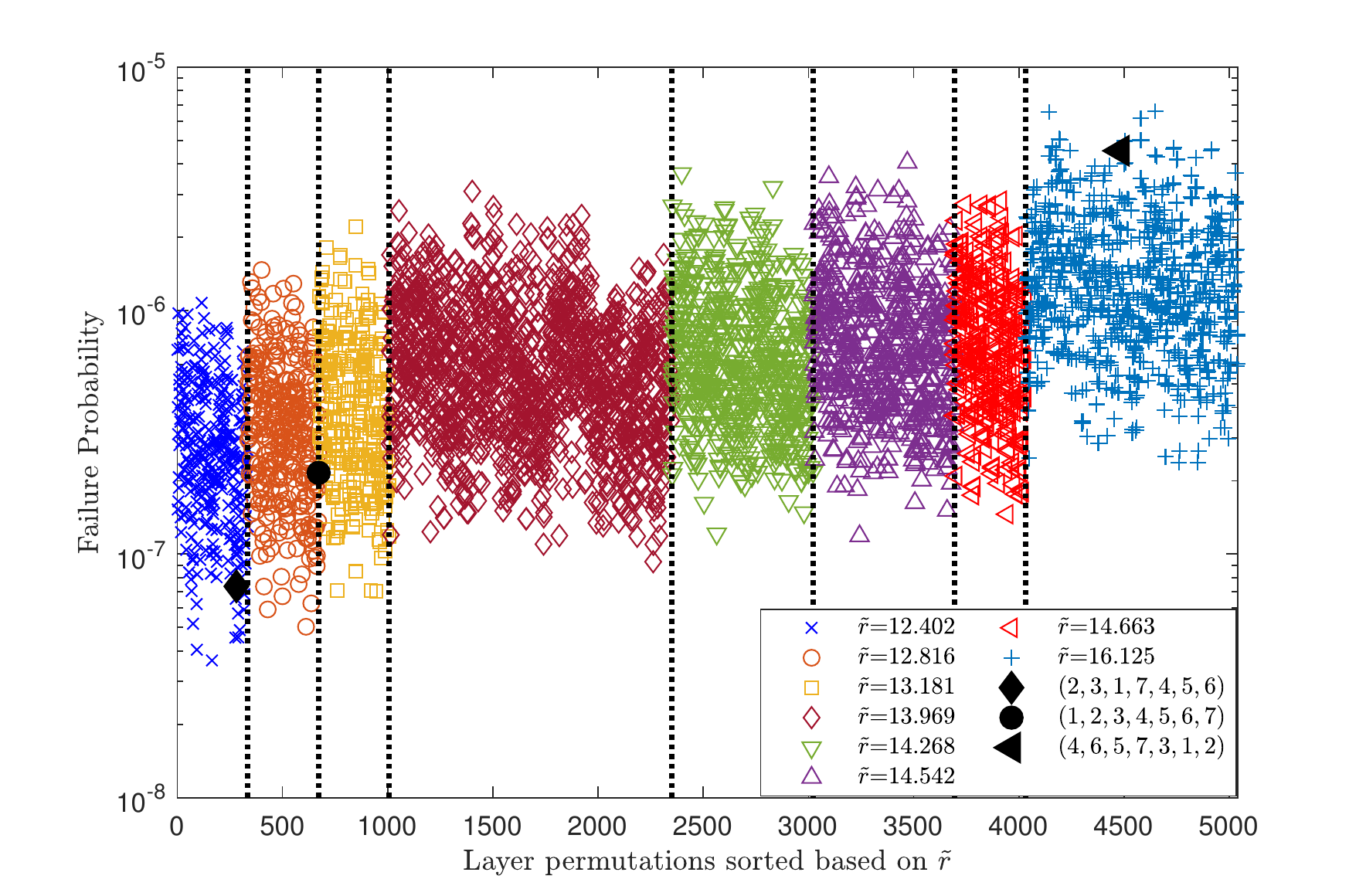}
\caption{The approximate estimate of the failure probability of the $(5, 5)$ LETS of $\mathcal{C}_1$ for various row layered schedules at $E_b/N_0=6$ dB and saturation level $31.75$. 
The schedules are sorted based on $\tilde{r}$. }
\label{qc640_lay_Opt_31_75}
\end{figure}

\begin{figure}
\centering
\includegraphics[width=3.6in]{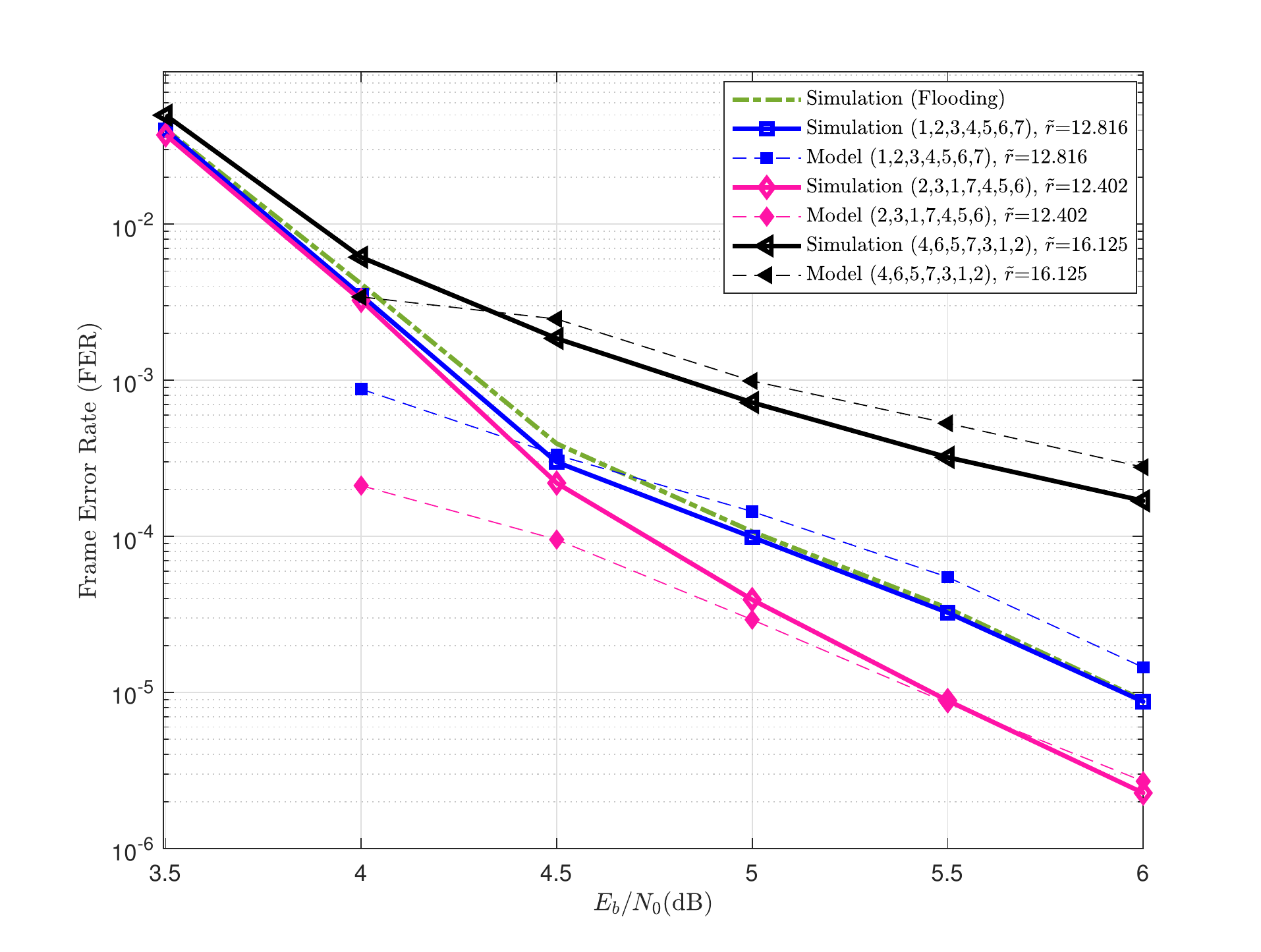}
\caption{The simulation and estimation results of $\mathcal{C}_1$ for different row schedules. (The saturation level is $31.75$, and the maximum number of iterations for layered schedules and the flooding schedule are set to $30$ and $60$, respectively.) }
\label{qc640_lay_diffSched_31_75}
\end{figure}

As the next example, we consider $\mathcal{C}_2$. The multiplicity of different $(a,b)$ LETSs of this code within the range $a \leq 8$ and $b \leq 2$ are listed in Table \ref{tab:wimax576}. These LETSs have been found using the exhaustive search algorithm of~\cite{hashemiireg}.
As can be seen, this code has a variety of LETSs that can potentially contribute to the error floor performance. 
Moreover, unlike the previous example, there are different non-isomorphic structures within each class of TSs. 
For example, there are $8$ different non-isomorphic $(7,1)$ LETS structures in this code. For $2$ out of $8$ structures, the code also contains TSs with two different TSLPs.
This means that there are ten $(7,1)$ LETS groups, each with size $24$, that can have different failure probabilities under a layered decoder. We
denote these groups by $(7,1)_1, \ldots, (7,1)_{10}$, respectively.

\begin{table}[]
\centering
\caption{Multiplicities of $(a,b)$ LETSs of $\mathcal{C}_{2}$ within the range $a \leq 8$ and $b \leq 2$}
\begin{threeparttable}
\setlength{\tabcolsep}{1.5pt}
\renewcommand{\arraystretch}{1.0}
\label{tab:wimax576}
\begin{tabular}{||c|c|c|c|c|c|c|c|c|c|| }
\cline{1-10}
\cline{1-10}
$(a,b)$&$(4,2)$&$(5,2)$&$(6,1)$&$(6,2)$&$(7,1)$&$(7,2)$&$(8,0)$&$(8,1)$&$(8,2)$\\
\cline{1-10}
Multiplicity&$144$&$216$&$48$&$1068$&$240$&$3600$&$48$&$720$&$13464$ \\
\cline{1-10}
\end{tabular}
\end{threeparttable}
\end{table}

To evaluate the effect of row scheduling on the error floor of $\mathcal{C}_2$, the same general technique as the one employed in the previous example is used. 
In this regard, for each of the LETS groups (those with the same structure and TSLP) within $(6,1)$, $(7,1)$, $(8,0)$ and $(8,1)$ classes, 
we estimate the failure probability of various schedules. This is performed based on the approximate average DE method of Subsection~\ref{Optim_row_lay_sec} at $E_b/N_0=6$ dB 
and for a saturation level of $15.75$.  The number of LETS groups within each of the aforementioned classes are $2$, $10$, $2$ and $30$, respectively. Each group has the same size of $24$. To estimate the contribution of each LETS class to the error floor, we first estimate the failure rate of a member of each LETS group within the class, then multiply the result by $24$, and finally add up the results for different groups within the class. These results for different classes are presented in Fig. \ref{wimax576_lay_opt_15_75}, for all the $6! = 720$ possible row schedules.
As can be observed, on average, the most harmful LETS class of this code, for saturation level $15.75$, 
is the $(7,1)$ class. 
In order to observe the overall error probability for different row schedules, the contribution of different classes from Fig. \ref{wimax576_lay_opt_15_75} are added and presented in Fig. \ref{wimax576_lay_opt_15_75_totErr}. 
Interestingly, for $\mathcal{C}_2$, it appears that the original schedule $(1,2,3,4,5,6)$ has one of the lowest error rates.
This schedule is shown in Fig. \ref{wimax576_lay_opt_15_75_totErr} by a full triangle. 
The worst schedule in Fig. \ref{wimax576_lay_opt_15_75_totErr} is $(4,1,5,2,3,6)$, and is identified by a full square.

\begin{figure}
\centering
\includegraphics[width=3.5in]{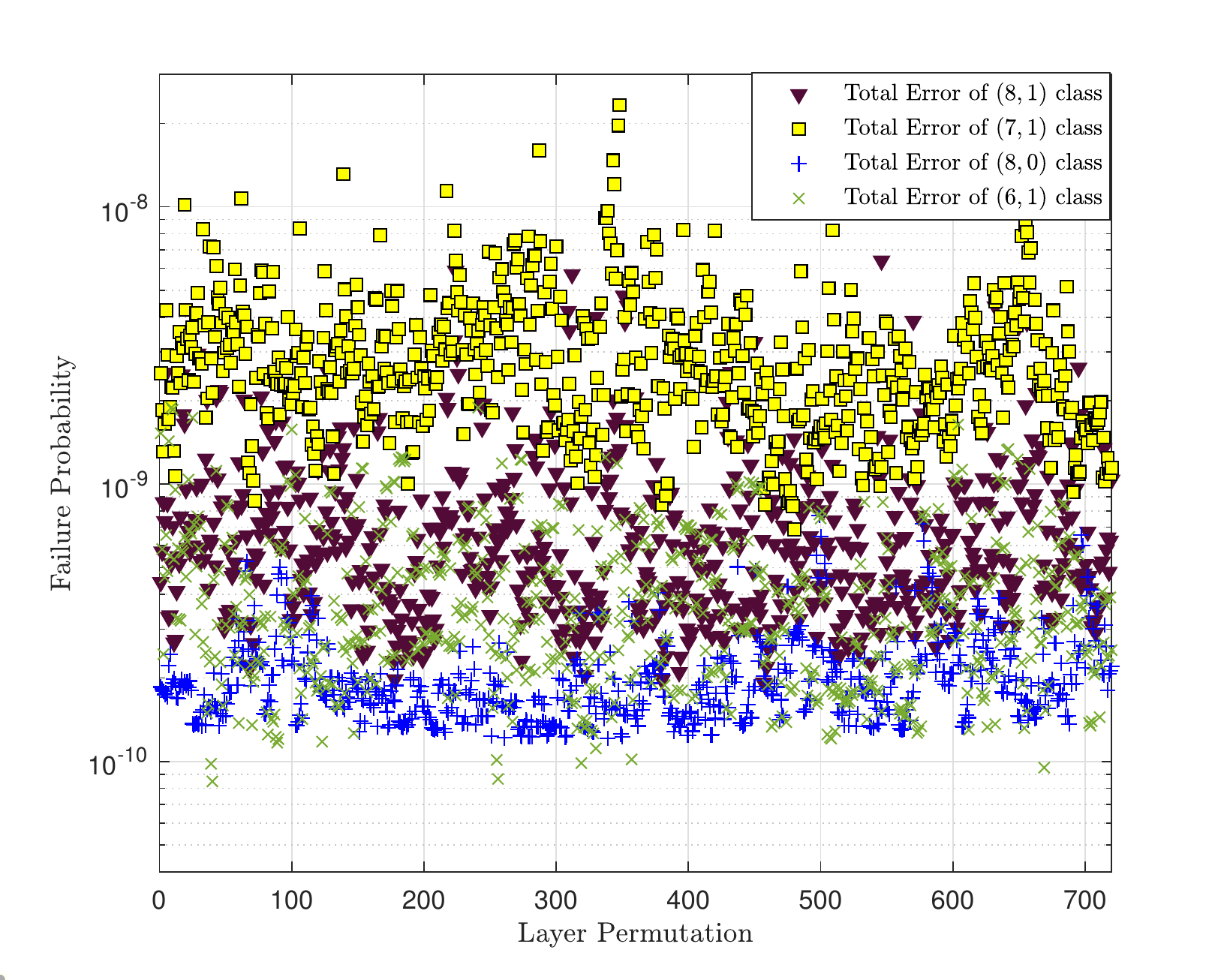}
\caption{The effect of various row layered schedules on the failure probability of different classes of LETSs in $\mathcal{C}_2$ ($E_b/N_0=6$ dB, saturation level $15.75$). }
\label{wimax576_lay_opt_15_75}
\end{figure}
\begin{figure}
\centering
\includegraphics[width=3.5in]{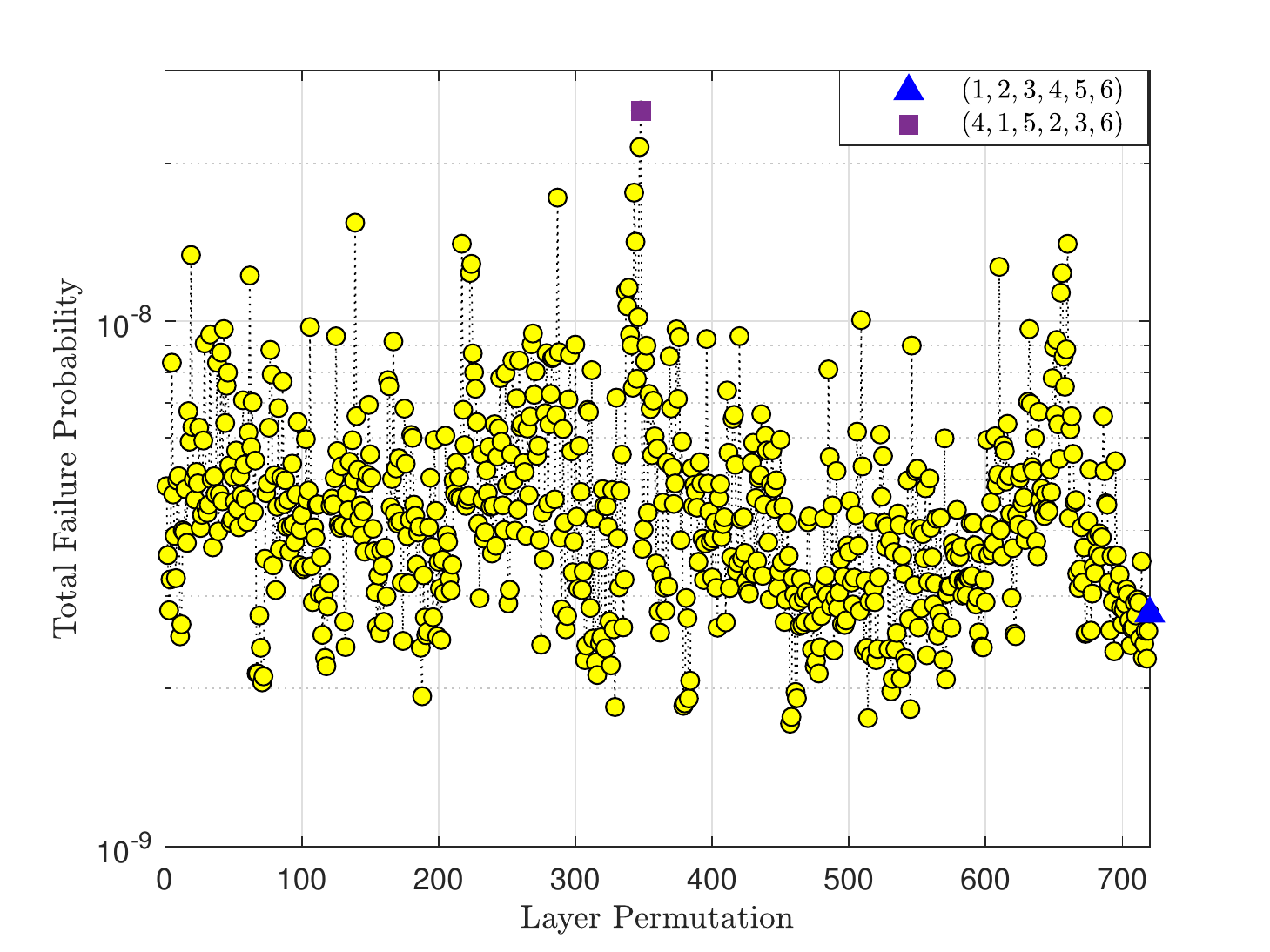}
\caption{The effect of various row layered schedules on the total FER of $\mathcal{C}_2$ ($E_b/N_0=6$ dB, saturation level $15.75$).}
\label{wimax576_lay_opt_15_75_totErr}
\end{figure}

For the two schedules $(1,2,3,4,5,6)$ and $(4,1,5,2,3,6)$, we have estimated the failure probability 
of each of the $10$ LETS groups within the class $(7,1)$, using the exact DE. 
These results along with the total failure probability of the $(7,1)$ class are provided in Fig. \ref{wimax576_lay_All7_1_err}.
We note that the LETSs within first and third groups, i.e., $(7,1)_1$ and $(7,1)_3$, 
are isomorphic (they only differ by their TSLPs). So are the LETSs within $(7,1)_6$ and $(7,1)_7$.
The examination of Fig. \ref{wimax576_lay_All7_1_err} for schedule $(4,1,5,2,3,6)$ shows
that the two groups $(7,1)_6$ and $(7,1)_7$, despite having the same structure, have different error probabilities, due to different TSLPs. Fig. \ref{wimax576_lay_All7_1_err} also demonstrates that, for both schedules, the LETS group $(7,1)_{10}$ is 
the most harmful one. The LETS structure of $(7,1)_{10}$ has $6$ layers and in fact, if one considers all the possible layer permutations, 
this structure has $57$ different $\tilde{r}$ values. This is close to the upper bound of $60$ from Corollary~\ref{MaxDominantPerm}.
The $57$ different $\tilde{r}$ values for the $(7,1)_{10}$ structure are between $6.043$ and $8.216$. For the two schedules
$(1,2,3,4,5,6)$ and $(4,1,5,2,3,6)$, these values are $6.408$ and $8.216$, respectively.
Another observation from Fig. \ref{wimax576_lay_All7_1_err} is that the relative harmfulness of the TSs can change depending on the schedule. 
For example, while the $(7,1)_{8}$ is the second most harmful group of LETSs for schedule $(4,1,5,2,3,6)$, 
for $(1,2,3,4,5,6)$, the second most harmful group is $(7,1)_9$. 

\begin{figure}
\centering
\includegraphics[width=3.9in]{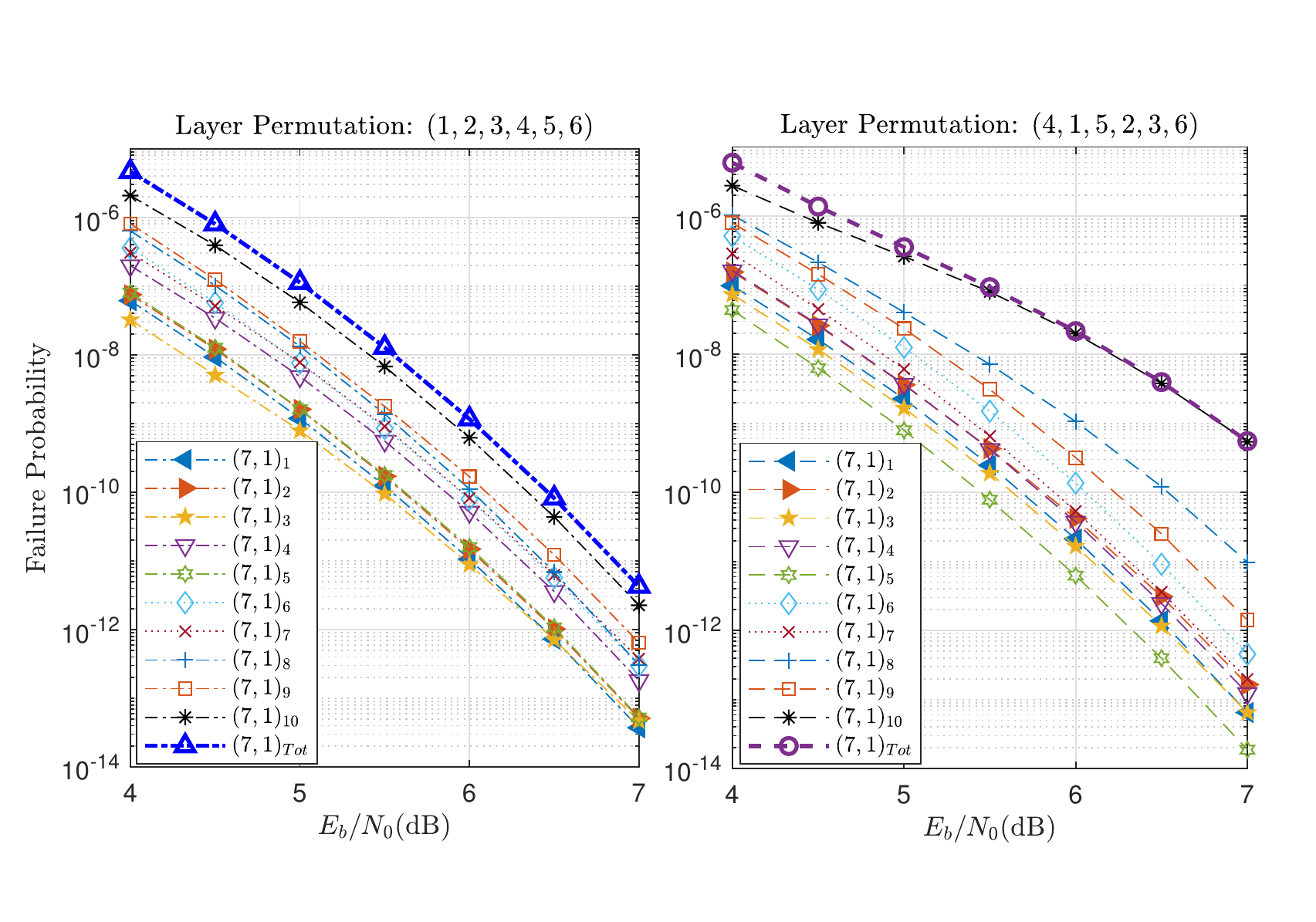}
\caption{The error estimation of the ten $(7,1)$ LETS groups of $\mathcal{C}_2$ for the two schedules 
$(1,2,3,4,5,6)$ and $(4,1,5,2,3,6)$ (saturation level $15.75$).}
\label{wimax576_lay_All7_1_err}
\end{figure}

In order to examine the accuracy of our estimations, the simulation results for the two schedules together with 
the estimation results based on the contributions of $(6,1)$, $(7,1)$, $(8,0)$ and $(8,1)$ LETS classes
are shown in Fig. \ref{wimax576_lay_dif_sched_1575}. As can be seen, for both schedules, there is a good match between simulations and estimations. 
\if0
\begin{table}[]
\centering
\begin{threeparttable}
\caption{Multiplicities of $(a,b)$ LETSs of $\mathcal{C}_{2}$ within the range $a \leq 8$ and $b \leq 2$}
\setlength{\tabcolsep}{2pt}
\renewcommand{\arraystretch}{1.2}
\label{tab:wimax576}
\begin{tabular}{||c|c|| }
\cline{1-2}
\multicolumn{2}{||c||}{$\mathcal{C}_{2}$}\\
\cline{1-2}
$(a,b)$& Multiplicity \\
\cline{1-2}
(4,2)&144\\
\cline{1-2}
(5,2)&216\\
\cline{1-2}
(6,1)&48\\
\cline{1-2}
(6,2)&1068\\
\cline{1-2}
(7,1)&240\\
\cline{1-2}
(7,2)&3600\\
\cline{1-2}
(8,0)&48\\
\cline{1-2}
(8,1)&720\\
\cline{1-2}
(8,2)&13464\\

\cline{1-2}
\end{tabular}
\end{threeparttable}
\end{table}
\fi   


\begin{figure}
\centering
\includegraphics[width=3.5in]{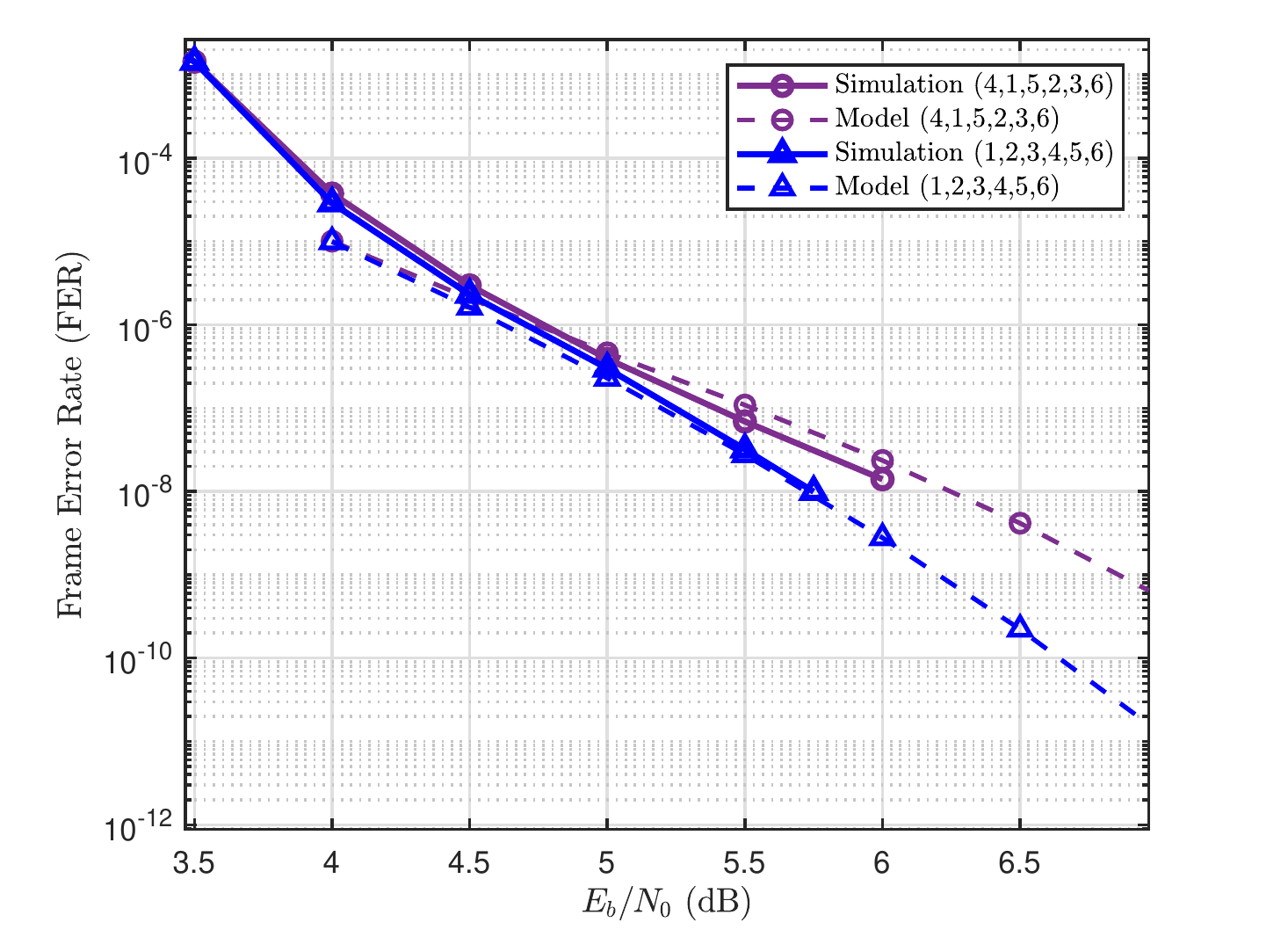}
\caption{Simulation and estimation results of $\mathcal{C}_2$ for the two schedules 
$(1,2,3,4,5,6)$ and $(4,1,5,2,3,6)$ (saturation level $15.75$, $I_{max}=30$).}
\label{wimax576_lay_dif_sched_1575}
\end{figure}

Finally, in our experiments, we observe that the same row layered schedule that minimizes the error floor of SPA also performs well for min-sum algorithm (MSA).
This is explained in Fig.~\ref{qc640_MSA} for $\mathcal{C}_{1}$, where we have used row layered MSA (saturation level 
$31.75$, maximum number of iterations $30$) with the same three schedules as in Fig.~\ref{qc640_lay_diffSched_31_75}. As can be seen, the three schedules have the same relative performance as they had with SPA. In particular, 
the schedule $(2,3,1,7,4,5,6)$ that was optimal for SPA still performs the best with MSA.

\begin{figure}
\centering
\includegraphics[width=3.6in]{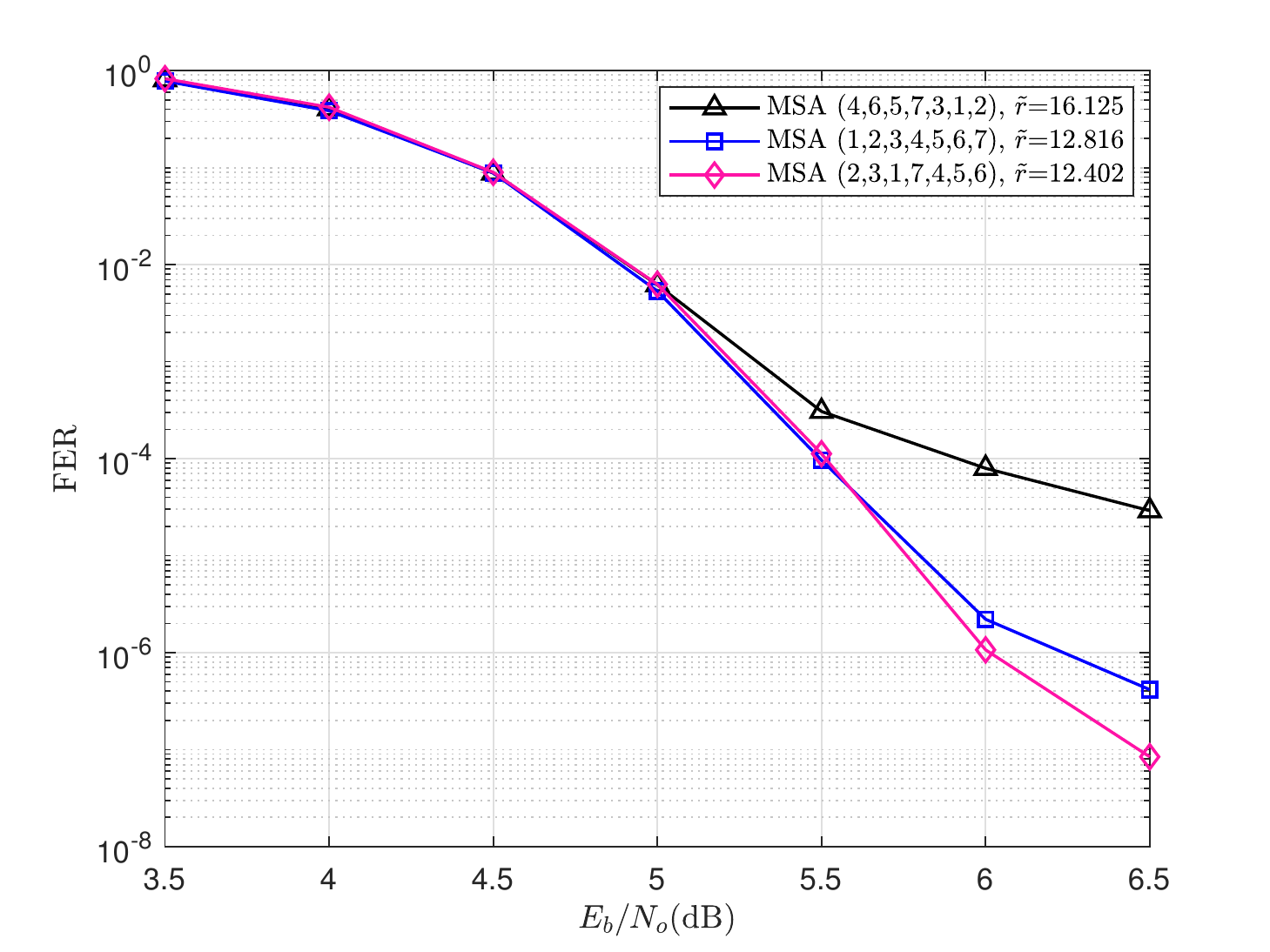}
\caption{Performance of  $\mathcal{C}_1$ under row layered MSA with row schedules similar to those of Fig.~\ref{qc640_lay_diffSched_31_75} (the saturation level and the maximum number of iterations are $31.75$ and  $30$, respectively.) }
\label{qc640_MSA}
\end{figure}

\section{Conclusion}
\label{con}
In this paper, we studied the error floor of QC-LDPC codes under row layered saturating SPA. For this, we developed a 
linear state-space model for LETSs of the code which incorporates the layered nature of scheduling. 
We then studied the system matrices of the model and made connections between these matrices and those corresponding to the linear state-space model of the flooding decoder. In particular, we demonstrated that the spectral radius of the transition matrix of the layered decoder is always larger than that of its flooding counterpart. 

We showed that the proposed model can estimate the failure probability of LETSs, as well as the error floor of the code rather accurately. In particular, we demonstrated that the failure rate of a LETS under layered decoding is not only a function of its topology, but also depends on the location of its constituent CNs in different layers. We called this information, TS layer profile, or TSLP, in brief. As a result, we established that the error floor of the same code under the same saturating SPA can significantly change by modifying the order in which the row layers are updated. We also studied the problem of finding the schedule with the lowest error floor and devised an efficient algorithm to find it. In particular, we demonstrated that the layered decoder, with a well designed schedule, can outperform its flooding counterpart. This adds yet another advantage to the application of layered decoding in practice. The well-known advantages, prior to this result, were the faster convergence and lower hardware complexity.

We note that the linear state-space model presented in this paper can also be applied to column layered decoders~\cite{Ali-Thesis,FB-TCOM}. 
The application however involves non-trivial modifications to derive the model parameters. 

For the codes studied in this work, the vast majority of problematic TSs were LETSs. We however note that the linear state-space model can 
also be applied to ETSs with leaf, if such TSs happen to have a non-negligible contribution to the error floor. 
 
An interesting line of inquiry would be to use the results of this work in the design of QC-LDPC codes with low error floor
under layered decoding.

\section{Appendix: Proof of Proposition~\ref{flip_perm_proposition}}
To prove Proposition~\ref{flip_perm_proposition}, we first need the following lemmas.
\begin{lem}\label{AteqPAP_Lemma}
Consider the linear state-space model of a LETS in which the two state variables corresponding to each missatisfied CN are labeled by consecutive numbers,
and let $\mathbf{A}$ be the corresponding flooding transition matrix. We then have
\begin{itemize}
\item[(a)] If $\mathbf{A}(i,j)=1$, then $\mathbf{A}(j+2\times{\bmod(j,2)}-1,i+2\times{\bmod(i,2)}-1)=1$, where $\bmod(i,2)$ is used to denote the value of $i$ modulo $2$.
\item[(b)] If the nonzero entries of an $m_s\times m_s$ symmetric and unitary permutation matrix $\mathbf{P}$ are defined by
$\mathbf{P}(i,i+2 \times {\bmod(i,2)}-1)=1$, for $i=1,\dots,m_s$, then
\begin{equation}\label{AteqPAP}
\mathbf{A}^T=\mathbf{P}\mathbf{A}\mathbf{P}.
\end{equation}
\end{itemize}
\end{lem}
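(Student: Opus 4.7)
The map $i \mapsto i + 2\bmod(i,2) - 1$ sends odd indices $2k-1$ to $2k$ and even indices $2k$ to $2k-1$, so it is the involution $\sigma$ that swaps each consecutive pair. Under the hypothesized labeling, the pair $\{x_{2k-1}, x_{2k}\}$ consists of the two VN-to-CN messages incident to a single missatisfied check node. I therefore plan to interpret $\mathbf{A}(i,j)=1$ graph-theoretically: if $x_i$ is the message from VN $v$ to missatisfied CN $c$, then $\mathbf{A}(i,j)=1$ means $x_j$ is the message from some VN $u \neq v$ to a missatisfied CN $c' \neq c$, with $c'$ adjacent to $v$. Under this reading, $x_{\sigma(j)}$ is the ``sibling'' state variable at $c'$ (the message from $v$ to $c'$, since $c'$ has only two VN neighbors $u$ and $v$), and $x_{\sigma(i)}$ is the sibling at $c$ (the message from $c$'s other VN neighbor $w \neq v$ to $c$).

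For part (a), I will then verify the three conditions required for $\mathbf{A}(\sigma(j),\sigma(i))=1$ directly from the update rule: the source VN of $x_{\sigma(j)}$ is $v$, which is adjacent to $c$ (the destination CN of $x_{\sigma(i)}$); the two missatisfied CNs $c$ and $c'$ are distinct; and the source VN of $x_{\sigma(i)}$ is $w \neq v$. Each of these follows immediately from the corresponding condition built into the hypothesis $\mathbf{A}(i,j)=1$, so the claim reduces to restating the symmetry of how degree-two check nodes transmit messages in opposite directions.

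For part (b), the permutation matrix $\mathbf{P}$ acts on indices by $\sigma$, so $(\mathbf{P}\mathbf{A}\mathbf{P})(i,j) = \mathbf{A}(\sigma(i), \sigma(j))$ after a straightforward entrywise computation using $\mathbf{P}(i,k)=\delta_{k,\sigma(i)}$. Since $\sigma$ is an involution, applying part (a) to the pair $(\sigma(j),\sigma(i))$ gives the reverse implication as well, so for the $0$/$1$ matrix $\mathbf{A}$ we get the equality $\mathbf{A}(i,j) = \mathbf{A}(\sigma(j), \sigma(i))$ for every $i,j$. Rewriting this as $\mathbf{A}^T(j,i) = (\mathbf{P}\mathbf{A}\mathbf{P})(\sigma(j), i)$ and then simplifying via another application of $\sigma^2=\mathrm{id}$ yields $\mathbf{A}^T = \mathbf{P}\mathbf{A}\mathbf{P}$. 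Symmetry and unitarity of $\mathbf{P}$ ($\mathbf{P}^T=\mathbf{P}$, $\mathbf{P}^2=\mathbf{I}$) are immediate from its definition.

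I do not expect a serious technical obstacle here; the proof is essentially bookkeeping for the graph-theoretic interpretation of $\mathbf{A}$. The main care needed is to respect the lemma's labeling convention (sibling messages have consecutive indices) and to track which endpoint of each edge is the source VN versus the receiving CN when writing the update rule, since a single sign error in the correspondence ``$x_j$ from $u$, sibling $x_{\sigma(j)}$ from $v$'' would derail the check in part (a).
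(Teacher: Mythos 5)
Your proposal is correct and follows essentially the same route as the paper: part (a) is the same "sibling swap" observation that if the message into missatisfied CN $c'$ from $u$ feeds the message out of $v$ toward $c$, then the reverse-direction pair of messages at those two degree-2 CNs exhibits the mirrored dependency, and part (b) is the same conjugation-by-$\mathbf{P}$ bookkeeping. Your write-up is somewhat more explicit than the paper's (in particular, spelling out the entrywise computation $(\mathbf{P}\mathbf{A}\mathbf{P})(i,j)=\mathbf{A}(\sigma(i),\sigma(j))$ and invoking the involution to upgrade the one-way implication of part (a) to an equality of $0/1$ entries), but there is no substantive difference in method.
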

\begin{proof}
(a) 
To each missatisfied CN, there correspond two state variables with even and odd labels, respectively. 
Denoting the state variables of a given missatisfied CN by $x_{2k}$ and $x_{2k-1}$, 
one can see that each of them is a function of at least one other state variable corresponding to another missatisfied CN. 
Suppose that $x_{2k}$ is a function of state variable $x_{2k'}$ (or $x_{2k'-1}$) from another missatisfied CN. 
Then, $x_{2k'-1}$ (or $x_{2k'}$) must be a function of $x_{2k-1}$. This corresponds to the relationship between the entries of $\mathbf{A}$ as described in Part (a) of the lemma. 
(b) The symmetric application of the permutation matrix $\mathbf{P}$ to $\mathbf{A}$, i.e., $\mathbf{P}\mathbf{A}\mathbf{P}^T$, 
permutes the even and odd rows and columns that correspond to each missatisfied CN. Based on the result of Part (a),
this permutation results in the transpose matrix $\mathbf{A}^T$. Also, Equation \eqref{AteqPAP} is derived based on the fact that $\mathbf{P}^T = \mathbf{P}$.  
\end{proof}
\begin{lem}\label{muAlplusAuLemma}
Let $\mu_k$ and $\mathbf{u}_k$ be an eigenvalue and its corresponding right eigenvector of a layered transition matrix $\tilde{\mathbf{A}}_{J \rightarrow 1}$, which is in systematic form. Then,
\begin{equation}\label{muAlplusAu}
(\mu_k\mathbf{A}_l+\mathbf{A}_u)\mathbf{u}_k=\mu_k\mathbf{u}_k,
\end{equation}
where $\mathbf{A}_l$ and $\mathbf{A}_u$ are the lower and upper triangular parts of the corresponding flooding transition matrix $\mathbf{A}=\mathbf{A}_l+\mathbf{A}_u$.
\end{lem}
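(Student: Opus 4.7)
The plan is to unpack the definition of $\tilde{\mathbf{A}}_{J \rightarrow 1}=\mathcal{A}_J\mathcal{A}_{J-1}\cdots\mathcal{A}_1$ block by block, tracking the action of each $\mathcal{A}_j$ on the systematically-partitioned vector $\mathbf{u}_k$, and reading off the required identity. Write $\mathbf{u}_k=\bigl(\mathbf{u}_k^{(1)},\mathbf{u}_k^{(2)},\ldots,\mathbf{u}_k^{(J)}\bigr)^T$ in accordance with the $J$ layer blocks of sizes $n_{L_1},\ldots,n_{L_J}$. By the definition of $\mathcal{A}_j$ in Subsection~\ref{subsec4.1}, multiplication by $\mathcal{A}_j$ leaves every block other than the $j$th unchanged and replaces the $j$th block by $\sum_{i\neq j}\mathbf{A}_{j,i}(\cdot)_i$, where $\mathbf{A}_{j,i}$ are the off-diagonal blocks of the flooding transition matrix $\mathbf{A}$.

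First, I would introduce the intermediate vectors $\mathbf{v}^{(0)}:=\mathbf{u}_k$ and $\mathbf{v}^{(j)}:=\mathcal{A}_j\mathcal{A}_{j-1}\cdots\mathcal{A}_1\mathbf{u}_k$ for $j=1,\ldots,J$, so that $\mathbf{v}^{(J)}=\tilde{\mathbf{A}}_{J\rightarrow 1}\mathbf{u}_k=\mu_k\mathbf{u}_k$. Because $\mathcal{A}_j$ touches only the $j$th block, one has $\mathbf{v}^{(j)}_i=\mathbf{v}^{(i)}_i$ for $i\leq j$ and $\mathbf{v}^{(j)}_i=\mathbf{u}_k^{(i)}$ for $i>j$. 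Comparing the $j$th block of $\mathbf{v}^{(J)}=\mu_k\mathbf{u}_k$ with $\mathbf{v}^{(j)}_j$ (which is not altered in subsequent steps) gives
\begin{equation}
\mu_k\mathbf{u}_k^{(j)}=\mathbf{v}^{(j)}_j=\sum_{i<j}\mathbf{A}_{j,i}\mathbf{v}^{(i)}_i+\sum_{i>j}\mathbf{A}_{j,i}\mathbf{u}_k^{(i)}.
\end{equation}

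Next, I would argue inductively that $\mathbf{v}^{(i)}_i=\mu_k\mathbf{u}_k^{(i)}$ for every $i$: the case $i=1$ follows immediately because $\mathbf{v}^{(1)}_1=\sum_{i>1}\mathbf{A}_{1,i}\mathbf{u}_k^{(i)}$ together with the identity above for $j=1$, and the general case is obtained by applying the identity at level $j=i$ and invoking the inductive hypothesis on lower indices. Substituting $\mathbf{v}^{(i)}_i=\mu_k\mathbf{u}_k^{(i)}$ in the displayed equation yields
\begin{equation}
\mu_k\mathbf{u}_k^{(j)}=\mu_k\sum_{i<j}\mathbf{A}_{j,i}\mathbf{u}_k^{(i)}+\sum_{i>j}\mathbf{A}_{j,i}\mathbf{u}_k^{(i)}
\end{equation}
for every $j\in\{1,\ldots,J\}$.

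Finally, I would recognize that the block-sum $\sum_{i<j}\mathbf{A}_{j,i}\mathbf{u}_k^{(i)}$ is exactly the $j$th block of $\mathbf{A}_l\mathbf{u}_k$, while $\sum_{i>j}\mathbf{A}_{j,i}\mathbf{u}_k^{(i)}$ is the $j$th block of $\mathbf{A}_u\mathbf{u}_k$, because $\mathbf{A}=\mathbf{A}_l+\mathbf{A}_u$ is the decomposition of $\mathbf{A}$ into its strictly lower and strictly upper block-triangular parts (the block diagonal of $\mathbf{A}$ is zero). Stacking the block identities therefore produces
\begin{equation}
\mu_k\mathbf{u}_k=\mu_k\mathbf{A}_l\mathbf{u}_k+\mathbf{A}_u\mathbf{u}_k=(\mu_k\mathbf{A}_l+\mathbf{A}_u)\mathbf{u}_k,
\end{equation}
which is precisely \eqref{muAlplusAu}. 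The only delicate point is bookkeeping: I need to be careful that within the systematic form every $\mathcal{A}_j$ indeed leaves blocks $i\neq j$ intact (which follows from its identity diagonal blocks and zero off-diagonal blocks outside row $j$), so that the substitution $\mathbf{v}^{(i)}_i=\mu_k\mathbf{u}_k^{(i)}$ is justified. No deeper fact beyond the structure of $\mathcal{A}_j$ and the eigenvalue equation is required.
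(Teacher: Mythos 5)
Your proposal is correct and follows essentially the same route as the paper: both unpack the eigenvalue equation block by block in the systematic partition and observe that the already-updated lower-index blocks entering block row $j$ each equal $\mu_k$ times the corresponding block of $\mathbf{u}_k$, which converts $\tilde{\mathbf{A}}_{J\rightarrow 1}\mathbf{u}_k=\mu_k\mathbf{u}_k$ into $(\mu_k\mathbf{A}_l+\mathbf{A}_u)\mathbf{u}_k=\mu_k\mathbf{u}_k$. The only difference is presentational — the paper writes out the case $J=3$ and asserts the general case is similar, whereas your intermediate vectors $\mathbf{v}^{(j)}$ carry out the bookkeeping for arbitrary $J$ (and your separate induction step is actually redundant, since $\mathbf{v}^{(i)}_i=\mathbf{v}^{(J)}_i=\mu_k\mathbf{u}_k^{(i)}$ follows immediately from the fact that later layers do not alter block $i$).
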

\begin{proof}
For simplicity, we prove the result for a LETS with $J=3$ layers. The proof for larger values of $J$ is similar.
By the definition of an eigenvalue and the corresponding eigenvector, we have
\begin{IEEEeqnarray*}{lCl"s}
\tilde{\mathbf{A}}_{3 \rightarrow 1}
\left[
\begin{array}{c}
\mathbf{u}_{k,1}\\
\hline
\mathbf{u}_{k,2}\\
\hline
\mathbf{u}_{k,3}
\end{array}
\right]=\mu_k\left[
\begin{array}{c}
\mathbf{u}_{k,1}\\
\hline
\mathbf{u}_{k,2}\\
\hline
\mathbf{u}_{k,3}
\end{array}
\right],
\end{IEEEeqnarray*}
in which the eigenvector is partitioned according to the three layers. By replacing $\tilde{\mathbf{A}}_{3 \rightarrow 1}$ in the above equation with 
the sub-matrices from \eqref{Atild_3arrow1}, we can write
\begin{IEEEeqnarray*}{lCl"s}
\mathbf{A}_{1,2}\mathbf{u}_{k,2}+\mathbf{A}_{1,3}\mathbf{u}_{k,3}=\mu_k\mathbf{u}_{k,1},\\
\mathbf{A}_{2,1}(\underbrace{\mathbf{A}_{1,2}\mathbf{u}_{k,2}+\mathbf{A}_{1,3}\mathbf{u}_{k,3}}_{\mu_k\mathbf{u}_{k,1}})+\mathbf{A}_{2,3}\mathbf{u}_{k,3}=\mu_k\mathbf{u}_{k,2},\\
\mathbf{A}_{3,1}(\underbrace{\mathbf{A}_{1,2}\mathbf{u}_{k,2}+\mathbf{A}_{1,3}\mathbf{u}_{k,3}}_{\mu_k\mathbf{u}_{k,1}})+\\\mathbf{A}_{3,2}\big(\underbrace{\mathbf{A}_{2,1}(\mathbf{A}_{1,2}\mathbf{u}_{k,2}+\mathbf{A}_{1,3}\mathbf{u}_{k,3})+
\mathbf{A}_{2,3}\mathbf{u}_{k,3}}_{\mu_k\mathbf{u}_{k,2}}\big)=\mu_k\mathbf{u}_{k,3},
\end{IEEEeqnarray*}
or equivalently,
\begin{IEEEeqnarray*}{lCl"s}
\underbrace{\left[
\begin{array}{c|c|c}
\mathbf{0} &\mathbf{A}_{1,2}&\mathbf{A}_{1,3}\\
\hline
\mu_k\mathbf{A}_{2,1}& \mathbf{0}& \mathbf{A}_{2,3}\\
\hline
\mu_k\mathbf{A}_{3,1}& \mu_k \mathbf{A}_{3,2}&\mathbf{0}
\end{array}
\right]}_{(\mu_k\mathbf{A}_l+\mathbf{A}_u)}\left[
\begin{array}{c}
\mathbf{u}_{k,1}\\
\hline
\mathbf{u}_{k,2}\\
\hline
\mathbf{u}_{k,3}
\end{array}
\right]=\mu_k\left[
\begin{array}{c}
\mathbf{u}_{k,1}\\
\hline
\mathbf{u}_{k,2}\\
\hline
\mathbf{u}_{k,3}
\end{array}
\right],
\end{IEEEeqnarray*}
which is the same as \eqref{muAlplusAu} for $J=3$. 
\end{proof}
The proof of the following lemma is similar to that of Lemma~\ref{muAlplusAuLemma}.
\begin{lem}\label{AlplusmuAuLem}
Let $\tilde{\mathbf{A}}_{1 \rightarrow J}=\mathcal{A}_1\mathcal{A}_2 \cdots \mathcal{A}_J$ be the systematic layered transition matrix of a LETS for the row layered decoder in which the order of layers is reversed. Also, let $\mu'_{k}$ and $\mathbf{u'}_{k}$ be an eigenvalue and its corresponding right eigenvector of $\tilde{\mathbf{A}}_{1 \rightarrow J}$. Then
\begin{equation}\label{AlplusmuAu}
(\mathbf{A}_l+\mu'_{k}\mathbf{A}_u)\mathbf{u'}_{k}=\mu'_{k}\mathbf{u'}_{k},
\end{equation}
where $\mathbf{A}_l$ and $\mathbf{A}_u$ are the lower and upper triangular part of the corresponding flooding transition matrix $\mathbf{A}=\mathbf{A}_l+\mathbf{A}_u$.
\end{lem}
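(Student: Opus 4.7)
The plan is to mirror the argument of Lemma~\ref{muAlplusAuLemma}, but with the block substitutions performed from the bottom layer upward instead of from the top layer downward, reflecting the reversal of the layer ordering in $\tilde{\mathbf{A}}_{1\to J}=\mathcal{A}_1\mathcal{A}_2\cdots\mathcal{A}_J$. First, I would expand $\tilde{\mathbf{A}}_{1\to J}$ in systematic block form (analogous to Equation~\eqref{Atild_3arrow1} for the forward ordering). The key structural observation is that the bottommost block row of $\tilde{\mathbf{A}}_{1\to J}$ is identical to the $J$th block row of $\mathcal{A}_J$, namely $[\mathbf{A}_{J,1}\ \cdots\ \mathbf{A}_{J,J-1}\ |\ \mathbf{0}]$, because every $\mathcal{A}_j$ with $j<J$ has the identity in its $J$th diagonal block. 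Higher block rows are built up from nested products $\mathbf{A}_{j,j_1}\mathbf{A}_{j_1,j_2}\cdots$ that mix the off-diagonal blocks of $\mathbf{A}$.

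With $\mathbf{u}'_k$ partitioned as $(\mathbf{u}'^T_{k,1},\ldots,\mathbf{u}'^T_{k,J})^T$, I would write the eigenvalue equation block by block and peel off substitutions from the bottom up. The $J$th block row gives immediately
\[
\sum_{i=1}^{J-1}\mathbf{A}_{J,i}\mathbf{u}'_{k,i}=\mu'_k\mathbf{u}'_{k,J},
\]
which is already in the required form, since row $J$ has no upper blocks. Moving to the $(J-1)$th block row, the extra terms beyond $\mathbf{A}_{J-1,i}\mathbf{u}'_{k,i}$ ($i<J-1$) and $\mathbf{A}_{J-1,J}\mathbf{u}'_{k,J}$ are of the shape $\mathbf{A}_{J-1,J}\mathbf{A}_{J,i}\mathbf{u}'_{k,i}$. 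Factoring $\mathbf{A}_{J-1,J}$ out and applying the identity just obtained from row $J$ collapses these terms, together with $\mathbf{A}_{J-1,J}\mathbf{u}'_{k,J}$, into $\mu'_k\mathbf{A}_{J-1,J}\mathbf{u}'_{k,J}$. That yields the desired row: lower blocks unscaled and the single upper block $\mathbf{A}_{J-1,J}$ scaled by $\mu'_k$.

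To extend this to every block row, I would set up a reverse induction on $j$ from $J$ down to $1$, proving at each step that the $j$th block row of the eigenvalue equation, after absorbing all substitutions from the rows below, reduces to
\[
\sum_{i<j}\mathbf{A}_{j,i}\mathbf{u}'_{k,i}+\mu'_k\sum_{i>j}\mathbf{A}_{j,i}\mathbf{u}'_{k,i}=\mu'_k\mathbf{u}'_{k,j}.
\]
Stacking these $J$ identities reconstitutes $(\mathbf{A}_l+\mu'_k\mathbf{A}_u)\mathbf{u}'_k=\mu'_k\mathbf{u}'_k$ as claimed. The main bookkeeping burden, which I expect to be the only real obstacle, is verifying that every compound product $\mathbf{A}_{j,j_1}\mathbf{A}_{j_1,j_2}\cdots\mathbf{A}_{j_{s-1},j_s}\mathbf{u}'_{k,j_s}$ appearing in the $j$th block row of $\tilde{\mathbf{A}}_{1\to J}$ factors cleanly as $\mathbf{A}_{j,j_1}$ times an expression that the $j_1$th-row substitution has already collapsed to $\mu'_k\mathbf{u}'_{k,j_1}$; once this telescoping structure is made explicit, no spectral or Perron--Frobenius input is needed, and the proof is essentially the block-algebraic computation carried out for $J=3$ in Lemma~\ref{muAlplusAuLemma}, simply run in the opposite direction.
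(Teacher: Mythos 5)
Your proposal is correct and is essentially the paper's own argument: the paper gives no separate proof here, saying only that it is similar to that of Lemma~\ref{muAlplusAuLemma}, and your bottom-up block substitution is exactly that proof run in the reverse direction, with the per-row identity $\sum_{i<j}\mathbf{A}_{j,i}\mathbf{u}'_{k,i}+\mu'_k\sum_{i>j}\mathbf{A}_{j,i}\mathbf{u}'_{k,i}=\mu'_k\mathbf{u}'_{k,j}$ stacking to give \eqref{AlplusmuAu}. One cosmetic slip worth fixing: no direct $\mathbf{A}_{J-1,J}\mathbf{u}'_{k,J}$ term appears in block row $J-1$ (the $J$th block column of $\tilde{\mathbf{A}}_{1 \rightarrow J}$ is zero, mirroring the zero first block column of $\tilde{\mathbf{A}}_{J \rightarrow 1}$); rather, the whole sum $\mathbf{A}_{J-1,J}\sum_{i<J}\mathbf{A}_{J,i}\mathbf{u}'_{k,i}$ collapses to $\mu'_k\mathbf{A}_{J-1,J}\mathbf{u}'_{k,J}$ via the row-$J$ identity, which is what your final formula correctly reflects.
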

%
%
To prove Proposition~\ref{flip_perm_proposition}, without loss of generality, we assume that $\tilde{\mathbf{A}}_{\pi_J \rightarrow \pi_1}$ is in the systematic form. 
Then, according to Lemma \ref{muAlplusAuLemma}, for $\tilde{\mathbf{A}}_{\pi_J \rightarrow \pi_1}$, we have
\begin{equation*}
(\tilde{r}_i\mathbf{A}_l+\mathbf{A}_u)\tilde{\mathbf{u}}_i=\tilde{r}_i\tilde{\mathbf{u}}_i,
\end{equation*}
where $\tilde{r}_i$ and $\tilde{\mathbf{u}}_i$ are an eigenvalue and its corresponding eigenvector of $\tilde{\mathbf{A}}_{\pi_J \rightarrow \pi_1}$, respectively.
Suppose that $\mathbf{P}$ is the permutation matrix defined in Lemma~\ref{AteqPAP_Lemma}. Since $\mathbf{P}\mathbf{P}=\mathbf{I}$, we can write
\begin{equation}
(\tilde{r}_i\mathbf{A}_l\mathbf{P}+\mathbf{A}_u\mathbf{P})\mathbf{P}\tilde{\mathbf{u}}_i=\tilde{r}_i\tilde{\mathbf{u}}_i.
\label{eq234}
\end{equation}
Moreover, for a systematic flooding transition matrix $\mathbf{A}$, one can show that $\mathbf{A}_l\mathbf{P}=\mathbf{P}\mathbf{A}_u^T$
and $\mathbf{A}_u\mathbf{P}=\mathbf{P}\mathbf{A}_l^T$. Using these in (\ref{eq234}), we obtain
\begin{equation}
(\mathbf{P}\mathbf{A}_l^T+\tilde{r}_i\mathbf{P}\mathbf{A}_u^T)\mathbf{P}\tilde{\mathbf{u}}_i=\tilde{r}_i\tilde{\mathbf{u}}_i.
\end{equation} 
Multiplying both sides by $\mathbf{P}$ results in
\begin{equation}
(\mathbf{A}_l+\tilde{r}_i\mathbf{A}_u)^T\mathbf{P}\tilde{\mathbf{u}}_i=\tilde{r}_i\mathbf{P}\tilde{\mathbf{u}}_i.
\label{eq333}
\end{equation}
From~(\ref{eq333}), the eigenvalues $\tilde{r}_i$ are the roots of the following equation:
\begin{equation}
\det[(\mathbf{A}_l+\tilde{r}_i\mathbf{A}_u)^T-\tilde{r}_i\mathbf{I}]=\det[(\mathbf{A}_l+\tilde{r}_i\mathbf{A}_u)-\tilde{r}_i\mathbf{I}] = 0.
\end{equation}
On the other hand, based on Lemma \ref{AlplusmuAuLem}, the roots of the determinant on the right hand side are the eigenvalues of $\tilde{\mathbf{A}}_{\pi_1 \rightarrow \pi_J}$. This completes the proof.

\end{document}